\newcites{appendix}{References}
\theoremstyle{definition}
\newtheorem{example}{Example}
\theoremstyle{plain}
\newtheorem{theorem}{Theorem}
\newtheorem{assumption}{Assumption}
\newtheorem{proposition}{Proposition}
\newtheorem{corollary}{Corollary}
\newtheorem{lemma}{Lemma} 
\theoremstyle{remark}
\newcommand{\Var}{{\operatorname{Var}\, }}
\newcommand{\Iscr}{{\mathcal I}}
\newcommand{\Haj}{{\textup{Haj}}}
\newcommand{\Cov}{{\operatorname{Cov}}}
\newcommand{\basic}{{\textup{basic}}}
\newcommand{\ind}{\perp\!\!\!\!\perp} 
\title{\normalsize Agnostic Characterization of Interference in Randomized Experiments}
\author{\normalsize David Choi}
\date{\normalsize \today}
\begin{document}
\maketitle

\abstract{
We give an approach for characterizing interference by lower bounding the number of units whose outcome depends on selected groups of treated individuals, such as depending on the treatment of others, or others who are at least a certain distance away. The approach is applicable to randomized experiments with binary-valued outcomes. Asymptotically conservative point estimates and one-sided confidence intervals may be constructed with no assumptions beyond the known randomization design, allowing the approach to be used when interference is poorly understood, or when an observed network might only be a crude proxy for the underlying social mechanisms. Point estimates are equal to H\'{a}jek-weighted comparisons of units with differing levels of treatment exposure. Empirically, we find that the width of our interval estimates is competitive with (and often smaller than) those of the EATE, an assumption-lean treatment effect, suggesting that the proposed estimands may be intrinsically easier to estimate than treatment effects.
 }

\section{Introduction}

There is growing interest in the analysis of randomized experiments where participants may affect each other (``interference between units''), for example by transmission of disease, sharing of information, peer influence, or economic competition. Such mechanisms often play fundamental roles in our understanding of social behaviors, and their empirical characterization may be of scientific interest.

We believe that a better understanding of interference may not only be of scientific interest, but also of immense practical importance. One of the most well-known findings in the history of public health--that cholera could be transmitted through shared water utilities--was a particularly impactful characterization of interference \citep[Ch. 3]{freedman2010statistical}. Conversely, \cite{list2022voltage} and \cite{list2023voltage} discuss fields experiments at Uber in which the results were misinterpreted because interference was ignored, leading to decisions that harmed the company.\footnote{\cite[pp. 93-94]{list2022voltage} reports that giving a rider a promotional coupon increased their Uber usage in a field study, but the possibility of negative effects on other riders (by increasing their waiting time for rides) was ignored. The promotion was subsequently expanded to cover an entire city, leading to long wait times followed by substantial reduction in Uber usage. Similarly, \cite{list2023voltage} reports that allowing a driver to receive tips increased their productivity in a field experiment, but the possibility of negative effects on the productivity of other drivers (who competed for the same pool of riders) was again ignored. Tipping was subsequently allowed for all drivers, but no gains in driver productivity were realized.} The 2007 subprime mortgage crisis and subsequent ``Great Recession'' have been attributed in part to questionable modeling of causal dependencies between mortgage defaults, which led to catastrophic underestimation of credit risk \citep{salmon2012formula}.\footnote{Interference between mortgages arises if housing prices can affect--and are affected by--the rate of mortgage default.} 


In this work, we give a general approach for characterizing whether interference-causing mechanisms are widely prevalent in an experimental setting. The approach is applicable to experiments with binary-valued outcomes. It can be used to explore questions such as
\begin{enumerate}  
  \item How many units are affected by (i.e., their outcome depends on) the treatment of others? of distant others?  
  \item How many units, when receiving a certain level of direct treatment, are affected by the treatment of others? For example, if a vaccine grants immunity then the disease status of a unit receiving the placebo (but not necessarily when receiving the vaccine), might depend on the vaccination status of others.
\end{enumerate}
For these types of questions, our approach gives conservative point estimates and one-sided confidence intervals, both of which lower bound the true value. Under reasonable experiment designs, our point estimates will converge to a lower bound on the true value, while the one-sided interval will cover with probability converging to at least the desired level. These consistency and coverage properties are agnostic, in that they require no assumptions or restrictions on the nature of the interference, and rely only on the known experiment design. As a result, our estimates will remain valid lower bounds (but may become less tight) if an observed social network is only a crude proxy for the actual social mechanisms. 

We will demonstrate our approach using an experiment of \cite{cai2015social}, which studied whether crop insurance decisions of farmers could be affected by information received by their peers. Such experiments, in which the primary goal is to demonstrate the presence of interference, may be considered part of the collection of ``shoe leather'' efforts that are necessary to deeply understand a problem setting \citep{freedman2010statistical}.\footnote{``Historically, `shoe-leather epidemiology' is epitomized by intensive, door-to-door canvassing that wears out investigators' shoes'' \citep[p. xiii]{freedman2010statistical}.}  Without such efforts, treatment effects and policy recommendations may lead to questionable conclusions, such as in the abovementioned Uber examples where interference was mischaracterized. On the other hand, a more informed understanding of interference may lead to better assumptions and more credible policy recommendations.
For the experiment of \cite{cai2015social}, our results strengthen their findings, with confidence intervals that are both tighter and require fewer assumptions than those of the expected average treatment effect (EATE), an assumption-lean treatment effect whose study was motivated by interference \citep{savje2017average}.

\section{Related Work}

Our work is closely related to hypothesis testing in the presence of interference, which has been studied in \cite{aronow2012general,athey2017exact,pouget2019testing,basse2019randomization,han2022detecting,puelz2022graph,hoshino2023randomization}, for testing of null effects and hypotheses regarding interference structures, such as the null of no interference between units. In these settings, p-values generally cannot be inverted into confidence intervals for a treatment effect unless additional assumptions are added, and more generally the connection between testing and estimation under interference is not fully understood. Our work complements this literature, as we invert tests of interference specifications to construct one-sided intervals without additional assumptions -- however not for a treatment effect, but rather for the number of units who violate the specification. 

In doing so, we improve on prior work in this area \citep{choi2023estimating,choi2024new} in multiple respects. Most importantly, our confidence intervals are much tigher than those of previous approaches, due to a combination of improved computation, better estimator design, and a novel approach for variance estimation. Our target estimand is a proper parameter, whereas in prior approaches it was defined to be a random quantity whose value depended on the treatment assignment. Finally, our approach generalizes to answer a wide variety of questions regarding the characterization of interference.

The estimation of treatment effects in the presence of interference is an active area of research \citep{rosenbaum2007interference, savje2017average,li2019randomization,hu2021average, li2022random,leung2022rate,ogburn2022causal,harshaw2022design,zhao2022reconciling, park2023assumption, gao2023causal}. In many of these works, the target parameter describes the effect of a marginal departure from the experiment design. For such estimands, consistent estimates can generally be constructed in assumption-lean regimes, although additional assumptions are needed for coverage of interval estimates. \cite{harshaw2021optimized} focuses on the problem of variance estimation. For settings with interference, \cite{kandiros2024conflict} proposes a novel, more-tailored approach to experiment design. Estimation of total effects (as opposed to marginal effects) assuming low order interactions between unit treatments is considered in \cite{cortez2022staggered} under staggered rollout, and in \cite{cortez2023exploiting} for known exposure neighborhoods. \cite{farias2022markovian}, \cite{dhaouadi2023price}, and \cite{johari2024does} consider a particular form of interference (``Markovian'') arising in e-commerce settings. \cite{banerjee2024less} gives an empirical example of information sharing in which interference may be more subtle and complex than is captured by models of local exposure such as infection.

\section{Methodology (basic case)} \label{sec: basic}

In this section, we present a basic example, in which the estimand is the number of units who are affected by (i.e., their outcome depends on) the assignment of treatment. This quantity, which we denote by $\tau^\basic$, includes units who are affected by their own treatment as well as those who are affected by the treatment of others. 

Under certain experiment designs, $\tau^\basic$ may be used to characterize interference. In bipartite network designs, some units are not directly treated, and $\tau^\basic$ may be used to characterize whether these units are affected by the treatment of others \citep{zigler2020bipartite}. In experiments involving the assignment of units to groups, or the formation of network ties, $\tau^\basic$ characterizes whether units are affected by their randomly assigned neighbors or group members \citep{basse2024randomization,li2019randomization}. 

The remaining organization of the paper is as follows. Sections \ref{sec: tau any preliminaries} through \ref{sec: computation} respectively present the generative model, the estimand $\tau^\basic$, point estimation, usage of network information, H\'{a}jek normalization, inference, and computation. In Section \ref{sec: generalization}, the approach is then generalized to cover a wider range of target parameters, such as the questions presented in the introduction. Section \ref{sec: theory} considers asymptotic behavior, and Section \ref{sec: examples} gives a simulation study and data example to illustrate usage. All propositions and theorems are proven in the supplemental materials.

\subsection{Preliminaries and Oveview} \label{sec: tau any preliminaries}

We consider an experiment on $N$ units, with $X=(X_1,\ldots,X_N)$ denoting the binary-valued treatment of each unit, and $Y=(Y_1,\ldots,Y_N)$ denoting their binary outcomes. We allow for arbitrary interference, so that the outcome $Y_i$ of unit $i$ may potentially depend on all $N$ treatments, 
\begin{align}\label{eq: f}
Y_i = f_i(X_1,\ldots,X_N), \qquad i \in [N]
\end{align}
where the potential outcome mappings $\{f_i\}_{i=1}^N$ may be arbitrary and unknown.

Given an observed network, we will use ${\mathcal N}_i$ to denote the set consisting of unit $i$ and $i$'s ``close'' associates, and ${\mathcal N}_i^{(2)}$ to denote ``non-close'' associates of unit $i$, as defined by the researcher. For example, ${\mathcal N}_i$ might denote unit $i$ and its direct neighbors on the network, while ${\mathcal N}_i^{(2)}$ might denote units who are not in $\mathcal{N}_i$ but have overlapping sets of neighbors with unit $i$. We will let $W_i$ denote the thresholded number of treated close associates for unit $i$,
\begin{align} \label{eq: W}
 W_i = 1\{\textup{if at least $t_i$ units in ${\mathcal N}_i$, excluding unit $i$, are treated}\},
 \end{align}
and $W_i^{(2)}$ will denote the thresholded number of treated units in $\mathcal{N}_i^{(2)}$,
\begin{align} \label{eq: W2}
 W_i^{(2)} = 1\{ \textup{ if at least $t_i^{(2)}$ units in ${\mathcal N}_i^{(2)}$ are treated} \}
 \end{align}
for some choice of thresholds $t_i$ and $t_i^{(2)}$.

Given integer $n > 0$, we use $[n]$ to denote the set $\{1,\ldots, n\}$. Given a set $S \subseteq [N]$, we let $|S|$ denote its cardinality, and let $X_S$ denote the subvector of $X$ corresponding to the members of $S$.

\paragraph{Overview of Approach}

To estimate lower bounds on our target estimand $\tau^\basic$, we will take the following approach:

\begin{enumerate}
  \item Define the estimand $\tau^\basic$ in terms of an particular subset $\Iscr \subseteq [N]$, whose identity is unknown
  
  \item Propose idealized estimators $\hat{\tau}_1$ and $\hat{\tau}_2$ which will have good statistical properties, such as consistency and asymptotic normality, but require knowledge of $\Iscr$
  
  \item Show that $\Delta$, the difference in average outcomes between treated and control (which can be computed without knowledge of $\Iscr$) converges to a lower bound for $\max(\hat{\tau}_1,\hat{\tau}_2)$, so that if $\tau_1$ and $\tau_2$ are both consistent for $\tau^\basic$, then $\Delta$ is an asymptotic lower bound.
  
  \item Lower bound the confidence intervals induced by $\hat{\tau}_1$ and $\hat{\tau}_2$ and their variance estimates, by solving an integer program over the unknown subset $\Iscr$.
\end{enumerate}

\subsection{Estimand} \label{sec: tau any estimand}

Let $\tau^\basic$ denote the number of units who are affected by treatment, including both their own treatment or the treatment of others. We may write $\tau^\basic$ as the number of units whose outcome $f_i(X)$ is not constant over the possible treatment assignments $X \in \{0,1\}^N$.
\[ \tau^\basic = \sum_{i=1}^N 1\{ f_i(X) \textup{ is not constant in } X\} \]
Define $\Iscr \subset [N]$ as the subset of units who are unaffected by treatment and have constant outcome mappings, 
\[ \Iscr = \{i: f_i(X) \textup{ is constant in }X\}\]
so that $\tau^\basic = N - |\Iscr|$.

\subsection{Point estimation}
Let $H_i$ denote the indicator of whether unit $i$'s treatment and outcome have the same binary value,
\begin{align}\label{eq: H_i}
H_i = 1\{(X_i,Y_i) = (1,1) \textup{ or } (0,0)\},
\end{align}
and let $\hat{\tau}_1$ and $\hat{\tau}_2$ denote sampling-based estimators of $\tau^\basic = N - |\Iscr|$, in which the unknown cardinality of $\Iscr$ is unbiasedly estimated by a probability-weighted (i.e., Horvitz-Thompson) sample:
\begin{align}\label{eq: tau hat}
\hat{\tau}_1 = N - \sum_{i \in \Iscr} \frac{1\{H_i=1\}}{P(H_i=1)} \qquad \textup{ and } \qquad \hat{\tau}_2 = N - \sum_{i \in \Iscr} \frac{1\{H_i =0 \}}{P(H_i=0)}
\end{align}
Because $\hat{\tau}_1$ and $\hat{\tau}_2$ involve only units in $\Iscr$ whose outcomes are unaffected by treatment, they often will exhibit simple statistical behavior, even if strong interference exists between units who are not in $\Iscr$. For example, if treatment is assigned by independent Bernoulli randomization, then $\hat{\tau}_1$ and $\hat{\tau}_2$ are sums of independent variables.\footnote{If $i\in \Iscr$, then $Y_i$ equals a constant $c_i$ for all $X \in \{0,1\}$. Thus it holds for example that $P(H_i=1,H_j=1) = P(X_i=c_i, X_j=c_j) = P(X_i=c_i)P(X_j=c_j) = P(H_i=1)P(H_j=1)$, if $i,j \in \Iscr$ and $X_i \ind X_j$.}
 Similarly, if the dependencies between the unit treatments are bounded, then $\hat{\tau}_1$ and $\hat{\tau}_2$ are sums of variables whose dependencies will be similarly bounded. For this reason, under a variety of designs we may expect the values of $\hat{\tau}_1$ and $\hat{\tau}_2$, while unknown due to $\Iscr$ being unknown, to concentrate at their expectation (which equals $\tau^\basic$) and to be asymptotically normal.

Our motivation for $\hat{\tau}_1$ and $\hat{\tau}_2$ is the following: under mild conditions on the experiment design, the maximum of $\hat{\tau}_1$ and $\hat{\tau}_2$ is asymptotically lower bounded by the magnitude of the propensity-weighted difference in outcomes between treated and control, given by
\[ \Delta = \sum_{i=1}^N \left(\frac{X_i}{P(X_i=1)} - \frac{1-X_i}{P(X_i=0)}\right) Y_i,  \]
as stated by Proposition \ref{th: diff in means} below:

\begin{proposition}\label{th: diff in means}
Let the total weights of the treated and control converge to their expectations, so that
\begin{align} \label{eq: condition X}
\sum_{i=1}^N \frac{X_i}{P(X_i=1)} = N + O_P(N^{1/2}) \quad \textup{and} \quad \sum_{i=1}^N \frac{1-X_i}{P(X_i=0)} = N + O_P(N^{1/2})
\end{align}
Then it holds that
\begin{equation} \label{eq: diff in means}
\left| \sum_{i=1}^N \left(\frac{X_i}{P(X_i=1)} - \frac{1-X_i}{P(X_i=0)}\right) Y_i \right| \leq \max(\hat{\tau}_1, \hat{\tau}_2) + O_P(N^{1/2})
\end{equation}
\end{proposition}

As a corollary, it immediately follows that if $\hat{\tau}_1$ and $\hat{\tau}_2$ are consistent for $\tau^\basic$, then $\max(\hat{\tau}_1,\hat{\tau}_2)$ will be consistent as well, and $|\Delta|$ will be an asymptotic lower bound for $\tau^\basic$. 

\begin{corollary}\label{cor: diff in means}
Let the conditions of Theorem \ref{th: diff in means} hold. If $\hat{\tau}_1$ and $\hat{\tau}_2$ are within $O_P(N^{1/2})$ of $\tau^\basic$, then
\begin{equation} \label{eq: diff in means cor}
\left| \sum_{i=1}^N \left(\frac{X_i}{P(X_i=1)} - \frac{1-X_i}{P(X_i=0)}\right) Y_i \right| \leq \tau^\basic + O_P(N^{1/2})
\end{equation}
\end{corollary}

We note that $\Delta$, which involves units outside of $\Iscr$, need not concentrate in order for \eqref{eq: diff in means} or \eqref{eq: diff in means cor} to hold. In contrast to $\hat{\tau}_1$ and $\hat{\tau}_2$, which depend only on units in $\Iscr$, it may well be that $\Delta$ becomes less stable as the number of participants grows, but nonetheless is asymptotically a valid lower bound on the estimand.

\paragraph{Discussion}
To give intuition for Theorem \ref{th: diff in means}, consider an experiment in which $100$ units are randomly divided into equal-sized groups for treatment and control, with average outcomes of $0.4$ (20/50) for the treatment group and $0.8$ (40/50) for the control. How large can $|\Iscr|$ be? Units in $\Iscr$ have constant outcomes and are equally likely to be assigned to treatment or control. It follows that up to sampling variation, such units should be evenly distributed between treatment and control for each outcome level. Since there are 20 treated units and 40 control units with outcome level 1, it trivially holds that at most all 20 of the treated units with outcome level 1 could belong to $\Iscr$, and thus at most roughly 20 out of the 40 control units with outcome level 1 could belong to $\Iscr$ as well. (Otherwise, units in $\Iscr$ with outcome level 1 would not be evenly distributed between treatment and control.) Likewise, since only 10 control units have outcome level 0, at most all 10 could belong to $\Iscr$, implying that at most roughly 10 out of the 30 treated units with outcome level 0 could belong to $\Iscr$ as well. This yields an upper bound of $(20 + 10)\times 2 = 60$ units who are unaffected by treatment, or equivalently at least 40 units who are affected by treatment. This lower bound of 40 equals the difference in outcomes of 0.4 between treatment and control, and also equals the maximum possible value of $\hat{\tau}_1$ over all hypotheses for the unknown $\Iscr$. The limiting quantities that were used to construct this bound, namely the 20 treated units with outcome equal to 1, and the 10 control units with outcome equal to 0, correspond to the units $\{i: H_i=1\}$.

When might this lower bound on $\tau$ be tight? Can $\tau$ be upper bounded? To build intuition, we continue the example by presenting a generative model under which the lower bound is tight. In this model, the contrast variable $X$ appearing in \eqref{eq: diff in means} is a perfect predictor of the outcome for units not in $\mathcal{I}$. We then describe two models under which the lower bound is loose, and which suggest that non-vacuous upper bounds on $\tau$ cannot be constructed without additional assumptions.

\begin{example} \label{ex: nonnegative effects} (SUTVA with nonnegative effects) Of the 100 units in the experiment, suppose that units 1-60 are unaffected by treatment, and units 61-100 are affected by their own treatment, according to
\[ f_i(X) = \begin{cases} 1 & 1 \leq i \leq 40 \\ 0 & 41 \leq i \leq 60 \\ X_i & 61 \leq i \leq 100 \end{cases}
\]
Under this model, $40$ units are affected by treatment, and the expected outcomes for treated and control can be seen to be $0.8$ and $0.4$. Hence the diff-in-means will equal the fraction of units affected by treatment (up to sampling variation). 
\end{example}

\begin{example}(SUTVA with cancellation of effects) Suppose that units 1-20 are unaffected by treatment, while units 21-40 are affected negatively by their own treatment, and units 41-100 are affected positively, 
\[ f_i(X) = \begin{cases} 1 & 1 \leq i \leq 20 \\ 1-X_i & 21 \leq i \leq 40 \\ X_i & 41 \leq i \leq 100 \end{cases}
\]
While the expected outcomes match the previous example, equaling $0.8$ and $0.4$ for treated and control, the diff-in-means is a loose lower bound on the fraction of units affected by treatment, which is 0.8.
\end{example}

\begin{example} (Interference) Let the outcome models be given by
  \begin{align*} 
  f_i(X) & = \begin{cases} \alpha_{i1} X_i + \alpha_{i0} (1-X_i) & \textup{ if $\sum_i X_i$ is even} \\
  \beta_{i1} X_i + \beta_{i0} (1-X_i) & \textup{ if $\sum_i X_i$ is odd} \end{cases}
  \end{align*}
  where $(\alpha_{i1},\alpha_{i0})$ match the potential outcomes of the first example, equaling $(1,1)$ for units 1-40, (0,0) for units 41-60, and $(1,0)$ for units 61-100, while $(\beta_{i1},\beta_{i0})$ takes values $(1,0)$ and $(0,1)$ in some arbitrary proportion. If half of the units are randomly assigned to treatment, then $\sum_i X_i$ is even and the outcomes will be indistinguishable from those of the first example; however, unlike the first example, all 100 units are affected by treatment. In this example $\Delta$ does not concentrate at its expectation, but is nonetheless a valid lower bound for $\tau^\basic$.
\end{example}

\subsection{Using Network Information} Given an elicited social network, we may consider contrasts other than $\Delta$, based on a new variable $Z=(Z_1,\ldots,Z_N)$ which encodes the direct and indirect treatment of each unit. We will consider $Z \in \{1,0,-1\}^N$, where $1$ and $0$ respectively denote generalized versions of treatment and control, and $-1$ denotes that the unit will be excluded from comparison.

For example, to use $W$, the thresholded number of treated neighbors given by \eqref{eq: W}, we can define $Z$ according to
\begin{align}\label{eq: Z basic}
Z_i = \begin{cases} 1 & \textup{ if $X_i = W_i =1$} \\ 0 & \textup{ if $X_i=W_i=0$} \\ -1 & \textup{ if $X_i \neq W_i$} \end{cases} 
\end{align}
Under $Z$ so defined, unit $i$ receives ``effective treatment'' if $X_i=1$ and $W_i=1$; receives ``effective control'' if $X_i=0$ and $W_i=0$; and otherwise is excluded from comparison. 

Given $Z$, we can redefine $H = (H_1,\ldots,H_N)$ to use $Z$ instead of $X$,
\begin{align}\label{eq: S Z}
H_i = \begin{cases} 1 & \textup{ if }(Z_i,Y_i) = (1,1) \textup{ or } (0,0) \\
0 & \textup{ if } (Z_i,Y_i) = (1,0) \textup{ or } (0,1) \\
-1 & \textup{ if } Z_i = -1 \end{cases}
\end{align}
which induces new versions of $\hat{\tau}_1$ and $\hat{\tau}_2$ following \eqref{eq: tau hat} as before. Proposition \ref{th: diff in means} can then be adapted to show that the difference in outcomes between effective treatment and effective control, given by
\[ \Delta^{Z} = \sum_{i=1}^N \left( \frac{1\{Z_i=1\}}{P(Z_i=1)} - \frac{1\{Z_i=0\}}{P(Z_i=0)}\right)Y_i \]
is an asymptotic lower bound on the maximum of $\hat{\tau}_1$ and $\hat{\tau}_2$.

\begin{proposition} \label{th: diff in means Z}
Let $Z$ be given by \eqref{eq: Z basic} and let the total weights of effective treatment and control converge to their expectations,
\[ \sum_{i=1}^N \frac{1\{Z_i=1\}}{P(Z_i=1)} = N + O_P(N^{1/2}) \quad \textup{ and } \quad \sum_{i=1}^N \frac{1\{Z_i=0\}}{P(Z_i=0)} = N + O_P(N^{1/2})\]
and let $\hat{\tau}_1$ and $\hat{\tau}_2$ be given by \eqref{eq: tau hat} for $H$ given by \eqref{eq: S Z}. Then it holds that 
\begin{align} \label{eq: diff in means XW}
\left| \sum_{i=1}^N \left(\frac{1\{Z_i=1\}}{P(Z_i = 1)} - \frac{1\{Z_i=0\}}{P(Z_i=0)}\right)Y_i \right| \leq \max(\hat{\tau}_1, \hat{\tau}_2) + O_P(N^{1/2})
\end{align}
\end{proposition}
If $Z$ is a better predictor of the unit outcomes than $X$, then $\Delta^Z$ may be larger than the comparison of direct treatment and direct control given by $\Delta$. This results in a larger and tighter lower bound on $\tau^{\textup{basic}}$, assuming consistency of $\hat{\tau}_1$ and $\hat{\tau}_2$. 

\subsection{H\'{a}jek Normalization} \label{sec: tau any hajek}

To reduce the variance of our IPW-based estimators, we may use H\'{a}jek normalization \citep{aronow2017estimating,gao2023causal}. As a preliminary step, let $Y_{ki}$ denote transformed outcomes given by
\begin{align}\label{eq: Yk}
Y_{ki} = \begin{cases}  Y_i & \textup{ if } k = 1 \\ 1-Y_i & \textup{ if } k = 2,
\end{cases}
\qquad i \in [N],\ k = 1,2,
\end{align}
This allows $\hat{\tau}_1$ and $\hat{\tau}_2$ using $H$ given by \eqref{eq: S Z} to be written with a single formula
\begin{align} \label{eq: tau hat k}
\hat{\tau}_k = N - \sum_{i \in \Iscr} \frac{1\{Z_i = Y_{ki}\}}{P(Z_i = Y_{ki})} , \qquad k=1,2,
\end{align}
which will simplify our discussion of H\'{a}jek normalization for $\hat{\tau}_1$ and $\hat{\tau}_2$.

To construct a H\'{a}jek-normalized version of $\hat{\tau}_k$ as given by \eqref{eq: tau hat k}, we first observe that it equals
\begin{equation}\label{eq: tau hat pre hajek}
\hat{\tau}_k = N - \sum_{i \in \Iscr} \left(\frac{1\{Z_i =1, Y_{ki}=1\}}{P(Z_i=1)} + \frac{1\{Z_i=0, Y_{ki}=0\}}{P(Z_i=0)} \right) 
\end{equation}
where we have divided the event $\{Z_i = Y_{ki}\}$ into sub-events $\{Z_i = Y_{ki} = 1\}$ and $\{Z_i = Y_{ki} = 0\}$, and replaced the propensity score $P(Z_i = Y_{ki})$ by 
\[ P(Z_i = Y_{ki}) = Y_{ki}P(Z_i=1) + (1-Y_{ki})P(Z_i=0), \qquad \forall\ i \in \Iscr, \]
 which can be done because $Y_{ki}$ is constant in $X$ for $i \in \Iscr$. Eq. \eqref{eq: tau hat pre hajek} suggests a H\'{a}jek normalized version given by
\begin{equation}\label{eq: tau hat hajek}
\hat{\tau}_k^{\Haj} = N - \sum_{i \in \Iscr} \left[\frac{N}{\hat{N}_1} \frac{1\{Z_i=1\}}{P(Z_i=1)} \cdot Y_{ki}  + \frac{N}{\hat{N}_0}\frac{1\{Z_i=0\}}{P(Z_i=0)}  \cdot (1-Y_{ki}) \right],
\end{equation}
where $\hat{N}_0$ and $\hat{N}_1$ are estimates of $N$ given by 
\begin{align*}
\hat{N}_{0} & = \sum_{j=1}^N \frac{1\{Z_j=0\}}{P(Z_j=0)}, & \hat{N}_{1} & = \sum_{j=1}^N \frac{1\{Z_j=1\}}{P(Z_j=1)}.
\end{align*}
In Theorem \ref{th: hajek consistency bernoulli}, we show that $\max(\hat{\tau}_1^\Haj, \hat{\tau}_2^\Haj)$ is lower bounded by a H\'{a}jek-weighted comparison of units,
\[ \left| \sum_{i=1}^N \left( \frac{N}{\hat{N}_1} \frac{Z_i}{P(Z_i=1)} - \frac{N}{\hat{N}_0} \frac{(1-Z_i)}{P(Z_i=0)}\right) Y_i \right| \leq \max(\hat{\tau}_1^\Haj\, , \, \hat{\tau}_2^\Haj)\]
noting that the inequality holds non-asymptotically, in contrast to \eqref{eq: diff in means} and \eqref{eq: diff in means XW}.

\subsection{Inference and Variance Estimation} \label{sec: tau any inference}

If $\hat{\tau}_1^\Haj$ and $\hat{\tau}_2^\Haj$ are asymptotically normal, with consistent variance estimators denoted by $\hat{V}_1$ and $\hat{V}_2$, then by combining 1-sided confidence intervals it holds with probability converging to at least $1-\alpha$ that 
\begin{align} \label{eq: Phi CI}
\tau^\basic & \geq \max\left\{\hat{\tau}_1^\Haj - z_{1-\frac{\alpha}{2}} \sqrt{\hat{V}_1}, \, \hat{\tau}_2^\Haj - z_{1-\frac{\alpha}{2}} \sqrt{\hat{V}_2}\right\}. 
\end{align}
As the right hand side of \eqref{eq: Phi CI} requires knowledge of $\Iscr$, it cannot be computed. 

To construct a computable one-sided confidence interval for $\tau^{\textup{basic}}$, we will lower bound \eqref{eq: Phi CI} by minimizing over all possible hypotheses for the unknown $\Iscr$. Doing so results in the confidence statement that with probability at least $1-\alpha$, 
\begin{align} \label{eq: tau basic CI}
\tau^{\textup{basic}} \geq \max\left\{ \min_{\phi \in \{0,1\}^N} \hat{\tau}_1^\Haj(\phi) - z_{1-\frac{\alpha}{2}} \sqrt{\hat{V}_1(\phi)},\,  \min_{\phi \in \{0,1\}^N} \hat{\tau}_2^\Haj(\phi) - z_{1-\frac{\alpha}{2}} \sqrt{\hat{V}_2(\phi)}\,\right\},
\end{align}
where $\hat{\tau}_k^\Haj(\phi)$ and $\hat{V}_k(\phi)$ denote $\hat{\tau}_k^\Haj$ and $\hat{V}_k$ evaluated under the hypothesis that $\Iscr = \{i: \phi_i=1\}$ for $\phi \in \{0,1\}^N$. 

To complete our discussion of interval estimation, we next give an expression for $\hat{V}_k$, and then discuss computation.

\paragraph{Variance Estimation} 
The expression for $\hat{\tau}_k^\Haj$ given by \eqref{eq: tau hat hajek} is equivalent to
\begin{equation*}
\hat{\tau}_k^{\Haj} = N - \sum_{i\in \Iscr} \frac{N}{\hat{N}_{Y_{ki}}}\cdot \frac{1\{Z_i=Y_{ki}\}}{P(Z_i = Y_{ki})}
\end{equation*}
It follows that the variance of $\hat{\tau}_k^\Haj$ can be written as a sum of covariance terms,
\begin{align} \label{eq: var Phi}
\Var \hat{\tau}_k^\Haj = \sum_{i \in \Iscr} \sum_{j \in \Iscr} \Cov \left(\frac{N}{\hat{N}_{Y_{ki}}} \cdot \frac{1\{Z_i = Y_{ki}\}}{P(Z_i = Y_{ki})} \ , \ \frac{N}{\hat{N}_{Y_{kj}}} \cdot \frac{1\{Z_j = Y_{kj}\}}{P(Z_j = Y_{kj})} \right)
\end{align}
To construct the one-sided interval \eqref{eq: tau basic CI}, we propose $\hat{V}_k$ given by
\begin{align} \label{eq: V}
\hat{V}_k = \sum_{i \in \Iscr} \sum_{j \in \Iscr} \Cov \left(\frac{N}{\hat{N}_{Y_{ki}}} \frac{1\{Z_i = Y_{ki}\}}{P(Z_i = Y_{ki})} \ , \ \frac{N}{\hat{N}_{Y_{kj}}} \frac{1\{Z_j = Y_{kj}\}}{P(Z_j = Y_{kj})} \right) \cdot \frac{1\{Z_i = Y_{ki}, Z_j =Y_{kj}\}}{P(Z_i = Y_{ki}, Z_j= Y_{kj})}
\end{align}
Comparing \eqref{eq: var Phi} and \eqref{eq: V}, it can be seen that $\hat{V}_k$ is a Horvitz-Thompson estimator which samples the covariance terms appearing in $\operatorname{Var}\hat{\tau}_k^\Haj$. Hence it is unbiased provided that the sampling probabilities $P(Z_i = Y_{ki}, Z_j= Y_{kj})$ are non-zero  for all $i,j \in \Iscr$. Should this condition not hold due to dependencies between the entries of $Z$, our variance estimators will be conservatively biased, as will be established in Theorem \ref{th: hajek variance estimation bernoulli}. 

We use the estimated variance $\hat{V}_k$, rather than the actual variance given by \eqref{eq: var Phi}, in the 1-sided interval given by \eqref{eq: tau basic CI}. This is because \eqref{eq: tau basic CI} involves a minimization over all possible hypotheses for the unknown set $\Iscr$, and $\hat{V}_k$ can be less sensitive than the actual variance to this optimization, resulting in a less conservative interval. This is particularly the case when the contrast $\Delta^Z$ is large, as then either $\hat{V}_1$ or $\hat{V}_2$ involves only a small number of sampled covariance terms.

To evaluate $\hat{V}_k$ given by \eqref{eq: V}, we may use the following expressions for the covariance and propensity values which hold for all $i,j \in \Iscr$,
\begin{multline} \label{eq: Cij}
\Cov \left(\frac{N}{\hat{N}_{Y_{ki}}} \cdot \frac{1\{Z_i = Y_{ki}\}}{P(Z_i = Y_{ki})} \ , \ \frac{N}{\hat{N}_{Y_{kj}}} \cdot \frac{1\{Z_j = Y_{kj}\}}{P(Z_j = Y_{kj})} \right) = \\ 
\sum_{a=0}^1 \sum_{b=0}^1 1\{Y_{ki}=a, Y_{kj}=b\}\  \Cov \left(\frac{N}{\hat{N}_{a}} \cdot \frac{1\{Z_i = a\}}{P(Z_i = a)} \ , \ \frac{N}{\hat{N}_{b}} \cdot \frac{1\{Z_j = b\}}{P(Z_j = b)} \right)
\end{multline}
and
\begin{align} \label{eq: Pij}
P(Z_i = Y_{ki}, Z_j= Y_{kj}) =  \sum_{a=0}^1 \sum_{b=0}^1 1\{Y_{ki}=a, Y_{kj}=b\} \ P(Z_i = a, Z_j=b).  
\end{align}
The expressions \eqref{eq: Cij} and \eqref{eq: Pij} hold because $Y_{ki}$ and $Y_{kj}$ are constants when $i,j \in\Iscr$. 






\subsection{Computation} \label{sec: computation}

To construct the one-sided interval \eqref{eq: tau basic CI}, we must solve the optimization problem 
\begin{align}\label{eq: CI optimization}
\min_{\phi \in \{0,1\}^N} \hat{\tau}_k^\Haj(\phi) - z_{1 - \frac{\alpha}{2}}\sqrt{\hat{V}_k(\phi)},
\end{align}
for $k = 1,2$. We can write this problem as
\begin{align} \label{eq: quadprog}
\min_{\phi \in \{0,1\}^N} N - b^T \phi - z_{1-\frac{\alpha}{2}} \sqrt{\phi^T Q \phi}
\end{align}
where $\hat{\tau}_k^\Haj(\phi) = N - b^T \phi$ and $\hat{V}_k(\phi) = \phi^T Q \phi$ for $b \in \mathbb{R}^N$ and $Q \in \mathbb{R}^{N \times N}$ given by 
\begin{align} 
\label{eq: v quadprog} b_i & = \frac{N}{\hat{N}_{Y_{ki}}} \cdot \frac{1\{Z_i = Y_{ki}\}}{Y_{ki} \cdot P(Z_i=1) + (1-Y_{ki}) \cdot P(Z_i=0)} & i \in [N] \\
\label{eq: Q quadprog} Q_{ij} &= C_{ij} \cdot \frac{1\{Z_i=Y_{ki}, Z_j = Y_{kj}\}}{P_{ij}} & i,j \in [N]^2.
\end{align}
in which $C_{ij}$ and $P_{ij}$ appearing in \eqref{eq: Q quadprog} are given by the right hand sides of \eqref{eq: Cij} and \eqref{eq: Pij} respectively

To solve \eqref{eq: quadprog}, we let $q_i$ denote the $i$th row of $Q$, let $B = \sum_{ij} |Q_{ij}|$, and reformulate the problem as 
\begin{align*}
\max_{\phi, u} & \ b^T \phi + z_{1-\frac{\alpha}{2}}\sqrt{\sum_{i=1}^N u_i} \\
\textup{subject to} & \ u_i \leq q_i^T \phi + B (1-\phi_i)  & \forall\ i \in [N]\\
& \ u_i \geq q_i^T \phi - B (1-\phi_i)  & \forall\ i \in [N]\\
& \ u_i \leq B \phi_i  & \forall\ i \in [N]\\
& \ u_i \geq -B \phi_i  & \forall\ i \in [N]\\
& \ \phi \in \{0,1\}^N \\
& \ u \in \mathbb{R}^N
\end{align*}
It can be seen that $u_i=0$ if $\phi_i=0$, and $u_i = q_i^T\phi$ if $\phi_i=1$, enforcing that $\sum_i u_i = \phi^T Q \phi$. We solve this binary program using Gurobi, a commercial optimization solver. 
For the instances that we considered, involving 1000 or more units, Gurobi was generally able to solve the problem to acceptable tolerances. If stopped early, Gurobi returns an upper bound, retaining the coverage properties of the actual solution.

\section{Generalization to other estimands}  \label{sec: generalization}

We now generalize the approach of Section \ref{sec: basic} to cover estimands that more directly characterize interference, such as those described in the introduction. In Section \ref{sec: other estimands} we introduce the estimands $\tau^{\textup{indirect}}$, $\tau^{\textup{nonneighbors}}$, $\tau^{\textup{control}}$, and $\tau^{\textup{tr}}$, defining $\Iscr$ and Horvitz-Thompson estimators for each one. In Section \ref{sec: general template} we then give a general formulation that covers each of these examples, and define the H\'{a}jek-normalized estimators $\hat{\tau}_1^\Haj$ and $\hat{\tau}_2^\Haj$ and the H\'{a}jek-normalized contrast $\Delta^\Haj$. Section \ref{sec: general variance estimation} then presents variance estimators and inference.

\subsection{Other Estimands} \label{sec: other estimands}

\paragraph{The number of units affected by the treatment of others ($\tau^{\textup{indirect}}$)} \label{sec: theta}

Let $\Iscr \subseteq [N]$ identify the units who are not affected by the treatment of others
\begin{align} \label{eq: phi indirect}
\Iscr = \{i: \textup{$Y_i$ is affected only by $X_i$, so that $Y_i = f_i(X_i)$}\} 
\end{align}
and let $\tau^{\textup{indirect}} = N - |\Iscr|$, the number of units affected by the treatment of others. 

To estimate $\tau^{\textup{indirect}}$, we may use estimators $\hat{\tau}_1$ and $\hat{\tau}_2$ given by
\begin{align*} 
\hat{\tau}_k & = N - \sum_{i \in \Iscr} \frac{1\{W_i=Y_{ki}\}}{{P}(W_i=Y_{ki}|X_i)}. 
\end{align*}
This estimator differs from \eqref{eq: tau hat k} in that we condition on $X_i$ in the probability weight $P(W_i = Y_{ki}|X_i)$. This is done because the outcome $Y_{ki}$ is no longer constant for $i \in \Iscr$, but instead is a function of $X_i$. As a result, the probability $P(W_i = Y_{ki})$ is unknown if $f_i$ is unknown, even if $i \in \Iscr$. On the other hand, the conditional probability $P(W_i = Y_{ki}|X_i)$ may be evaluated without knowledge of $f_i$, according to
\[ P(W_i = Y_{ki} | X_i) = Y_{ki}\cdot P(W_i = 1|X_i) + (1-Y_{ki})\cdot P(W_i = 0|X_i), \qquad i \in \Iscr \]
as will be shown by Proposition \ref{th: Phi evaluation} in Section \ref{sec: general template}.


\paragraph{Long-range interference ($\tau^{\textup{nonneighbors}}$)} \label{sec: theta eta}

Let $\Iscr \subseteq [N]$ identify the units who are affected only by their own treatment and that of their close associates,
\begin{align*} 
\Iscr = \{i: \textup{$Y_i$ is affected only by $X_{{\mathcal N}_i}$, so that $Y_i = f_i(X_{{\mathcal N}_i})$}\} 
\end{align*}
and let $\tau^{\textup{nonneighbors}}$ equal $N - |\Iscr|$, the number of units affected by the treatment of others excluding close associates. To estimate $\tau^{\textup{nonneighbors}}$, we may use $\hat{\tau}_1$ and $\hat{\tau}_2$ given by
\begin{align*} 
\hat{\tau}_k & = N - \sum_{i \in \Iscr} \frac{1\{W_i^{(2)}=Y_{ki}\}}{{P}(W_i^{(2)}=Y_{ki}|X_{{\mathcal N}_i})} 
\end{align*}
where $W_i^{(2)}$ is given by \eqref{eq: W2}, and measures treatment of non-close associates.

\paragraph{The number of units whose outcome under direct control (treatment) is affected by the treatment of others ($\tau^{\textup{control}}$, $\tau^{\textup{tr}}$)} \label{sec: theta tr}

Let $\Iscr \subseteq [N]$ identify the units whose outcome, when receiving control themselves, does not depend on the treatment of others
\begin{align*} 
\Iscr = \{i: \textup{$Y_i$ is constant for all $X$ such that $X_i = 0$}\}
\end{align*}
in which case $\tau^{\textup{control}} = N - |\Iscr|$ equals the number of units whose outcome under direct control is affected by the treatment of others. For example, if a vaccine grants immunity, then the disease status of a unit when receiving the placebo (but not necessarily when receiving the vaccine) might depend on the vaccination status of others. 

To estimate $\tau^{\textup{control}}$, we consider $\hat{\tau}_1$ and $\hat{\tau}_2$ given by
\begin{align*} 
\hat{\tau}_k & = N - \sum_{i \in \Iscr} \frac{1\{X_i=0, W_i=Y_{ki}\}}{{P}(X_i=0, W_i=Y_{ki})}
\end{align*}
We may similarly define $\tau^{\textup{tr}}$, the number of units whose outcome under direct treatment is affected by the treatment of others, by defining $\Iscr$ to be
\begin{align*} 
\Iscr & = \{i: \textup{$Y_i$ is constant for all $X$ such that $X_i = 1$}\} 
\end{align*}
and $\hat{\tau}_1$ and $\hat{\tau}_2$ to be
\begin{align*} 
\hat{\tau}_k & = N - \sum_{i \in \Iscr} \frac{1\{X_i=1, W_i=Y_{ki}\}}{{P}(X_i=1, W_i=Y_{ki})} 
\end{align*}

\subsection{General Functional Form} \label{sec: general template}

We give a general functional form for our estimands and estimators. We may write our estimands as 
\begin{align} \label{eq: general estimand}
 \tau = N - |\Iscr|,
\end{align}
where $\Iscr \subseteq [N]$ contains the units whose outcome mapping satisfies an interference specification, which may be written as
\begin{align} \label{eq: phi general}
\Iscr = \{i: \textup{ $\exists\ \tilde{f}_i$ such that $f_i(X) = \tilde{f}_i(T_i(X))$ for all $X \in R_i$} \}
\end{align}
where $T_i \equiv T_i(X)$ equals a vector of summary statistics describing unit $i$'s exposure to treatment, and $R_i$ is a subset of $\{0,1\}^N$. For units in $\Iscr$, the value of $T_i$ dictates unit $i$'s outcome whenever $X \in R_i$. Both $T_i$ and $R_i$ are chosen by the researcher to define the specification of interest. For example, letting $T_i = X_i$ and $R_i = \{0,1\}^N$ induces $\Iscr$ given by \eqref{eq: phi indirect}. 

In addition to $T_i$ and $R_i$, the researcher also chooses $Z_i \equiv Z_i(X) \in \{1,0,-1\}$, which maps unit $i$'s treatment exposure to one of 3 groups (i.e., effect treatment, effective control, or not used). Table \ref{table: general} gives values for $\{T_i, R_i, Z_i\}_{i=1}^N$ corresponding to the estimands considered in this paper. 

To simplify notation, let $\bar{Z}_i$ for $i\in [N]$ equal $Z_i$ if $X \in R_i$, and equal -1 otherwise,
\begin{align}\label{eq: Z bar}
\bar{Z}_i = \begin{cases} Z_i & \textup{ if $X \in R_i$} \\ -1 & \textup{ if $X \notin R_i$,} \end{cases}
\end{align}
allowing the event $\{X \in R_i, Z_i = Y_{ki}\}$ to be written more concisely as $\{\bar{Z}_i = Y_{ki}\}$ under binary outcomes.

The Horvitz-Thompson estimators $\hat{\tau}_k$ for $k=1,2$ are given by
\begin{align}\label{eq: Phi general}
\hat{\tau}_k  = N - \sum_{i\in \Iscr} \frac{1\{\bar{Z}_i = Y_{ki}\}}{P(\bar{Z}_i  = Y_{ki}|T_i)}
\end{align}
and H\'{a}jek-normalized versions of $\hat{\tau}_k$ are given by
\begin{align} \label{eq: hajek general}
\hat{\tau}_k^\Haj = N - \sum_{i \in \Iscr} \frac{N}{\hat{N}_{Y_{ki}}} \cdot \frac{1\{\bar{Z}_i = Y_{ki}\}}{P(\bar{Z}_i = Y_{ki}|T_i)},
\end{align}
where $\hat{N}_1$ and $\hat{N}_0$ denote H\'{a}jek normalization factors
\begin{align} \label{eq: Nhat general}
\hat{N}_{a} &= \sum_{i=1}^N \frac{1\{\bar{Z}_i = a\}}{P(\bar{Z}_i = a|T_i)}, \qquad a\in \{0,1\}.
\end{align}
As will be shown in Theorem \ref{th: hajek consistency bernoulli}, the H\'{a}jek-weighted estimates will be lower bounded by the absolute value of a H\'{a}jek-weighted comparison of outcomes given by 
\begin{align} \label{eq: contrast general}
 \Delta^{\Haj} = \sum_{i=1}^N\left(\frac{N}{\hat{N}_1} \frac{1\{\bar{Z}_i = 1\}}{ P(\bar{Z}_i = 1| T_i)} - \frac{N}{\hat{N}_0} \frac{1\{\bar{Z}_i = 0\}}{P(\bar{Z}_i = 0|T_i)} \right) Y_i
\end{align}
For the estimands considered in this paper, Table \ref{table: tau hat} gives expressions for $\Delta^{\Haj}$.

\begin{table}[t!]
\centering
 \begin{tabular}{l|l|l|l} 
 Estimand & $T_i$ & $R_i$ & $Z_i$ \\ 
 \hline
 $\tau^{\textup{basic}}$            & 1       & $\{0,1\}^N$ & $X_i$ \\ 
 $\tau^{\textup{basic}}$ using \eqref{eq: Z basic}      & 1       & $\{0,1\}^N$ & $1\{X_i=W_i=1\} - 1\{X_i \neq W_i\}$ \\
 $\tau^{\textup{indirect}}$   & $X_i$     & $\{0,1\}^N$ & $W_i$ \\
 $\tau^{\textup{nonneighbors}}$ & $X_{\mathcal{N}_i}$  & $\{0,1\}^N$ & $W_i^{(2)}$ \\
 $\tau^{\textup{control}} $   & 1       & $\{X \in \{0,1\}^N: X_i=0\}$ & $W_i$ \\ 
 $\tau^{\textup{tr}}$     & 1      & $\{X \in \{0,1\}^N: X_i=1\}$ & $W_i$
 \end{tabular}
 \caption{Exposure mappings, subsets, and contrast mappings for various estimands. Note that letting $T_i = 1$ for all $X$ enforces $Y_i$ to equal a constant for all $X \in R_i$, if $i \in \Iscr$. }\label{table: general}
\end{table}

\begin{table}[t!]
\centering
 \begin{tabular}{l||l} 
 Estimand & H\'{a}jek-normalized contrast $\Delta^\Haj$ \\ 
 \hline
 $\tau^{\textup{basic}}$    
 & $\displaystyle{\sum_{i=1}^N \left( \frac{N}{\hat{N}_1}\frac{X_i}{P(X_i=1)} - \frac{N}{\hat{N}_0} \frac{1-X_i}{P(X_i=0)}\right) Y_i }$
          \\ &  \\
 $\tau^{\textup{basic}}$ (w/network) 
 & $\displaystyle{ \sum_{i=1}^N \left( \frac{N}{\hat{N}_1}\frac{1\{X_i=W_i=1\}}{P(X_i=W_i=1)} - \frac{N}{\hat{N}_0} \frac{1\{X_i=W_i=0\}}{P(X_i=W_i=0)} \right) Y_i }$ 
        \\  &  \\
 $\tau^{\textup{indirect}}$     
 & $\displaystyle{ \sum_{i=1}^N \left( \frac{N}{\hat{N}_1}\frac{1\{W_i=1\}}{P(W_i=1|X_i)} - \frac{N}{\hat{N}_0} \frac{1\{W_i=0\}}{P(W_i=0|X_i)}\right) Y_i }$ 
      \\ &  \\
 $\tau^{\textup{nonneighbors}}$ & 
 $\displaystyle{ \sum_{i=1}^N \left( \frac{N}{\hat{N}_1}\frac{1\{W_i^{(2)}=1\}}{P(W_i^{(2)}=1|X_{{\mathcal N}_i})} - \frac{N}{\hat{N}_0} \frac{1\{W_i^{(2)}=0\}}{P(W_i^{(2)}=0|X_{{\mathcal N}_i})}\right) Y_i }$
        \\  & \\
  $\tau^{\textup{control}} $
   & $\displaystyle{ \sum_{i=1}^N \left( \frac{N}{\hat{N}_1}\frac{1\{X_i=0, W_i=1\}}{P(X_i=0,W_i=1)} - \frac{N}{\hat{N}_0} \frac{1\{X_i=0, W_i=0\}}{P(X_i=0,W_i=0)}\right) Y_i }$ 
          \\ &   \\
 $\tau^{\textup{tr}}$   
  & $\displaystyle{ \sum_{i=1}^N \left( \frac{N}{\hat{N}_1}\frac{1\{X_i=1, W_i=1\}}{P(X_i=1,W_i=1)} - \frac{N}{\hat{N}_0} \frac{1\{X_i=1, W_i=0\}}{P(X_i=1,W_i=0)}\right) Y_i}$ 
              
 \end{tabular}
 \caption{H\'{a}jek weighted point estimate $\hat{\tau}$, expressed as contrasts given by \eqref{eq: contrast general}. The terms $\hat{N}_1$ and $\hat{N}_0$ are defined by \eqref{eq: Nhat general}. }\label{table: tau hat}
\end{table}

Analogous to the discussion of Section \ref{sec: basic}, Proposition \ref{th: tight} gives intuition for when our lower bounds may be tight, and suggests guidance for the choice of $\bar{Z}$. It states that if for all $i \notin \Iscr$ the contrast variable $\bar{Z}_i$ is a perfect predictor of the outcome $Y_i$ whenever $\bar{Z}_i \in \{0,1\}$,  then an IPW weighted contrast using $\bar{Z}_i$ will be unbiased for the estimand. 

\begin{proposition} \label{th: tight}
Let $\Delta^{\bar{Z}}$ equal
\[ \Delta^{\bar{Z}} = \sum_{i=1}^N \left(\frac{1\{\bar{Z}_i=1\}}{P(\bar{Z}_i = 1|T_i)}  - \frac{1\{\bar{Z}_i = 0\}}{P(\bar{Z}_i = 0|T_i)}\right)Y_i \]
and for all $i \notin \Iscr$ let $Y_i = \bar{Z}_i$ whenever $\bar{Z}_i \neq -1$. Then $\mathbb{E}[\Delta^{\bar{Z}}] = N - |\Iscr|.$
\end{proposition}

\subsection{Variance Estimation and Inference for General Case} \label{sec: general variance estimation}

Let $v_i$ denote 
\begin{align} \label{eq: v general}
v_i = \frac{N}{\hat{N}_{Y_{ki}}} \cdot \frac{1\{\bar{Z}_i = Y_{ki}\}}{P(\bar{Z}_i = Y_{ki}|T_i)}
\end{align}
so that $\hat{\tau}_k^\Haj = N - \sum_{i\in \Iscr} v_i$. It follows that the variance of $\hat{\tau}_k^\Haj$ can be written as a sum of covariances
\[ \Var \hat{\tau}_k^\Haj = \sum_{i \in \Iscr} \sum_{j \in \Iscr} \Cov(v_i, v_j) \]
Unlike the special case where $T_i$ is constant for all units, generally we cannot compute $\Cov(v_i, v_j)$ exactly without knowledge of the outcome models $f_i$ and $f_j$, even if $i,j \in \Iscr$. As a result, the covariance terms must be estimated. We thus propose variance estimator $\hat{V}_k$ given by
\begin{multline} \label{eq: V general}
\hat{V}_k = \sum_{i \in \Iscr}\sum_{j \in \Iscr}\Bigg\{ \mathbb{E}\left[v_i v_j|T_i, T_j\right] - 
\mathbb{E}\left[v_i|T_i\right]\cdot \mathbb{E}\left[v_j|T_j\right]  \cdot \\ \frac{P(T_i)\cdot P(T_j)}{P(T_i, T_j)} \Bigg\} \cdot 
\frac{1\{\bar{Z}_i = Y_{ki}, \bar{Z}_j = Y_{kj}\}}{P(\bar{Z}_i = Y_{ki}, \bar{Z}_j = Y_{kj}|T_i, T_j)}
\end{multline}
We will show in Theorem \ref{th: hajek variance estimation bernoulli} that \eqref{eq: V general} is conservatively biased, with the exception that if $T_i$ is a constant for each unit (as is the case for $\tau^{\textup{basic}}$, $\tau^{\textup{control}}$, and $\tau^{\textup{tr}}$), then the estimator simplifies to a weighted sample of covariance terms 
\begin{align*}
\hat{V}_k = \sum_{i \in \Iscr} \sum_{j \in \Iscr} \Cov(v_i, v_j) \cdot 
 \frac{1\{\bar{Z}_i = Y_{ki}, \bar{Z}_j = Y_{kj}\}}{P(\bar{Z}_i = Y_{ki}, \bar{Z}_j = Y_{kj})},
\end{align*}
and is unbiased if the sampling probabilities $P(\bar{Z}_i = Y_{ki}, \bar{Z}_j = Y_{kj})$ are non-zero, and conservative otherwise. 

If $\hat{\tau}_k^\Haj$ for $k=1,2$ are consistent and asymptotically normal, with variance consistently (or conservatively) estimated by $\hat{V}_k$, then we can construct an asymptotic $(1-\alpha)$ level  one-sided confidence interval for $\tau$, with lower boundary given by
\begin{align} \label{eq: CI general}
\max\left\{ \min_{\phi \in \{0,1\}^N} \hat{\tau}_1^\Haj(\phi) - z_{1-\frac{\alpha}{2}}\sqrt{\hat{V}_1(\phi)}\, , \, \min_{\phi \in \{0,1\}^N} \hat{\tau}_2^\Haj(\phi) - z_{1-\frac{\alpha}{2}} \sqrt{\hat{V}_2(\phi)}\, \right\},
\end{align}
where $\hat{\tau}_k^\Haj(\phi)$ and $\hat{V}_k(\phi)$ denote $\hat{\tau}_k^\Haj$ and $\hat{V}_k$, evaluated under the hypothesis that $\Iscr = \{i: \phi_i=1\}$.

To evaluate the terms $\hat{\tau}_k^\Haj(\phi)$ and $\hat{V}_k(\phi)$ appearing in \eqref{eq: CI general}, we may use Propositions \ref{th: Phi evaluation} and \ref{th: V evaluation}. Proposition \ref{th: Phi evaluation} shows that the sampling weights $P(\bar{Z}_i  = Y_{ki}|T_i)$ appearing in $\hat{\tau}_k^\Haj$ given by \eqref{eq: hajek general} do not require knowledge of the outcome mappings $\{f_i\}_{i=1}^N$. Proposition \ref{th: V evaluation} gives computable expressions for the terms appearing in $\hat{V}_k$ given by \eqref{eq: V general}. 
\begin{proposition}\label{th: Phi evaluation}
If $i \in \Iscr$ and $X \in R_i$, then it holds that
\[ P(\bar{Z}_i = Y_{ki}|T_i) = Y_{ki}\cdot P(\bar{Z}_i = 1|T_i) + (1-Y_{ki}) \cdot P(\bar{Z}_i=0|T_i). \]
\end{proposition}

\begin{proposition}\label{th: V evaluation}
If $i \in \Iscr$ and $\bar{Z}_i = Y_{ki}$, then it holds that
\begin{align*}
\mathbb{E}[v_i|T_i] & = \sum_{a=0}^1 1\{Y_{ki}=a\}\mathbb{E}\left[  \frac{1\{\bar{Z}_i = a\} N}{P(\bar{Z}_i=a|T_i) \hat{N}_a}  | T_i\right], 
\end{align*}
If $i,j \in \Iscr$ and $1\{\bar{Z}_i = Y_{ki}, \bar{Z}_j = Y_{kj}\} = 1$, then it holds that
\begin{align*}
\mathbb{E}[v_i v_j|T_i,T_j] =& \sum_{a,b=0}^1 1\{Y_{ki}=a, Y_{kj}=b\}\mathbb{E}\left[ \frac{1\{\bar{Z}_i = a\} N }{P(\bar{Z}_i=a|T_i) \hat{N}_a}  \cdot 
\frac{1\{\bar{Z}_j = b\} N}{P(\bar{Z}_j=b|T_j) \hat{N}_b}  | T_i, T_j\right] 
\end{align*}
and also that
\begin{align*}
P(\bar{Z}_i = Y_{ki}, \bar{Z}_j=Y_{kj}| T_i, T_j) =&  \sum_{a,b=0}^1 1\{Y_{ki}=a, Y_{kj}=b\}P(\bar{Z}_i = a, \bar{Z}_j=b | T_i, T_j)
\end{align*}
\end{proposition}

\section{Theory} \label{sec: theory}

In this section, we give consistency and coverage results for the general class of H\'{a}jek-normalized estimators described in Section \ref{sec: generalization}. Specifically, we show that $|\Delta^\Haj|$ asymptotically lower bounds the estimand; we show that the variance estimator $\hat{V}_k$ concentrates and is either unbiased or conservative; and we show that our intervals have the desired asymptotic coverage, albeit with a technical modification to cover settings approaching degeneracy, in which case normality may not hold.


To describe our requirements on the experiment design, for $i \in [N]$ let $\Gamma_i \subseteq [N]$ denote a subset of units whose treatment assignment is sufficient to determine the values of $T_i$ and $\bar{Z}_i$ for all $X \in \{0,1\}^N$, so that for all $X, X' \in \{0,1\}^N$ satisfying $X_{\Gamma_i} = X_{\Gamma_i}'$, it holds that
\[ T_i(X) = T_i(X'), \textup{ and }  \bar{Z}_i(X) = \bar{Z}_i(X')\]
 We will call $\Gamma_i$ the design neighborhood of unit $i$. We stress that requirements on $\{\Gamma_i\}_{i=1}^N$ restrict the experiment designs and estimands which may be explored, but places no assumptions on the outcome models $\{f_i\}_{i=1}^N$ that describe interference, which are unknown and  uncontrolled by the researcher.

Assumption \ref{as: design bernoulli} describes a simple experiment design that we study. It assumes a sequence of experiments in which the design neighborhoods are bounded in size and overlap, Bernoulli randomization with treatment probabilities bounded away from 0 and 1, and places a positivity condition on $P(\bar{Z}_i|T_i)$. 

\begin{assumption} \label{as: design bernoulli}
The treatment assignments $\{X_i\}_{i=1}^N$ are independent Bernoulli random variables, which satisfy for fixed constants $p > 0$ and $d > 0$ as $N \rightarrow \infty$ that
\begin{align}
 |\Gamma_i| & \leq d & \forall\ i \in [N] \label{eq: Gamma 1}\\
\sum_{j=1}^N 1\{i \in \Gamma_j\} & \leq d  & \forall\ i \in [N]  \label{eq: Gamma 2} \\
p \leq P(X_i = 1) & \leq 1-p & \forall\ i \in  [N] \label{eq: pmin X} \\
\label{eq: pmin}
 P(\bar{Z}_i = a \,|\, T_i=t) & \geq p  & \forall\ i \in [N], a \in \{0,1\}, t \in \mathcal{T}_i
\end{align}
where $\mathcal{T}_i$ denotes the support of $T_i$. 
\end{assumption}

The positivity condition \eqref{eq: pmin} requires $T_i$, $Z_i$, and $R_i$ to be chosen so that $Z_i$ and $R_i$ are not deterministic conditional on $T_i$. For example, if $T_i = X_{\mathcal{N}_i}$, then $Z_i$ and $R_i$ should depend at least partially on units outside of $\mathcal{N}_i$.

Theorem \ref{th: hajek consistency bernoulli} is a consistency result. It states that $\hat{\tau}^\Haj_k$ converges to $\tau$, and that the absolute value of $\Delta^\Haj$, the H\'{a}jek weighted contrast between units given by \eqref{eq: contrast general}, is asymptotically a lower bound for $\tau$. 

\begin{theorem} \label{th: hajek consistency bernoulli}
It holds that
\begin{align} 
 |\Delta^\Haj| & \leq \max(\hat{\tau}_1^\Haj, \hat{\tau}_2^\Haj)  \label{eq: hajek consistency contrast bernoulli} 
 \end{align}
If Assumption \ref{as: design bernoulli} holds, then 
\begin{align}
 \hat{\tau}_k^\Haj = \tau + O_P\left(N^{1/2}\right), \qquad k=1,2  \label{eq: hajek consistency Phi bernoulli}
\end{align}
in which case \eqref{eq: hajek consistency contrast bernoulli} implies $|\Delta^\Haj| \leq \tau + O_P\left(N^{1/2}\right)$

\end{theorem}

Theorem \ref{th: hajek variance estimation bernoulli} pertains to variance estimation. It states that $\hat{V}_k$ concentrates to its expectation which upper bounds the variance of $\hat{\tau}_k^\Haj$, and gives conditions under which the bound will be tight.

\begin{theorem} \label{th: hajek variance estimation bernoulli}
Let Assumption \ref{as: design bernoulli} hold. It holds that
\begin{align} \label{eq: hajek variance concentration bernoulli}
\hat{V}_k = \mathbb{E}[\hat{V}_k] + O_P\left(N^{1/2} \log N\right),\qquad k=1,2.
\end{align}
Additionally, it holds that 
\begin{align} \label{eq: hajek variance bias bernoulli}
\mathbb{E} \hat{V}_k \geq \Var \hat{\tau}_k^\Haj,
\end{align} 
with equality if all of the following conditions hold: 
\begin{enumerate}
  \item The exposure mappings $\{T_i\}_{i=1}^N$ are constant
  \item $P(\bar{Z}_i=a) > 0$ for all $i \in [N]$ and $a \in \{0,1\}$
  \item $P(\bar{Z}_i=a, \bar{Z}_j = b) > 0 $ for all $i,j \in [N]$ such that $i \neq j$, and $a,b \in \{0,1\}$.
\end{enumerate}
\end{theorem}

 Theorem \ref{th: hajek coverage bernoulli} gives coverage properties for the proposed confidence interval given by \eqref{eq: CI general}. It states that if the variance estimators $\hat{V}_k$ are lower bounded by a vanishing fraction of $N$, then the interval will have at least the desired asymptotic coverage level.

\begin{theorem} \label{th: hajek coverage bernoulli}
Let Assumption \ref{as: design bernoulli} hold. For any $\epsilon > 0$, let $\widetilde{V}_k(\phi)$ denote the thresholded variance estimate given by
\begin{align}\label{eq: V tilde bernoulli}
\widetilde{V}_k(\phi) = \max\left\{ \hat{V}_k(\phi)\ , \ \frac{N^{2/3+\epsilon}}{z_{1-\frac{\alpha}{2}}^2\cdot \alpha/2}  \right\} 
\end{align}
 and let $LB_{1-\alpha}$ denote
 \begin{align} \label{eq: LB}
LB_{1-\alpha} = \max\left\{ \min_{\phi \in \{0,1\}^N}  \hat{\tau}_1^\Haj(\phi) - z_{1-\frac{\alpha}{2}}\sqrt{\widetilde{V}_1(\phi)} \, , \, \min_{\phi \in \{0,1\}^N}  \hat{\tau}_2^\Haj(\phi) - z_{1-\frac{\alpha}{2}}\sqrt{\widetilde{V}_2(\phi)} \right\}
\end{align}
Then $\tau \geq LB_{1-\alpha}$ with probability converging to at least $1-\alpha$.
\end{theorem} 

In practice, we expect usage of $\tilde{V}_k(\phi)$ rather than $\hat{V}_k(\phi)$ to make little or no difference, as the two terms are equal except in cases involving interval half-widths of order $O(N^{1/3})$ or smaller. This is a vanishing fraction of the usual $O(\sqrt{N})$ rate that arises for estimation of population sizes under simple random sampling, and we do not expect the interval widths found by \eqref{eq: LB}, which are widened by the optimization over $\phi$, to be so small in practice.

\paragraph{Additional Experiment Designs} We have considered independent Bernoulli randomization primarily as a starting point for its simplicity. In the supplemental materials, we generalize the theorems to include two-stage experiment designs, in which the size of the design neighborhoods may grow with $N$. The result will allow contrasts (i.e., $\bar{Z}$) involving not only the treatment of each unit and a bounded number of close friends, but also the number of treated units in some broader division such as geographic region, or overlapping collections of neighboring geographic regions, in which the region sizes may grow at rate $O(N^{1-\delta})$ for any $\delta>0$. 



\section{Examples} \label{sec: examples}

\subsection{Simulation Study} \label{sec: simulations}

Table \ref{table: simulations} shows the results of a simulation study in which the outcomes depended on both direct and indirect treatment. 
For all estimands except $\tau^{\textup{nonneighbors}}$, outcomes were generated according to 
\begin{align} \label{eq: sim model}
 Y_i = 1\{ \textup{if } u_i \leq \alpha_{0i} + \alpha_{1i}\cdot X_i + \alpha_{2i} \cdot W_i\}
\end{align}
where the scalar parameters $u_i, \alpha_{0i}, \alpha_{1i}, \alpha_{2i}$ were randomly generated, with $u_i\sim \textup{Unif }[0,1]$, and $\alpha_{0i}, \alpha_{1i},\alpha_{2i}$ nonnegative. For simulations involving $\tau^{\textup{nonneighbors}}$, in \eqref{eq: sim model} we change $W_i$ to $\bar{W}_i$, the thresholded number of treated units in ${\mathcal N}_i \cup {\mathcal N}_i^{(2)}$ excluding unit $i$, so that units in ${\mathcal N}_i$ and ${\mathcal N}_i^{(2)}$ were equally important in determining the outcome of unit $i$. To compute the true value of the estimand, unit $i$ was considered to be affected by treatment if $Y_i$ varied as a function of $X$ under the generated parameter values, and was considered to be affected by treatment of others if $Y_i$ varied as a function of $X_{-i}$ for any value of $X_i$, and so forth.

Estimation of $\tau^{\textup{basic}}$ used the contrast variable $Z$ given by \eqref{eq: Z basic}. H\'{a}jek normalization was used on all estimators. For comparison, the treatment effect
\[ EATE = \frac{1}{N}\sum_{i=1}^N \Big( \mathbb{E}[Y_i|\bar{Z}_i = 1] - \mathbb{E}[Y_i|\bar{Z}_i = 0]\Big),\]
was also estimated, where $\bar{Z}$ followed Table \ref{table: general} for each estimand. For example, when estimating $\tau^{\textup{indirect}}$, we compared to the indirect effect given by 
\[\frac{1}{N}\sum_{i=1}^N \Big( \mathbb{E}[Y_i|W_i = 1] -  \mathbb{E}[Y_i|W_i = 0]\Big).\] 
For estimation and inference of these treatment effects, H\'{a}jek normalization and conservative variance estimates were used following \cite{gao2023causal}.\footnote{Variance estimates for EATE followed Eq. (11) (``adjusted HAC covariance estimator'') of \cite{gao2023causal} (arxiv version v3), using our $\bar{Z}_i$ for their treatment condition $T_i$. For the simulations, the correctly specified bandwidth $b_n=2$ was used.}

The results of the simulations are the following. For the estimands $\tau^{\textup{basic}}$, $\tau^{\textup{tr}}$, and $\tau^{\textup{control}}$, point estimates were unbiased (as might be expected given that the unit effects were nonnegative, consistent with the intuition of Example \ref{ex: nonnegative effects}), and the confidence intervals exhibited approximate calibration of coverage. For these estimands, the exposure mapping $T_i$ is constant, allowing for unbiased variance estimation as discussed in Theorem \ref{th: hajek variance estimation bernoulli}. For the remaining estimands, estimation was conservatively biased, and interval coverage rates were 100\%. Table \ref{table: simulations} also shows that the interval half-widths for our estimands were competitive, and often smaller than, those of comparable EATE treatment effects. Coverage of the EATE interval estimates was conservative, always exceeding 99\%. 

Our intuition for the smaller intervals of our approach is the following. Whereas treatment effect estimation uses $\Delta^\Haj$ as a point estimate and therefore must control its variability, we use $\Delta^\Haj$ rather as a lower bound for the point estimators $\hat{\tau}_1^\Haj$ and $\hat{\tau}_2^\Haj$. Thus, to control anticonservativeness we need to consider the variability of these point estimators, which we expect to be less than that of $\Delta^\Haj$ and also easier to estimate, as only units in $\Iscr$ are considered.

\begin{table}[t!]
\centering
 \begin{tabular}{llrrrrr|rrr}
          &   & Actual &  \multicolumn{2}{c}{$|\Delta^\Haj|/N$}   & Coverage & CI Half- &    \multicolumn{2}{c}{$\widehat{EATE}$} & EATE CI \\
Estimand & N & Value$/N$  & Avg & SD                              & of 95\% CI & width & Avg & SD & Half-width \\ 
\hline
$\tau^{\textup{basic}}$  
& 250 & 0.43  &  0.43   & 0.08   & \bf{0.94}   &  \bf{0.16}   & 0.43 & 0.08 & 0.26   \\
& 500 & 0.37  &   0.37  & 0.05   & \bf{0.96}   &   \bf{0.11}  &  0.36 & 0.05 & 0.21 \\
\hline
$\tau^{\textup{tr}}$ 
& 250 & 0.22  &   0.22 &    0.08 &   \bf{0.95}    &     \bf{0.15}  &  0.22 & 0.08 & 0.26  \\
 &  500 & 0.18  &  0.18  &   0.05 &   \bf{0.95}   &      \bf{0.11} &  0.18& 0.05 &  0.20 \\
\hline
$\tau^{\textup{control}}$ 
& 250 & 0.23  &   0.23  &   0.10 &   \bf{0.95}    &    \bf{0.16}  &  0.23 & 0.10 & 0.29  \\
&  500 & 0.21  &  0.21   &  0.06  &   \bf{0.96}   &     \bf{ 0.12} &  0.21 & 0.06 &  0.23 \\
\hline
$\tau^{\textup{indirect}}$ 
& 250& 0.37 &   0.22 &    0.06 &    1.0000   &    \bf{0.15}  &  0.22 & 0.06 & 0.20   \\
& 500 &     0.33 &   0.20  &   0.04  &   1.0000   &      \bf{0.10}  & 0.19 & 0.04 &  0.16 \\ 
\hline
$\tau^{\textup{nonneighbor}}$ 
& 250 & 0.37  &  0.22  &   0.07   &  1.0000    &     0.21  &   0.22 & 0.06 & 0.20    \\
& 500 &   0.33 &   0.20 &    0.04  &   1.0000    &     0.18  &  0.19 & 0.04 & 0.16 
\end{tabular}
 \caption{Results of Simulation Study. Bold values indicate approximate 95\% coverage, or that CI half-widths were smaller than those of comparable treatment effect. CI Half-width defined as average distance from point estimate to the interval boundary. For our estimands, $|\Delta^\Haj|/N$ was point estimate. 500 trials per simulation.}\label{table: simulations}
\end{table}

\subsection{Social Networks and the Decision to Insure}

We analyze an experiment that was originally described in \cite{cai2015social}, and re-analyzed in \cite{gao2023causal}.\footnote{data available at \texttt{https://www.openicpsr.org/openicpsr/project/113593/version/V1/view}} In this experiment, rural farmers were randomized to receive high or low information regarding a crop insurance product, with opportunity to purchase the product at the end of the information session. Before attending their assigned sessions, the farmers could communicate informally with others who had attended earlier sessions (sessions were administered in two rounds, with a three day gap between them), allowing for sharing of information through social contacts.  Social network information was elicited, with the farmers instructed to list 5 close friends with whom they specifically discussed rice production or financial issues. Farmers assigned to a second round low-information session were more likely to purchase insurance if more of their listed friends in the first round were assigned to a high-information session.

Table \ref{table: cai} in the supplement shows results from the original regression analysis of \cite{cai2015social}, which found that for farmers assigned to to low information sessions, each friend assigned to a first round high-information session caused an 8.9\% increase in insurance purchasing rates (95\% CI: [4.5\%, 13.2\%]). In their analysis, it was assumed that a unit's outcome depends only on the information received by themselves and their listed friends. To justify this assumption, a subjective argument was made that the three day gap between information sessions was long enough for information to reach direct friends, but too short for information to diffuse across an entire village. 

In our analysis, we make no assumptions on interference and estimate $\tau^{\textup{control}}$, the number of second round units whose outcome under direct control (i.e., assignment to a low info session) would be affected by the assignment of first round farmers into high and low information sessions. Interference is permitted between distant or even disconnected units; for example, a farmer might ask a question during an information session, affecting the information received by (and subsequently shared by) all farmers at the session. To construct our estimators, we require knowledge of the randomization design, which used stratification and is described in \cite{gao2023causal}. 

Our asymptotic lower bound equals 23\%, which may be interpreted as meaning that at least 23\% of the 880 second round farmers, if assigned to a low information session, would be affected by information given to the first round farmers (1-sided 95\% CI: at least 9\%). This point estimate of 23\% is equal to a H\'{a}jek weighted comparison of the second round units receiving low information, grouped according to their number of friends receiving high information in the previous round:
\begin{align}\label{eq: my cai specification}
 \frac{\Delta^\Haj}{N} = \frac{1}{N} \sum_{i=1}^N \left(\frac{1\{X_i=0, W_i=1\}N}{P(X_i=0,W_i=1)\hat{N}_1} - \frac{1\{X_i=0, W_i=0\}N}{P(X_i=0, W_i=0)\hat{N}_0}\right) Y_i
 \end{align}
Here $i \in [N]$ enumerates the $N = 880$ second round units with at least one first round friend; $X_i=0$ indicates that unit $i$ was assigned to a low information session; $W_i=1$ indicates that all of unit $i$'s first round friends received high information; $Y_i=1$ indicates that unit $i$ purchased insurance; and $\hat{N}_a = \sum_{i=1}^N 1\{X_i=0, W_i=a\}/P(X_i=0, W_i=a)$ for $a \in \{0,1\}$. Additional results can be found in Appendix \ref{sec: supplement examples} of the supplemental materials.

Table \ref{table: gao} in the supplement shows results from \cite{gao2023causal}, who estimate the spillover effects of having friends receive high information. They report weakly significant or non-significant spillover effects of the first round assignments on the second round units, assuming conditions on interference which are technical in nature. It may be difficult to judge whether these conditions are suitable for this application.
In contrast, no assumptions on interference are required for our estimates. For the purposes of demonstrating the presence of social influence, our estimand may be an appropriate target parameter, and has tighter, less questionable CIs. Thus it may serve as a useful complement to prior studies of this experiment.

\paragraph{Supplemental Materials} The supplement contains discussion of future work; further discussion and results for the simulations and data example; and proofs of all propositions and theorems, with generalization of Theorems \ref{th: hajek consistency bernoulli} - \ref{th: hajek coverage bernoulli} to include two-stage randomization.

 \bibliographystyle{apalike}
 {\footnotesize
 \bibliography{bibfile}
 }

\newpage

\bigskip

\begin{center}
{\large\bf Supplementary Material for ``Agnostic Characterization of Interference in Randomized Experiments''}
\end{center}






\appendix

\section{Directions for Future Work}

In principle, our approach can be extended to settings with categorical or continuous outcomes in various ways. The simplest approach is simply to binarize the outcomes, requiring a choice of threshold. Under binarization, our estimates will continue to lower bound the quantity of interest. For example, $\tau^\basic$ after outcome binarization denotes the number of units whose binarized outcome is affected by treatment, which is a subset of the units whose original outcome (i.e., pre-binarization) is affected by treatment. 

To avoid thresholding of continuous-valued outcomes, we conjecture that rank-based estimators reminiscent of \citeappendix{rosenbaum2007interference} may also be devised, or kernel-based density estimates of the outcomes conditional on receiving treatment or control. Under the latter approach, the total variation difference between the conditional densities takes the place of the diff-in-means contrast $\Delta$, although we are unsure whether the optimization approach of \eqref{eq: LB} can be successfully adapted for inference.

We expect that our approach can be adapted to most experiment designs that have been considered for estimation of treatment effects under the assumption of a correctly specified exposure model. This is because consistency and normality theorems that require a correctly specified exposure model (normally a strong assumption) can be applied to $\hat{\tau}_1$ and $\hat{\tau}_2$ without question, as these estimators only consider units in $\Iscr$, for which the exposure model condition holds by definition. 

Experiments involving group randomization or random edge formation may be of particular interest. As previously mentioned, for such experiments $\tau^\basic$ characterizes interference, as it denotes the number of units who are affected by their randomly assigned neighbors or group members. A possible approach might be similar to \citeappendix{li2019randomization} which studied peer effects, while removing assumptions regarding interference that were required by that work. Let $A \in \{0,1\}^{N \times N}$ denote a random adjacency matrix whose distribution is jointly exchangeable following \citeappendix{bickel2009nonparametric}, so that $A$ and $\Pi A \Pi^T$ are equally probable for any permutation matrix $\Pi \in \{0,1\}^{N \times N}$. For group randomization experiments, the adjacency matrix $A$ would represent the randomly assigned groups as a collection of cliques, while edge formation experiments could use various network models, such as latent space or stochastic blockmodel \citepappendix{bickel2009nonparametric,hoff2002latent}. Let $O=(O_1,\ldots,O_N) \in \{0,1\}^N$ denote binary pre-treatment covariates for the $N$ units, with covariate classes given by $\mathcal{O}_a = \{i:O_i = a\}$ for $a \in \{0,1\}$. Let the treatment variables $X = (X_1,\ldots,X_N)$ denote the thresholded number of neighbors whose observable level is 1, so that
\[ X_i = 1\left\{\sum_{j=1}^N A_{ij} O_j \geq t_{O_i}\right\}\]
where $t_0$ and $t_1$ denote thresholds for units in $\mathcal{O}_0$ and $\mathcal{O}_1$ respectively. We conjecture that conditional on the number of units in $\mathcal{O}_0$ and $\mathcal{O}_1$ who were treated (i.e., for whom $X_i=1$), inference will take simple form in which the conditional distribution of $X$ is given by sampling without replacement within $\mathcal{O}_k$ for $k=0,1$. We additionally conjecture that this property will facilitate adaptation of the necessary asymptotic results, given by Theorems \ref{th: hajek consistency bernoulli} - \ref{th: hajek coverage bernoulli}, to this setting.

\section{Additional Notes on Examples} \label{sec: supplement examples}

\paragraph{Simulations} \citeappendix{choi2023estimating} gives a proof-of-concept attempt to estimate $\tau^{\textup{indirect}}$, which was tested on the simulations described in Section \ref{sec: simulations}. The approach was found to be lacking in power compared to ours, with 1-sided intervals whose boundaries were considerably closer to zero (and included zero considerably more often), as summarized in Table \ref{table: choi 2023}.

\paragraph{Data Example}
For the estimation of $\tau^{\textup{control}}$ in the data analysis, $W_i$ was defined to equal 1 if all of unit $i$'s first round friends were assigned to high info sessions. This choice of threshold for $W_i$ was inspired by \citeappendix{choi2024new}, who found an effect threshold at 2 treated (i.e., first round high info) friends. Thresholding $W_i$ at 2 treated friends gives similar results for $\tau^{\textup{control}}$ (point estimate: 22\%, 1-sided 95\% CI: at least 8.8\%), but restricts analysis to units with at least 2 first round friends. Thresholds exceeding 2 would remove most of the units from the analysis. These results are summarized in Table \ref{table: my results}, which shows estimates of $\tau^{\textup{control}}$ for varying choices of threshold and under Bonferoni correction (which did not cause the intervals to become weakly significant or non-significant).

\begin{table}[p]
\centering
\begin{tabular}{l|ccc|ccc}
\hline
& \multicolumn{3}{c|}{\cite{choi2023estimating}} & \multicolumn{3}{c}{Our approach}\\
$N$ &  P(Significant) & Avg. $N^{-1} LB_{1-\alpha}$ & bias & P(Significant) & Avg. $N^{-1}LB_{1-\alpha}$ & bias \\
\hline
$N=250$ &  5\% & 0.0005  &-0.26  & 80\% & 0.076 & -0.15 \\
$N=500$ &  20\% & 0.0027  & -0.23  & 97\% & 0.093 & -0.13
\end{tabular}
\caption{Comparison of \cite{choi2023estimating} and our approach for estimating $\tau^{\textup{indirect}}$ on simulated data.  P(Significant) denotes the fraction of trials for which the 1-sided interval excluded zero. Avg. $N^{-1}LB_{1-\alpha}$ denotes the average value of the lower boundary (as a fracton of $N$) of the 1-sided interval. Bias reports the average difference (as a fraction of $N$) between the point estimator and $\tau^{\textup{indirect}}$.} \label{table: choi 2023}
\end{table}

\begin{table}[p]
\centering
 \begin{tabular}{l|rrr}
 \hline
 Threshold $t_i$  & Point Estimate &  1-sided 95\% CI & Bonferoni corrected \\
\hline
$\geq 1$ treated friend & 8.8\% & at least 0\% & at least 0\% \\
$\geq 2$ treated friends & 22\% & at least 8.8\% & at least 5.7\% \\
All 1st round friends treated & 23\% & at least 9.9\% & at least 7.0\%  
 \end{tabular}
 \caption{Estimates of $\tau^{\textup{control}}$ for data example of \cite{cai2015social}, under varying choices of threshold $t_i$. Bonferoni correction is for 3 tests ($\approx 98.33\%$ confidence level). } \label{table: my results}
\end{table}


Our result is significantly stronger than the proof-of-concept attempt of \citeappendix{choi2024new} to estimate the number of second round units who were affected by the first round treatments. Their analysis did not produce a point estimate, and gave a much more conservative value of 6 units (or 2\% of a matched subset of units) for the boundary of the 1-sided confidence interval, which is smaller than ours by a factor of 13. The approach of \citeappendix{choi2023estimating} also performs poorly, producing a conservative point estimate of at least 6 affected units, which was not statistically significant (95\% CI: at least zero units). 

We may also compare to an EATE-type treatment effect that considers the relative effects of receiving $(X_i,W_i)$ equal to $(0,1)$ versus $(0,0)$:
\begin{align}\label{eq: my EATE}
 \textup{treatment effect} = \frac{1}{N} \sum_{i=1}^N \left( \mathbb{E}[Y_i|X_i=0, W_i=1] - \mathbb{E}[Y_i|X_i=0, W_i=0]\right),
\end{align}
where the expectation is taken over the randomization of treatment under the experiment design. For this target parameter, the estimation method of \citeappendix{gao2023causal} gives a H\'{a}jek-normalized point estimate of 24\%, with a 95\% CI ranging in width from $[1.4\%, 43\%]$ to $[5.3\%, 40\%]$ depending on the choice of kernel matrix $K_n$, as summarized in Table \ref{table: my EATE}. 

The role of the kernel matrix $K_n$ is to choose which pairs of units should be used by the variance estimator of Eq. (11) in \citeappendix{gao2023causal}. Given an observed network, they propose a bandwidth rule to balance between bias (from excluding important pairs of units whose residuals are strongly correlated) and variance (from including too many unit pairs). In this rule, pairs of units are used for variance estimation if the length of the shortest path between them is less than or equal to a parameter $b_n$. However, it seems difficult to discern when one choice of $b_n$ will be more accurate than another, or if any of the tested choices will necessarily be accurate for the fixed size of the data at hand. As a result, the correctness of our intervals for \eqref{eq: my EATE} may be questionable.

Table \ref{table: gao} shows selected results from the analysis of \citeappendix{gao2023causal}. It shows point estimates and estimated standard errors for the effect on insurance takeup of having at least one first round friend assigned to a high information session. Under all choices of $b_n$ exceeding zero, the effect is not significant at 95\% confidence level.\footnote{\citeappendix{gao2023causal} also report weak significance using uncorrected variance estimates (which may be anticonservative), in Tables S4 and S5 of their paper.} Unlike our analysis, \citeappendix{gao2023causal} threshold the number of treated friends at 1, and includes second round units who were assigned to high info sessions. Changing both of these choices yields the above-mentioned effect of \eqref{eq: my EATE}.

Table \ref{table: cai} shows selected regression results from the original analysis of \citeappendix{cai2015social}, which we include for reference. 
\paragraph{Software} We used the commerical solver Gurobi, through MATLAB after installing the optimization toolbox YALMIP \citepappendix{lofberg2004yalmip}.

\begin{table}[p]
\centering
\begin{tabular}{l|c}
\hline
Choice of $K_n$ (i.e., criteria for $K_n(i,j)=1$ to hold) & 95\% CI \\
\hline
$b_n=2$ (Units $i$ and $j$ have path distance $\leq 2$):  & $[1.7\% \,, 43\%]$ \\
$b_n=4$ (Units $i$ and $j$ have path distance $\leq 4$): & $[2.8\% \,, 42\%]$ \\
Units $i$ and $j$ in the same village, or have friends & \\
\hskip1cm in the same village: & $[5.3\% \,, 40\%]$ \\
Units $i$ and $j$ in same strata, or have friends  &   \\
\hskip1cm in the same strata: & $[1.4\% \,, 43\%]$ 
\end{tabular}
\caption{95\% CI for EATE estimate of \eqref{eq: my EATE}, using Eq. (10) of \cite{gao2023causal} under varying choices of kernel matrix $K_n$. The last choice of $K_n$ (same strata/friends in same strata) captures dependencies that arise when treatments are assigned by stratified random sampling without replacement, or equivalently when inference is conditional on the stratified treatment rates.}
\label{table: my EATE}
\end{table}


\begin{table}[p]
\centering
\begin{tabular}{l|lll}
\hline
& Unadj & Add & Sat \\
\hline
Point Estimate: & 0.056 & 0.058 & 0.055 \\
std. err, $b_n=0$: & 0.027 & 0.027 & 0.027 \\
std. err, $b_n=2$: & 0.033 & 0.033 & 0.033  \\
std. err, $b_n=3$: & 0.031 & 0.031 & 0.031 \\
std. err, $b_n=4$: & 0.031 & 0.031 & 0.031 \\
std. err, $b_n=5$: & 0.032 & 0.032 & 0.032 \\
\end{tabular}
\caption{Estimated spillover effects reported by Table S5 of \cite{gao2023causal}, for ``two-dimensional exposure mapping'' model and ``WLS$^+$ SE'' standard errors. Regression specifications, where $z_i$ and $x_i$ respectively denote treatment and covariate variables, are as follows. Unaj: $Y_i \sim 1 + z_i$. Add: $Y_i \sim 1 + z_i + x_i$. Sat: $1 + z_i + (x_i - \bar{x}) + z_i(x_i - \bar{x})$.  } 
\label{table: gao}
\end{table}


\begin{table}[t]
\centering
 \begin{tabular}{l|ll}
 \hline
  Regression Variable & (2) & (4) \\
\hline
Invited to high info session: & 0.0298  & 0.0809   \\
&  (0.0332)  & (0.0397) \\
Fraction of friends invited to 1st  round high info session (NET): & 0.291***   & 0.444*** \\
& (0.0820)& (0.109) \\
(NET)*(invited to high info session): &  & -0.329** \\
 & & (0.161)
 \end{tabular}
 \caption{Regression results reported by Table 2 of \cite{cai2015social} for specifications (2) and (4). For the effect of treating a single friend, \cite{cai2015social} divided the NET coefficient by 5, the number of elicited friends. } \label{table: cai} 
\end{table}

\pagebreak

\section{Asymptotic Results} \label{sec: theory appendix}

The organization of this section is as follows. In Section \ref{sec: generalized assumption}, we present Assumption \ref{as: design}, a more general version of Assumption \ref{as: design bernoulli} which allows other experiment designs to be considered. We also present Theorems \ref{th: bernoulli design} and \ref{th: two stage design} which respectively describe conditions under which  Bernoulli randomization designs and two-stage randomization designs will satisfy the assumption. In Section \ref{sec: generalized theorems}, we present Theorems \ref{th: hajek consistency} -- \ref{th: hajek coverage}, which are generalized versions of Theorems \ref{th: hajek consistency bernoulli} -- \ref{th: hajek coverage bernoulli}, and respectively give results on consistency, variance estimation, and interval coverage under Assumption \ref{as: design}. Proofs of all propositions and theorems are given in Section \ref{sec: proofs}.

\subsection{Experiment Designs} \label{sec: generalized assumption}

 Assumption \ref{as: design} describes a class of experiment designs, which includes independent Bernoulli randomization and two stage randomization designs. It assumes that units can be divided into $M$ groups $G_1,\ldots,G_M$, such that the subvectors $X_{G_1}, \ldots, X_{G_M}$ of $X$ are independent random variables. 

\begin{assumption} \label{as: design}
The units may be partitioned into $M$ non-overlapping groups $G_1,\ldots,G_M \subset [N]$ such that the subvectors $X_{G_1},\ldots,X_{G_M}$ are independent random variables, and the following hold:
\begin{enumerate}
  \item For fixed constants $C >0$ and $0 < \delta \leq 1$ it holds as $N \rightarrow \infty$ that
\begin{align} \label{eq: group size}
 |G_m| \leq C \frac{N}{M} \qquad \textup{ and } \qquad |G_m| \leq N^{1-\delta} \qquad \forall m \in [M],
\end{align}
\item There exists group neighborhoods $\eta_1,\ldots,\eta_M \subset [M]$ satisfying for fixed $d > 0$ that 
\begin{align}
\label{eq: eta size} \max_m |\eta_m| & \leq d \\
\label{eq: eta overlap} \max_{\ell} \sum_{m=1}^M 1\{\ell \in \eta_{m}\} & \leq d
\end{align}
such that the design neighborhoods $\{\Gamma_i\}_{i=1}^N$ satisfy
\begin{align} \label{eq: design neighborhood}
\Gamma_i = \bigcup_{m \in \eta_{g(i)}} G_m, 
\end{align}
where $g(i)$ denotes the group of unit $i$ (that is, $g(i) = m$ if $i \in G_m$).
\item For a fixed constant $\rho > 0$ it holds for all $i,j \in [N]$ and $a,b \in \{0,1\}$ that
\begin{align}
\label{eq: Z support} P(\bar{Z}_i = a|T_i = t_i) &\geq \rho  & \forall t_i \in \mathcal{T}_i \\
\label{eq: T support} \frac{P(T_i = t_i)P(T_j = t_j)}{P(T_i = t_i, T_j = t_j)} & \leq \frac{1}{\rho} & \forall (t_i,t_j) \in \mathcal{T}_{ij} \\
\label{eq: ZZ support} P(\bar{Z_i}=a, \bar{Z}_j=b| T_i = t_i, T_j=t_j) & \in \{0\} \cup [\rho, 1] & \forall (t_i,t_j) \in \mathcal{T}_{ij}
\end{align}
where $\mathcal{T}_i$ denotes the support of $t_i$, and $\mathcal{T}_{ij}$ denotes the support of $(t_i,t_j)$
\end{enumerate}
\end{assumption}

The conditions of Assumption \ref{as: design} may be interpreted as follows. Condition \eqref{eq: group size} requires the size of the largest group to be within a constant factor of the average group size, and allows the group sizes to grow sublinearly as $N \rightarrow \infty$. Condition \eqref{eq: design neighborhood} restricts the design neighbornoods $\{\Gamma_i\}_{i=1}^N$ to equal equal each unit's own and neighboring groups, where the group neighborhoods $\{\eta_m\}_{m=1}^M$ are required by \eqref{eq: eta size}-\eqref{eq: eta overlap} to have bounded size and overlap. Conditions \eqref{eq: Z support}-\eqref{eq: ZZ support} are support conditions, which bound the magnitude of the inverse probability weights used by $\hat{\tau}_k^\Haj$ and $\hat{V}_k$.


\paragraph{Example: Bernoulli Design}

Theorem \ref{th: bernoulli design} states that the setting of Assumption \ref{as: design bernoulli}, in which the treatments $\{X_i\}_{i=1}^N$ are independent Bernoulli random variables, can be modeled under Assumption \ref{as: design} by assigning each unit to its own group, so that $M = N$.

\begin{theorem} \label{th: bernoulli design}
Let Assumption \ref{as: design bernoulli} hold. Then Assumption \ref{as: design} holds with $M = N$.
\end{theorem}

\paragraph{Example: Two-stage Design} Theorem \ref{th: two stage design} gives conditions under which $|\Gamma_i|$ may grow at rate $N^{1-\delta}$ for any $\delta > 0$, in which the treatments are randomized by a two-stage design. In this setting, the units are divided into groups, and treatment is randomized first at the group level, and then at the unit level conditional on group level treatment. The contrast variable $\bar{Z}_i$ may consider each unit's direct treatment, the treatment of their close associates $\mathcal{N}_i$, and the number of treated units in their own and neighboring groups, where the group sizes may grow with $N$ at any sublinear polynomial rate.  
 
\begin{theorem}\label{th: two stage design}
Let $g(i)$ denote the group membership of unit $i$, so that $g(i) = m$ if $i \in G_m$. Assume the following hold for fixed $C >0$, $0 < \delta \leq 1$, $d>0$, and $p > 0$ as $N \rightarrow \infty$
\begin{enumerate}
\item[A1.] The groups $G_1,\ldots,G_M$ form a partition of $[N]$ satisfying \eqref{eq: group size}, with group neighborhoods $\{\eta_m\}_{m=1}^M$ which satisfy the size and overlap conditions \eqref{eq: eta size} and \eqref{eq: eta overlap}.

\item[A2.] Each treatment subvector $X_{G_m}$ is independently generated by sampling $\theta_m$ units without replacement from $G_m$, where $\theta_1,\ldots,\theta_M$ are independent random variables with support satisfying
\begin{align}\label{eq: theta}
\operatorname{supp}(\theta_m) \subseteq \{0\} \cup \{\theta_m^{\textup{low}}, \ldots, \theta_m^{\textup{high}}\} \cup \{|G_m|\}
\end{align}
for $\theta_m^{\textup{low}} \geq p |G_m|$ and $\theta_m^{\textup{high}} \leq (1-p)|G_m|$, and which additionally satisfy
\begin{align}\label{eq: theta p}
P(\theta_m = t) \geq p \qquad \forall t \in \operatorname{supp}(\theta_m)
\end{align}

\item[A3.] For $i \in [N]$, $\bar{Z}_i$ satisfies
\begin{align} \label{eq: two stage Z}
\bar{Z}_i &= \zeta_i(X_{\mathcal{N}_i}, \theta_{\eta_{g(i)}}), & \forall i \in [N], 
\end{align}
for some set of mappings $\{\zeta_i\}_{i=1}^N$ and unit neighborhoods $\{\mathcal{N}_i\}_{i=1}^N$. Each neighborhood $\mathcal{N}_i$ is a subset of unit $i$'s own and neighboring groups so that
\begin{align}
\label{eq: two stage N def} \mathcal{N}_i \subset \bigcup_{m \in \eta_{g(i)}} G_m 
\end{align}
holds, and also satisfies $i \in \mathcal{N}_i$ and the size and overlap conditions
\begin{align}
\label{eq: two stage N size and overlap} \max_i |\mathcal{N}_i| \leq d \qquad \textup{ and } \qquad \max_i \sum_{j=1}^N 1\{i \in \mathcal{N}_j\} \leq d
\end{align}

\item[A4.] Either $T_i=1$ for all $i \in [N]$, or $T_i = X_i$ for all $i \in [N]$, or $T_i = X_{\mathcal{N}_i}$ for all $i \in [N]$
\item[A5.] It holds that
\begin{align} \label{eq: two stage Z support}
 P(\bar{Z}_i = a \,|\, T_i=t) & \geq p  & \forall\ i \in [N], a \in \{0,1\}, t \in \mathcal{T}_i
\end{align}
where $\mathcal{T}_i$ is the support of $T_i$.
\end{enumerate}
Then Assumption \ref{as: design} holds. Additionally, Assumption \ref{as: design} continues to hold if Conditions A3 and A4 are replaced by Condition A6, given below:
\begin{enumerate}
  \item[A6.] For $i \in [N]$, $\bar{Z}_i$ and $T_i$ satisfy
  \begin{align}\label{eq: condition A6}
  \bar{Z}_i & = \zeta_i(\theta_{\eta_{g(i)}}) \qquad \textup{ and }  & T_i & = X_{G_{g(i)}}, & \qquad \forall i \in [N],
  \end{align}
  for some set of mappings $\{\zeta_i\}_{i=1}^N$.
  
\end{enumerate}
\end{theorem}

The conditions Theorem \ref{th: two stage design} may be interpreted as follows. The conditions \eqref{eq: theta} and \eqref{eq: theta p} disallows treatment of a vanishing fraction of units, and requires the group level treatment probabilities to be bounded away from zero. The condition \eqref{eq: two stage Z} requires $\bar{Z}_i$ to depend on each unit's own treatment and the treatment of a bounded number of close associates, and on the number of treated units in their own and neighboring groups. \eqref{eq: two stage N def} restricts $\mathcal{N}_i$ to come from unit $i$'s own and neighboring groups, with size and overlap conditions given by \eqref{eq: two stage N size and overlap}. \eqref{eq: two stage Z support} places a support condition on $\bar{Z}_i$. \eqref{eq: condition A6}, which replaces conditions A3 and A4, allows the estimand to count the number of units who are affected by treatments outside of their own group, whose size may increase with $N$.



\subsection{Main Theorems} \label{sec: generalized theorems}

Theorem \ref{th: hajek consistency} is a consistency result. It states that $\hat{\tau}^\Haj_k$ converges to $\tau$, and that the absolute value of $\Delta^\Haj$, the H\'{a}jek weighted contrast between units given by \eqref{eq: contrast general}, is asymptotically a lower bound for $\tau$. Table \ref{table: tau hat} gives examples of \eqref{eq: contrast general} for the estimands considered in the paper.

\begin{theorem} \label{th: hajek consistency}
It holds that
\begin{align} 
 |\Delta^\Haj| & \leq \max(\hat{\tau}_1^\Haj, \hat{\tau}_2^\Haj)  \label{eq: hajek consistency contrast} 
 \end{align}
If Assumption \ref{as: design} holds, then 
\begin{align}
 \hat{\tau}_k^\Haj = \tau + O_P\left(\frac{N}{\sqrt{M}}\right), \qquad k=1,2  \label{eq: hajek consistency Phi}
\end{align}
in which case \eqref{eq: hajek consistency contrast} implies $|\Delta^\Haj| \leq \tau + O_P\left(\frac{N}{\sqrt{M}}\right)$

\end{theorem}

Theorem \ref{th: hajek variance estimation} pertains to variance estimation. It states that $\hat{V}_k$ concentrates to its expectation which upper bounds the variance of $\hat{\tau}_k^\Haj$, and gives conditions under which the bound will be tight.

\begin{theorem} \label{th: hajek variance estimation}
Let Assumption \ref{as: design} hold. It holds that
\begin{align} \label{eq: hajek variance concentration}
\hat{V}_k = \mathbb{E}[\hat{V}_k] + O_P\left(\frac{N^2 \log M}{M^{3/2}}\right),\qquad k=1,2.
\end{align}
Additionally, it holds that 
\begin{align} \label{eq: hajek variance bias}
\mathbb{E} \hat{V}_k \geq \Var \hat{\tau}_k^\Haj,
\end{align} 
with equality if all of the following conditions hold: 
\begin{enumerate}
  \item The exposure mappings $\{T_i\}_{i=1}^N$ are constant
  \item $P(\bar{Z}_i=a) > 0$ for all $i \in [N]$ and $a \in \{0,1\}$
  \item $P(\bar{Z}_i=a, \bar{Z}_j = b) > 0 $ for all $i,j \in [N]$ such that $i \neq j$, and $a,b \in \{0,1\}$.
\end{enumerate}
\end{theorem}

 Theorem \ref{th: hajek coverage} gives coverage properties for the proposed confidence interval given by \eqref{eq: CI general}. It states that if the variance estimators $\hat{V}_k$ are lower bounded by a vanishing fraction of $\frac{N^2}{M}$, then the interval will have at least the desired asymptotic coverage level.

\begin{theorem} \label{th: hajek coverage}
Let Assumption \ref{as: design} hold. For any $\epsilon > 0$, let $\widetilde{V}_k(\phi)$ denote the thresholded variance estimate given by
\begin{align}\label{eq: V tilde}
\widetilde{V}_k(\phi) = \max\left\{ \hat{V}_k(\phi)\ , \ \frac{1}{z_{1-\frac{\alpha}{2}}^2\cdot \alpha/2} \cdot \frac{N^2}{M^{4/3-\epsilon}} \right\} 
\end{align}
 and let $LB_{1-\alpha}$ denote
 \begin{align}
LB_{1-\alpha} = \max\left\{ \min_{\phi \in \{0,1\}^N}  \hat{\tau}_1^\Haj(\phi) - z_{1-\frac{\alpha}{2}}\sqrt{\widetilde{V}_1(\phi)} \, , \, \min_{\phi \in \{0,1\}^N}  \hat{\tau}_2^\Haj(\phi) - z_{1-\frac{\alpha}{2}}\sqrt{\widetilde{V}_2(\phi)} \right\}
\end{align}
Then $\tau \geq LB_{1-\alpha}$ with probability converging to at least $1-\alpha$.
\end{theorem}

\paragraph{Remark} Theorems \ref{th: hajek consistency bernoulli} - \ref{th: hajek coverage bernoulli} can be seen to follow directly from Theorems \ref{th: hajek consistency} - \ref{th: hajek coverage}, by substituting $M = N$ and applying Theorem \ref{th: bernoulli design}.

\paragraph{Remark} The rates appearing in Theorems \ref{th: hajek consistency} - \ref{th: hajek coverage} may be more familiar or intuitive after normalizing the estimand by $N$, so that it equals $1 - |\Iscr|/N$, the fraction of units not in $\Iscr$, and then normalizing $\hat{\tau}_k^\Haj$ and $\hat{V}_k$ accordingly by $N$ and $N^2$ accordingly. For this choice of normalization, Theorem \ref{th: hajek consistency} implies $O_P(1/\sqrt{M})$ estimation error, as might be expected for an experiment whose randomization was constructed from $M$ independent draws. Similarly, Theorem \ref{th: hajek variance estimation} implies $o_P(1/M)$ error for the variance estimator after normalization by $N$, with Theorem \ref{th: hajek coverage} lower bounding the estimated variance by $o_P(1/M)$ as well.


\section{Proofs} \label{sec: proofs}

Propositions \ref{th: diff in means} - \ref{th: V evaluation} are proven in Section \ref{sec: proofs 1}. Theorems \ref{th: bernoulli design} and \ref{th: two stage design} are proven in Section \ref{sec: proofs 2}. Proofs of Theorems \ref{th: hajek consistency}, \ref{th: hajek variance estimation}, and \ref{th: hajek coverage} (which are generalized versions of Theorems \ref{th: hajek consistency bernoulli}, \ref{th: hajek variance estimation bernoulli}, and \ref{th: hajek coverage bernoulli}), including auxilliary lemmas, are respectively contained in Sections \ref{sec: proofs 3}, \ref{sec: proofs 4}, and \ref{sec: proofs 5}.

\subsection{Proof of Propositions \ref{th: diff in means} - \ref{th: V evaluation}} \label{sec: proofs 1}

\begin{proof}[Proof of Proposition \ref{th: diff in means}]
It can be seen that
\begin{align}
\label{eq: proof diff in means 1} \hat{\tau}_1 & = N - \sum_{i \in \Iscr} \left(\frac{X_i Y_i}{P(X_i=1)} + \frac{(1-X_i)(1-Y_i)}{P(X_i=0)}\right) \\
\label{eq: proof diff in means 2} & \geq N -  \sum_{i=1}^N \left(\frac{X_i Y_i}{P(X_i=1)} + \frac{(1-X_i)(1-Y_i)}{P(X_i=0)}\right) \\
\label{eq: proof diff in means 3} & =  N - \sum_{i=1}^N \left(\frac{X_i Y_i}{P(X_i=1)} - \frac{(1-X_i)Y_i}{P(X_i=0)} + \frac{1-X_i}{P(X_i=0)}\right) \\
\label{eq: proof diff in means 4} & = \sum_{i=1}^N \left(\frac{1-X_i}{P(X_i=0)} - \frac{X_i}{P(X_i=1)}\right)Y_i + O_P(N^{1/2})
\end{align}
where \eqref{eq: proof diff in means 1} uses the definition of $\hat{\tau}_1$ given in \eqref{eq: tau hat} and that 
\[ P(H_i=1) = Y_i P(X_i=1) + (1-Y_i) P(X_i=0) \qquad \forall\ i \in \Iscr \]
as $Y_i$ is constant if $i \in \Iscr$; \eqref{eq: proof diff in means 2} holds as the terms in the summand are nonnegative; \eqref{eq: proof diff in means 3} is simple rearrangement of terms, and \eqref{eq: proof diff in means 4} uses \eqref{eq: condition X} assumed by the lemma. 

By essentially identical arguments, it holds that
\begin{align} 
\nonumber \hat{\tau}_2 & = N - \sum_{i\in \Iscr} \left(\frac{X_i(1-Y_i)}{P(X_i=1)} + \frac{(1-X_i)Y_i}{P(X_i=0)}
\right)\\
 \nonumber & \geq N - \sum_{i=1}^N \left(\frac{X_i(1-Y_i)}{P(X_i=1)} + \frac{(1-X_i)Y_i}{P(X_i=0)}
\right)\\
 \nonumber & \geq N - \sum_{i=1}^N \left(-\frac{X_i Y_i}{P(X_i=1)} + \frac{(1-X_i)Y_i}{P(X_i=0)} + \frac{X_i}{P(X_i=1)} \right)\\
& = \sum_{i=1}^N \left(\frac{X_i}{P(X_i=1)} - \frac{1-X_i}{P(X_i=0)}\right)Y_i + O_P(N^{1/2}) \label{eq: proof diff in means 5}
\end{align}
Combining \eqref{eq: proof diff in means 4} and \eqref{eq: proof diff in means 5} yields
\[ \max(\hat{\tau}_1,\hat{\tau}_2) \geq \left| \sum_{i=1}^N \left(\frac{X_i}{P(X_i=1)} -  \frac{1-X_i}{P(X_i=0)}\right) Y_i\right|,\]
proving the result.
\end{proof}

\begin{proof}[Proof of Proposition \ref{th: diff in means Z}]
Substitute $Z_i$ for $X_i$ in the proof of Theorem \ref{th: diff in means}. 
\end{proof}

\begin{proof}[Proof of Proposition \ref{th: tight}]
By \eqref{eq: phi general} and the conditions of the proposition, it holds for $a\in \{0,1\}$ that 
\begin{align}\label{eq: tightness}
1\{\bar{Z}_i = a\} Y_i = \begin{cases} 1\{\bar{Z}_i = a\}\tilde{f}_i(T_i) & \textup{ if } i \in \Iscr \\ 
1\{\bar{Z}_i = a\} \bar{Z}_i & \textup{ if } i \notin \Iscr \end{cases} 
\end{align}
where we have used that $\bar{Z}_i \in \{0,1\}$ implies $X \in R_i$ by \eqref{eq: Z bar}. Plugging \eqref{eq: tightness} into $\Delta^{\bar{Z}}$ yields
\begin{align*}
\Delta^{\bar{Z}} & = \sum_{i \in \Iscr} \left(\frac{1\{\bar{Z}_i=1\}}{P(\bar{Z}_i = 1|T_i)}  - \frac{1\{\bar{Z}_i = 0\}}{P(\bar{Z}_i = 0|T_i)}\right)\tilde{f}_i(T_i) + \sum_{i \notin\Iscr} \left(\frac{1\{\bar{Z}_i=1\}}{P(\bar{Z}_i = 1|T_i)}  - \frac{1\{\bar{Z}_i = 0\}}{P(\bar{Z}_i = 0|T_i)}\right)\bar{Z}_i 
\end{align*}
It can be seen that the first summation (which is over $i \in \Iscr$) has expectation zero, while the remaining summation has expectation $|\{i: i \notin \Iscr\}| = N - |\Iscr|$, proving the proposition.
\end{proof}

\begin{proof}[Proof of Proposition \ref{th: Phi evaluation}]
Let $\tilde{f}_{ki}$ for $i\in \Iscr$ denote the mapping
\begin{align} \label{eq: f_ki}
\tilde{f}_{ki}(T_i) = \begin{cases} \tilde{f}_i(T_i) & \textup{ if $k = 1$} \\ 1 - \tilde{f}_i(T_i) & \textup{ if $k = 2$} \end{cases}
\end{align}
As $Y_{ki} = \tilde{f}_{ki}(T_i)$ whenever $X \in R_i$ and $i \in \Iscr$, it follows that
\begin{align}
\label{eq: phi evaluation 1} P(X \in R_i, Z_i = Y_{ki}|T_i) & = P(X \in R_i, Z_i = \tilde{f}_{ki}(T_i)|T_i) \\
\nonumber & = \sum_{a=0}^1 1\{\tilde{f}_{ki}(T_i)=a\} \ P(X \in R_i, Z_i = a | T_i, X \in R_i) \\
\nonumber  & = \tilde{f}_{ki}(T_i) \cdot P(X \in R_i, Z_i = 1|T_i) + (1-\tilde{f}_{ki}(T_i)) \cdot P(X \in R_i, Z_i=0|T_i) \\
\label{eq: phi evaluation 4} & = Y_{ki} \cdot P(X \in R_i, Z_i = 1|T_i) + (1-Y_{ki}) \cdot P(X \in R_i, Z_i=0|T_i) 
\end{align}
where \eqref{eq: phi evaluation 1} holds because $i \in \Iscr$, and \eqref{eq: phi evaluation 4} holds because $X \in R_i$ and $i \in \Iscr$, so that we $Y_{ki} = \tilde{f}_{ki}(T_i)$
\end{proof}

\begin{proof}[Proof of Proposition \ref{th: V evaluation}]

Let $\tilde{f}_{ki}$ be given by \eqref{eq: f_ki}. Since $i \in \Iscr$ and $\bar{Z}_i = Y_{ki}$ (which implies $X \in R_i$), it holds that $Y_{ki} = \tilde{f}_{ki}(T_i)$, and hence that
\begin{align}
\nonumber \mathbb{E}[v_i | T_i] & = \mathbb{E}\left[ \frac{N}{\hat{N}_{Y_{ki}}} \cdot \frac{1\{\bar{Z}_i = Y_{ki}\}}{P(\bar{Z}_i = Y_{ki}|T_i)} | T_i\right] \\
\label{eq: V eval 1} & = P(X \in R_i|T_i) \cdot \mathbb{E}\left[ \frac{N}{\hat{N}_{Y_{ki}}} \cdot \frac{1\{\bar{Z}_i = Y_{ki}\}}{P(\bar{Z}_i = Y_{ki}|T_i)} | T_i, X \in R_i\right] \\
\label{eq: V eval 2} & = P(X \in R_i|T_i) \cdot \sum_{a = 0,1} 1\{Y_{ki} = a\} \mathbb{E}\left[ \frac{N}{\hat{N}_a} \cdot \frac{1\{\bar{Z}_i = a\}}{P(\bar{Z}_i = a|T_i)} | T_i, X \in R_i\right] \\
\label{eq: V eval 3} & = \sum_{a = 0,1} 1\{Y_{ki} = a\} \cdot P(X \in R_i|T_i) \mathbb{E}\left[ \frac{N}{\hat{N}_a} \cdot \frac{1\{\bar{Z}_i = a\}}{P(\bar{Z}_i = a|T_i)} | T_i, X \in R_i\right] \\
\label{eq: V eval 4} & = \sum_{a = 0,1} 1\{Y_{ki} = a\} \mathbb{E}\left[ \frac{N}{\hat{N}_a} \cdot \frac{1\{\bar{Z}_i = a\}}{P(\bar{Z}_i = a|T_i)} | T_i \right] 
\end{align}
where \eqref{eq: V eval 1} holds since $1\{\bar{Z}_i = Y_{ki}\} = 0$ if $X \notin R_i$; \eqref{eq: V eval 2} holds since the $Y_{ki}$ term in the conditional expectation equals $\tilde{f}_{ki}(T_i)$, and hence may be treated as a constant equal to the observed value of $Y_{ki}$; \eqref{eq: V eval 3} is a simple rearrangement of terms; and \eqref{eq: V eval 4} uses that $1\{\bar{Z}_i = a\} = 0$ for $a \in \{0,1\}$ if $X \notin R_i$.

By similar reasoning, if $i,j \in \Iscr$ and $1\{\bar{Z}_i = Y_{ki}, \bar{Z}_j = Y_{kj}\} = 1$ (which implies that $X \in R_i \cap R_j$), it holds that $Y_{ki} = \tilde{f}_{ki}(T_i)$ and $Y_{kj} = \tilde{f}_{kj}(T_j)$, and hence that
\begin{align*}
\mathbb{E}[v_i v_i| T_i, T_j]  & = \mathbb{E}\left[ \frac{1\{\bar{Z}_i = Y_{ki}\}N }{P(\bar{Z}_i = Y_{ki}|T_i) \hat{N}_{Y_{ki}}} \cdot \frac{1\{\bar{Z}_i = Y_{ki}\}N }{P(\bar{Z}_i = Y_{ki}|T_i) \hat{N}_{Y_{ki}}} | T_i, T_j \right]  \\
& = P(X \in R_i \cap R_j)\mathbb{E}\left[ \frac{1\{\bar{Z}_i = Y_{ki}\}N }{P(\bar{Z}_i = Y_{ki}|T_i) \hat{N}_{Y_{ki}}} \cdot \frac{1\{\bar{Z}_i = Y_{ki}\}N }{P(\bar{Z}_i = Y_{ki}|T_i) \hat{N}_{Y_{ki}}} | T_i, T_j, X \in R_i \cap R_j \right]  \\
& = P(X \in R_i \cap R_j) \sum_{a=0}^1 \sum_{b=0}^1 1\{Y_{ki}=a, Y_{kj} = b\} \cdot \\
& \hskip1cm \mathbb{E}\left[ \frac{1\{\bar{Z}_i = a\}N }{P(\bar{Z}_i = a|T_i) \hat{N}_{a}} \cdot \frac{1\{\bar{Z}_i = b\}N }{P(\bar{Z}_i = b|T_i) \hat{N}_{b}} | T_i, T_j, X \in R_i \cap R_j \right]  \\
& = \sum_{a=0}^1 \sum_{b=0}^1 1\{Y_{ki}=a, Y_{kj} = b\} \cdot \\
& \hskip1cm P(X \in R_i \cap R_j) \mathbb{E}\left[ \frac{1\{\bar{Z}_i = a\}N }{P(\bar{Z}_i = a|T_i) \hat{N}_{a}} \cdot \frac{1\{\bar{Z}_i = b\}N }{P(\bar{Z}_i = b|T_i) \hat{N}_{b}} | T_i, T_j, X \in R_i \cap R_j \right]  \\
& = \sum_{a=0}^1 \sum_{b=0}^1 1\{Y_{ki}=a, Y_{kj} = b\} \mathbb{E}\left[ \frac{1\{\bar{Z}_i = a\}N }{P(\bar{Z}_i = a|T_i) \hat{N}_{a}} \cdot \frac{1\{\bar{Z}_i = b\}N }{P(\bar{Z}_i = b|T_i) \hat{N}_{b}} | T_i, T_j \right]  
\end{align*}
and
\begin{align*}
P(\bar{Z}_i = Y_{ki}, \bar{Z}_j = Y_{kj} | T_i, T_j) & = P(X \in R_i \cap R_j|T_i, T_j) P(\bar{Z}_i = Y_{ki}, \bar{Z}_j = Y_{kj} | T_i, T_j, X \in R_j \cap R_j) \\
& = P(X \in R_i \cap R_j|T_i, T_j)  \cdot \sum_{a=0}^1 \sum_{b=0}^1 1\{Y_{ki}=a, Y_{kj}=b\} \cdot \\
& \hskip1cm P(\bar{Z}_i = a, \bar{Z}_j = b | T_i, T_j, X \in R_j \cap R_j) \\
& = \sum_{a=0}^1 \sum_{b=0}^1 1\{Y_{ki}=a, Y_{kj}=b\} \cdot \\
& \hskip1cm P(X \in R_i \cap R_j|T_i, T_j) P(\bar{Z}_i = a, \bar{Z}_j = b | T_i, T_j, X \in R_j \cap R_j) \\
& = \sum_{a=0}^1 \sum_{b=0}^1 1\{Y_{ki}=a, Y_{kj}=b\} \cdot P(\bar{Z}_i = a, \bar{Z}_j = b | T_i, T_j) \\
\end{align*}
\end{proof}

\subsection{Proof of Theorems \ref{th: bernoulli design} and \ref{th: two stage design}} \label{sec: proofs 2}

\begin{proof}[Proof of Theorem \ref{th: bernoulli design}]
Given design neighborhoods $\Gamma_i$ satisfying \eqref{eq: Gamma 1} and \eqref{eq: Gamma 2}, for $i \in [N]$ let $G_i = \{i\}$ and $\eta_i = \Gamma_i$. The conditions \eqref{eq: group size} - \eqref{eq: ZZ support} of Assumption \ref{as: design} can be verified as follows:
\begin{enumerate}
  \item \eqref{eq: group size} and \eqref{eq: design neighborhood} follow from the definitions of $\{G_i\}$ and $\{\eta_i\}$
  \item \eqref{eq: eta size} and \eqref{eq: eta overlap} follow because $\eta_i = \Gamma_i$, which satisfies \eqref{eq: Gamma 1} and \eqref{eq: Gamma 2}.
  \item \eqref{eq: Z support} holds because it is assumed as \eqref{eq: pmin}
  \item To show \eqref{eq: T support} and \eqref{eq: ZZ support}, we observe that $|\Gamma_i \cup \Gamma_j| \leq 2d$, and hence by \eqref{eq: pmin X} and independence of the treatment assignments it holds that
  \[ P(X_{\Gamma_i \cup \Gamma_j} = x) \geq \min(p, 1-p)^{2d} \qquad \forall x \in \{0,1\}^{|\Gamma_i \cup \Gamma_j|}\]
  As a result, for any $(t_i, t_j) \in \operatorname{supp}(T_i, T_j)$ it holds that
  \begin{align} \label{eq: bernoulli design 1}
    P(T_i = t_i, T_j = t_j) \geq \min(p, 1-p)^{2d} 
  \end{align}
  where we have used that $(T_i, T_j)$ depend on $X$ only through $X_{\Gamma_i \cup \Gamma_j}$. Similarly, for any $(a,b,t_i,t_j) \in \operatorname{supp}(\bar{Z}_i, \bar{Z}_j, T_i, T_j)$ it also holds that
  \begin{align}\label{eq: bernoulli design 2}
 P(\bar{Z}_i = a, \bar{Z}_j = b, T_i = t_i, T_j = t_j) \geq \min(p, 1-p)^{2d}, 
 \end{align}
where we have used that $(\bar{Z}_i, \bar{Z}_j, T_i, T_j)$ depends on $X$ only through $X_{\Gamma_i \cup \Gamma_j}$. By \eqref{eq: bernoulli design 1} and \eqref{eq: bernoulli design 2}, it follows that \eqref{eq: T support} and \eqref{eq: ZZ support} hold for $\rho = \min(p, 1-p)^{2d}$.
\end{enumerate}

\end{proof}

We prove Theorem \ref{th: two stage design} in two parts. First we prove the result under the original conditions A1 - A5. We then show that the result continues to hold when Conditions A3 and A4 are replaced by Condition A6, as stated by the theorem.

\begin{proof}[Proof of Theorem \ref{th: two stage design}, using Conditions A1 - A5]
The conditions \eqref{eq: group size} - \eqref{eq: ZZ support} of Assumption \ref{as: design} can be verified as follows:
\begin{enumerate}
  \item \eqref{eq: group size}, \eqref{eq: eta size}, \eqref{eq: eta overlap}, and \eqref{eq: Z support} hold because they are explicitly assumed by the conditions of the theorem.
  \item \eqref{eq: design neighborhood} holds because $\bar{Z}_i$ and $T_i$ depend only on $X_i$, $X_{\mathcal{N}_i}$, and $\theta_{\eta_{g(i)}}$, all of which  depend only on the treatment of unit $i$'s own and neighboring groups.
  \item To show \eqref{eq: T support} and \eqref{eq: ZZ support}, we will show that for $\rho$ given by
    \[ \rho = p^{2d} \left(\frac{p}{1+(1-p)2d}\right)^{2d}  \]
    it holds that
    \begin{align}
    \label{eq: sampling p bound T} P(T_i = t_i, T_j = t_j) & \geq  \rho \qquad \forall (t_i,t_j) \in \operatorname{supp}(T_i,T_j)
    \end{align}
    which implies \eqref{eq: T support}, and it also holds that 
    \begin{align}
    \label{eq: sampling p bound ZZ} P(\bar{Z}_i=a, \bar{Z}_j=b, T_i=t_i, T_j=t_j) & \geq \rho \qquad \forall (a,b,t_i, t_j) \in \operatorname{supp}(\bar{Z}_i, \bar{Z}_j, T_i, T_j)
    \end{align}
    which implies \eqref{eq: ZZ support}.
  \end{enumerate}
  To complete the proof, we now show \eqref{eq: sampling p bound ZZ}, which also implies \eqref{eq: sampling p bound T}.  Let the sets $S \subset [N]$ and $Q \subset [M]$ be given by 
  \[ S = \mathcal{N}_i \cup \mathcal{N}_j \qquad \textup{and} \qquad Q = \eta_{g(i)} \cup \eta_{g(j)} \]
  Because $(\bar{Z}_i, \bar{Z}_j, T_i, T_j)$ depend on $X$ and $\theta$ only through $X_S$ and $\theta_Q$, to show \eqref{eq: sampling p bound ZZ} it suffices to show that 
  \begin{align} \label{eq: SQ support}
P(X_S = x, \theta_Q = \xi) & \geq \rho \qquad \forall (x,\xi) \in \operatorname{supp}(X_S, \theta_Q)
  \end{align}
  To show \eqref{eq: SQ support}, let $X_{S,i}$ denote the $i$th entry of $X_S$, let $X_{S,1:i-1}$ denote the first $(i-1)$ entries of $X_S$, let $\bar{p} = \min(p,1-p)$, and let $(a_i,b_i,k_i,\tilde{k}_i)$ denote terms to be defined shortly. It then holds for any $(x,\xi) \in \operatorname{supp}(X_S, \theta_Q)$ that 
  \begin{align}
  \label{eq: two stage 1} P(X_S = x, \theta_Q = \xi) & \geq p^{2d} P(X_S = x | \theta_Q = \xi) \\
  \label{eq: two stage 2} & = p^{2d} \prod_{i=1}^{|S|} P(X_{S,i} = x_i|\theta_Q = \xi, X_{S,1:i-1}=x_{1:i-1}) \\
  \label{eq: two stage 3} & = p^{2d} \prod_{i=1}^{|S|} \frac{a_i - k_i}{b_i - \tilde{k}_i} \\
  \label{eq: two stage 4} & \geq p^{2d} \left(\frac{\bar{p}}{1+(1-\bar{p})2d}\right)^{2d}
  \end{align}
by the following steps:
\begin{enumerate}
  \item \eqref{eq: two stage 1} holds because $\theta_1,\ldots,\theta_M$ are indpendent and satisfy \eqref{eq: theta p}
  \item \eqref{eq: two stage 2} writes $P(X_S |\theta_Q)$ as a product of conditional probabilities
  \item \eqref{eq: two stage 3} holds because the entries of $X_S$ conditional on $\theta_Q$ are generated by sampling without replacement from each group, with $(a_i,b_i,k_i,\tilde{k}_i)$ defined as follows. Let $\{\ell_1,\ldots,\ell_{|S|}\} \subset [N]$ enumerate the units in $S$, so that $X_{S} = (X_{\ell_1}, \ldots, X_{\ell_{|S|}})$ and $X_{S,i} = X_{\ell_i}$. Then it holds that 
  \[ P(X_{S,i} = x_i|\theta_Q = \xi, X_{S,1:i-1}=x_{1:i-1}) = \frac{a_i - k_i}{b_i - \tilde{k}_i}\]
  for $(a_i,b_i,k_i, \tilde{k}_i)$ given by
\begin{align}
\nonumber a_i & = 
\begin{cases} \theta_{g(\ell_i)} & \textup{if } x_i=1 \\
|G_{g(\ell_i)}| - \theta_g(\ell_i) & \textup{if } x_i = 0,
\end{cases} & b_i &= |G_{g(\ell_i)}| \\
\label{eq: k tilde} k_i & = \sum_{j=1}^{i-1} 1\{x_j=x_i, g(\ell_j) = g(\ell_i)\}, & \tilde{k}_i &= \sum_{j=1}^{i-1} 1\{g(\ell_j) = g(\ell_i)\}
\end{align}
because the conditional probability of unit $\ell_i$ receiving treatment is equal to the probability of drawing a colored ball from an urn with $|G_{g(\ell_i)}|$ balls, of which $\theta_{g(\ell_i)}$ are colored, conditional on previously drawing without replacement a certain number of colored and non-colored balls corresponding to the treatment assignment of the units in $\{\ell_1,\ldots,\ell_{i-1}\}$ who are in the same group as $\ell_i$.
  \item To show \eqref{eq: two stage 4} holds we observe that 
  \begin{align}\label{eq: two stage 4 bound 1}
   \frac{a_i - k_i}{b_i - \tilde{k}_i} = 1 \qquad \textup{ if }\theta_{g(\ell_i)} = 0 \textup{ or } \theta_{g(\ell_i)}=|G_{g(\ell_i)}| 
   \end{align}
  which holds because the treatment of unit $\ell_i$ can only take one value when all units in $\ell_i$'s group receive the same treatment, and 
  \begin{align} \label{eq: two stage 4 bound 2}
   \frac{a_i - k_i}{b_i - \tilde{k}_i} \geq \frac{\bar{p}}{1 + (1-\bar{p})d} \qquad \textup{ if }\theta_{g(\ell_i)} \neq 0 \textup{ and } \theta_{g(\ell_i)}\neq |G_{g(\ell_i)}|
  \end{align}
  which holds by the following steps for $\theta_{g(\ell_i)} \notin \{0, |G_{g(\ell_i)}|\}$:
  \begin{align}
\label{eq: two stage 5} \frac{a_i - k_i}{b_i - \tilde{k}_i} & \geq \frac{a_i - k_i}{b_i - k_i} \\
\label{eq: two stage 6} & \geq \frac{a_i - k_i}{\frac{a_i}{\bar{p}} - k_i} \\
\label{eq: two stage 8} & = \frac{\bar{p}(a_i - k_i)}{(a_i - k_i) + (1-\bar{p})k_i} \\
\label{eq: two stage 9} & \geq \frac{\bar{p}}{1 + (1-\bar{p})2d}
\end{align}
where \eqref{eq: two stage 5} because $\tilde{k}_i \geq k_i$ by their definitions in \eqref{eq: k tilde}; \eqref{eq: two stage 6} holds because $b_i = |G_{g(\ell_i)}| \leq a_i/\bar{p}$ by \eqref{eq: theta}; \eqref{eq: two stage 8} follows from algebraic manipulation; and \eqref{eq: two stage 9} holds because $k_i \leq |S| \leq 2d$, and because the expression
\[\frac{\bar{p}(a_i - k_i)}{(a_i - k_i) + (1-\bar{p})2d}\]
is increasing in $(a_i - k_i)$, where $(a_i-k_i) \geq 1$ since $x$ is in the support of $X_S$ conditional on $\theta_Q$. Plugging into \eqref{eq: two stage 3} the bounds for $\frac{a_i-k_i}{b_i - \tilde{k}_i}$ given by \eqref{eq: two stage 4 bound 1} and \eqref{eq: two stage 4 bound 2} and bounding $|S| \leq 2d$ establishes \eqref{eq: two stage 4}.
\end{enumerate}

\end{proof}

\begin{proof}[Proof of Theorem \ref{th: two stage design}, using Condition A6]

To verify the conditions \eqref{eq: group size} - \eqref{eq: ZZ support} of Assumption \ref{as: design} when Conditions A3 and A4 are replaced by Condition A6, we observe that

\begin{enumerate}
  \item As before, \eqref{eq: group size}, \eqref{eq: eta size}, \eqref{eq: eta overlap}, and \eqref{eq: Z support} hold because they are explicitly assumed by the conditions of the theorem.
  \item \eqref{eq: design neighborhood} holds because $\bar{Z}_i$ and $T_i$ depend only on $\theta_{\eta_{g(i)}}$, the number of treated units in $i$'s own and neighboring groups.
  \item \eqref{eq: T support} holds because $T_i$ and $T_j$ are independent for $i\neq j$, under the conditions of the theorem.
  \item To show \eqref{eq: ZZ support}, we observe that $(\bar{Z}_i, \bar{Z}_j, T_i, T_j)$ depend on $X$ only through the group-level variables 
  $\{\theta_m: m \in \eta_{g(i)} \cup \eta_{g(j)}\}$. Since $\{\theta_m\}_{m=1}^M$ are independent variables which satisfy the support condition \eqref{eq: theta p}, and $|\eta_{g(i)} \cup \eta_{g(j)}| \leq 2d$ by \eqref{eq: eta size}, it follows for any $(a,b,t_i,t_j)$ in the support of $(\bar{Z}_i, \bar{Z}_j, T_i, T_j)$ that
  \begin{align}\label{eq: condition A6 proof}
  P(\bar{Z}_i=a, \bar{Z}_j=b, T_i = t_i, T_j=T_j) \geq p^{2d} 
  \end{align}
  which implies \eqref{eq: ZZ support} for $\rho = p^{2d}$.
\end{enumerate}

\end{proof}

\subsection{Proof of Theorem \ref{th: hajek consistency}} \label{sec: consistency} \label{sec: proofs 3}

\paragraph{Preliminaries}

The Azuma-Hoeffding (or McDiarmid's) inequality \citeappendix[Thm 6.2]{boucheron2013concentration} states that given independent variables $U = (U_1,\ldots,U_N)$, and a function $f$ satisfying the bounded difference property
\begin{equation} \label{eq: mcdiarmid condition}
 |f(U) - f(U')| \leq c_i \text{ if } U_j = U_j' \text{ for all } j \neq i, \qquad i \in [N],
\end{equation}
it holds that 
\begin{equation}\label{eq: mcdiarmid}
\mathbb{P}\left( | f(U) - \mathbb{E}f | > \epsilon \right) \leq 2\exp\left( - 2\epsilon^2/\sum_i c_i^2 \right).
\end{equation}

Lemma \ref{le: hajek mcdiarmid} applies McDiarmid's inequality to show concentration of the terms appearing in $\hat{\tau}_k^\Haj$ for $k=1,2$. 

\begin{lemma}\label{le: hajek mcdiarmid}
Let Assumption \ref{as: design} hold, and let $\phi^* \in \{0,1\}^N$ encode $\Iscr$ by
\begin{align}\label{eq: phi star}
 \phi_i^* = 1\{i \in \Iscr\}, \qquad i \in [N]
 \end{align}
For $i \in [N]$ and $\bar{Z}_i$ given by \eqref{eq: Z bar}, let $u_i$ denote the vector
\[ u_i = \left(\frac{1\{\bar{Z}_i=1\}}{P(\bar{Z}_i=1|T_i)}\ , \ \frac{1\{\bar{Z}_i=1\}}{P(\bar{Z}_i=1|T_i)}Y_i \phi_i^*\ , \ \frac{1\{\bar{Z}_i=0\}}{P(\bar{Z}_i=0|T_i)}\ , \  \frac{1\{\bar{Z}_i=0\}}{P(\bar{Z}_i=0|T_i)}(1-Y_i) \phi_i^* \right) \]
and let $\bar{u} = \frac{1}{N}\sum_{i=1}^N u_i$. Then $\| \bar{u} - \mathbb{E}\bar{u} \|_\infty = O_P\left(\frac{N}{\sqrt{M}}\right)$.
\end{lemma}

\begin{proof}
Let $U_m$ for $m \in [M]$ be given by
\[ U_m = \frac{1}{N} \sum_{i \in G_m} u_i \]
so that $\bar{u}$ satisfies $\bar{u} = \sum_{m=1}^M U_m$. As $|u_i| \leq 1/\rho$ by \eqref{eq: Z support} and $|G_m| \leq CN/M$ by \eqref{eq: group size}, it can be seen that $|U_m| \leq \frac{C}{M\rho}$. By \eqref{eq: eta overlap}, changing $X_{G_m}$ changes at most $d$ entries of $U_1,\ldots,U_M$, and hence changes $\bar{u}$ by at most $\frac{dC}{M \rho}$. Applying McDiarmid's inequality to the $\ell$th entry of $\bar{u} = \sum_{m=1}^M U_m$  yields that
\[ P( |\bar{u}_\ell - \mathbb{E}\bar{u}_\ell| \geq \epsilon ) \leq \exp\left( - \frac{2\epsilon^2}{M \left(\frac{dC}{M \rho}\right)^2}  \right) \]
and hence $\bar{u}_\ell - \mathbb{E}\bar{u}_\ell = O_P(M^{-1/2})$. Since $\bar{u}$ is 4-dimensional, it follows that $\| \bar{u} - \mathbb{E}\bar{u}\|_\infty = O_P(M^{-1/2})$ as well.
\end{proof}

\paragraph{Proof of Theorem \ref{th: hajek consistency}} The claims of Theorem \ref{th: hajek consistency}, namely \eqref{eq: hajek consistency contrast} and \eqref{eq: hajek consistency Phi}, are proven separately as follows:

\begin{proof}[Proof of Theorem \ref{th: hajek consistency}, Eq. \eqref{eq: hajek consistency contrast}]
It holds that
\begin{align} 
\nonumber \hat{\tau}_k^\Haj & =  N - \left(\sum_{i \in \Iscr} \frac{N}{\hat{N}_1} \cdot \frac{1\{\bar{Z}_i=1\}}{P(\bar{Z}_i=1|T_i)} \cdot Y_{ki}+ \sum_{i \in \Iscr} \frac{N}{\hat{N}_0} \cdot \frac{1\{\bar{Z}_i=0\}}{P(\bar{Z}_i=1|T_i)} \cdot (1-Y_{ki}) \right) \\
\label{eq: hajek contrast 2} & \geq N - \left(\sum_{i=1}^N \frac{N}{\hat{N}_1} \cdot \frac{1\{\bar{Z}_i=1\}}{P(\bar{Z}_i=1|T_i)} \cdot Y_{ki}+ \sum_{i=1}^N \frac{N}{\hat{N}_0} \cdot \frac{1\{\bar{Z}_i=0\}}{P(\bar{Z}_i=1|T_i)} \cdot (1-Y_{ki}) \right) \\
& = N - \left(\sum_{i=1}^N \frac{N}{\hat{N}_1} \cdot \frac{1\{\bar{Z}_i=1\}}{P(\bar{Z}_i=1|T_i)} \cdot Y_{ki} - \sum_{i=1}^N \frac{N}{\hat{N}_0} \cdot \frac{1\{\bar{Z}_i=0\}}{P(\bar{Z}_i=1|T_i)} \cdot Y_{ki} \right) - N\label{eq: N hajek} \\
 \label{eq: hajek constrast 3} & = \begin{cases} 
 - \Delta^{\Haj} & \textup{ if }k = 1 \\
 \Delta^{\Haj} & \textup{ if }k = 2
  \end{cases}
\end{align}
where \eqref{eq: hajek contrast 2} holds because the terms in the summand are non-negative; \eqref{eq: N hajek} uses 
\[ \sum_{i=1}^N \frac{N}{\hat{N}_a}\cdot \frac{1\{\bar{Z}_i=a\}}{P(\bar{Z}_i=a|T_i)} = N, \qquad \forall\ a \in \{0,1\},\]
and \eqref{eq: hajek constrast 3} uses the definition of $\Delta^\Haj$ given by \eqref{eq: contrast general}.
It follows that
\begin{align*}
\max(\hat{\tau}_1^\Haj, \hat{\tau}_2^\Haj) & \geq \max(\Delta^\Haj, -\Delta^\Haj)  \\ 
& = |\Delta^\Haj|
\end{align*}
proving \eqref{eq: hajek consistency contrast}.

\end{proof}

\begin{proof}[Proof of Theorem \ref{th: hajek consistency}, Eq. \eqref{eq: hajek consistency Phi}]

Proof of \eqref{eq: hajek consistency Phi} follows by Taylor approximation and is more or less standard. Given $k \in [2]$, let $a, b \in \mathbb{R}^2$ be given by
\begin{align}
a &= \left[ \begin{array}{l} \dfrac{1}{N} \displaystyle\sum_{i=1}^N \dfrac{1\{\bar{Z}_i=1\}}{P(\bar{Z}_i=1|T_i)} \cdot Y_{ki} \phi_i^* \\
\\
\dfrac{1}{N} \displaystyle\sum_{i=1}^N \dfrac{1\{\bar{Z}_i=0\}}{P(\bar{Z}_i=0|T_i)} \cdot (1-Y_{ki}) \phi_i^*
\end{array} \right] & \qquad \textup{ and } \qquad 
b & = \left[ \begin{array}{c} \dfrac{1}{N} \displaystyle\sum_{i=1}^N \dfrac{1\{\bar{Z}_i=1\}}{P(\bar{Z}_i=1|T_i)} \\ \\
\dfrac{1}{N} \displaystyle\sum_{i=1}^N \dfrac{1\{\bar{Z}_i=0\}}{P(\bar{Z}_i=0|T_i)} 
\end{array} \right]  \label{eq: linearization terms}
\end{align}
To evaluate $\mathbb{E}a$, we observe that for $z \in \{0,1\}$, $i \in \Iscr$, and $\tilde{f}_{ki}(T_i)$ given by \eqref{eq: f_ki}, it holds that
\begin{align}
\nonumber \mathbb{E}\left[\frac{1\{\bar{Z}_i = z\}}{P(\bar{Z}_i = z| T_i)} Y_{ki} \right] &= \mathbb{E}\left[\mathbb{E}\left\{\frac{1\{\bar{Z}_i = z\}}{P(\bar{Z}_i = z| T_i)} Y_{ki}  | T_i\right\}\right]  \\
\nonumber & = \mathbb{E}\left[ \mathbb{E}\left\{ Y_{ki} | T_i, \bar{Z}_i=z\right\} \right]\\
\label{eq: consistency linearization 2} & = \mathbb{E}\left[ \mathbb{E}\left\{ \tilde{f}_{ki}(T_i) | T_i, \bar{Z}_i=z\right\} \right]\\
\nonumber & = \mathbb{E}\left\{\mathbb{E}\left[ \tilde{f}_{ki}(T_i) | T_i\right]  \right\} \\
\nonumber & = \mathbb{E}[ \tilde{f}_{ki}(T_i)]
\end{align}
where \eqref{eq: consistency linearization 2} hold since $Y_{ki} = \tilde{f}_{ki}(T_i)$ if $i \in \Iscr$ and $X \in R_i$, which is implied by $\bar{Z}_i \in \{0,1\}$. It follows that $\mathbb{E}a$ and $\mathbb{E}b$ equal
\begin{align}
\mathbb{E}a & = \left[\begin{array}{l} \dfrac{1}{N} \displaystyle\sum_{i \in \Iscr} \mathbb{E}[\tilde{f}_{ki}(T_i)]  \\
\dfrac{1}{N} \displaystyle\sum_{i \in \Iscr} \mathbb{E}[1-\tilde{f}_{ki}(T_i)] 
 \end{array}\right],  & \mathbb{E}b = \left[\begin{array}{l} 1 \\ 1 \end{array}\right] \label{eq: linearization expected terms}
\end{align}
and that $\hat{\tau}_k^\Haj$ and $\tau$ are given by
\begin{align*}
\hat{\tau}_k^\Haj & = N\left(\frac{a_1}{b_1} + \frac{a_2}{b_2}\right), & \tau & = N \left(\frac{\mathbb{E}a_1}{\mathbb{E}b_1} + \frac{\mathbb{E}a_2}{\mathbb{E}b_2}\right)
\end{align*}
and the estimation error may be written as
\begin{align*}
\frac{1}{N}\left(\hat{\tau}_k^\Haj - \tau\right) & = \frac{\epsilon_1}{b_1} + \frac{\epsilon_2}{b_2}  
\end{align*}
where $\epsilon \in \mathbb{R}^2$ denotes
\begin{align}
\epsilon = \left[\begin{array}{l} a_1 - b_1\dfrac{\mathbb{E}a_1}{\mathbb{E}b_1} \\ \\ a_2 - b_2 \dfrac{\mathbb{E}a_2}{\mathbb{E}b_2} \end{array}\right] \label{eq: epsilon}
\end{align}
which can be seen to satisfy $\mathbb{E}\epsilon = 0$.

To show \eqref{eq: hajek consistency Phi}, we define the mapping $h$ by
\[ h(\epsilon, b)  = \frac{\epsilon_1}{b_1} + \frac{\epsilon_2}{b_2} \]
so that $N^{-1}(\hat{\tau}_k^\Haj - \tau) = h(\epsilon,b)$. Letting $\xi = (\epsilon, b)$, by mean value theorem it holds for some $\tilde{\xi}$ in the convex hull of $\xi$ and $\mathbb{E}\xi$ that
\begin{align} 
\nonumber h(\xi) - h(\mathbb{E}\xi) & = \nabla h(\tilde{\xi})^T(\xi - \mathbb{E}\xi)  \\
& = \nabla h(\mathbb{E}\xi)^T(\xi - \mathbb{E}\xi) + \left(\nabla h(\tilde{\xi}) - \nabla h(\mathbb{E}\xi)\right)^T(\xi - \mathbb{E}\xi) \label{eq: hajek linearization}
\end{align}
where the derivative $\nabla h(\xi)$ is given by
\begin{align} \label{eq: nabla h}
\nabla h(\xi) = \left[ \dfrac{1}{b_1}\ \ \ \dfrac{1}{b_2} \ \ -\frac{\epsilon_1}{b_1^2} \ \ -\frac{\epsilon_2}{b_2^2} \right]
\end{align}
As $\xi$ is an affine transformation of $\bar{u}$ defined in Lemma \ref{le: hajek mcdiarmid}, it follows from that Lemma that
\begin{align} \label{eq: xi consistency}
\| \xi - \mathbb{E}\xi \|_\infty = O_P(M^{-1/2})
\end{align}
from which it follows that 
\begin{align} \label{eq: derivative consistency}
\nabla h(\mathbb{E}\xi)^T(\xi - \mathbb{E}\xi) = O_P(M^{-1/2})
\end{align}
where we have used $\nabla h(\mathbb{E}\xi) = \left(1, 1, 0, 0\right)$.

As $\tilde{\xi}$ is in the convex hull of $\xi$ and $\mathbb{E}\xi$, \eqref{eq: xi consistency} implies that
$\| \tilde{\xi} - \mathbb{E}\xi\| = O_P(M^{-1/2})$. By continuity of $\nabla h$ at $\mathbb{E}\xi$, it follows that
$\| \nabla h(\tilde{\xi}) - \nabla h(\mathbb{E}\xi)\| = O_P(M^{-1/2})$ as well, and hence that
\begin{align} \label{eq: remainder consistency}
\left(\nabla h(\tilde{\xi}) - \nabla h(\mathbb{E}\xi)\right)^T(\xi - \mathbb{E}\xi) = O_P(M^{-1})
\end{align}
Substituting \eqref{eq: derivative consistency} and \eqref{eq: remainder consistency} into \eqref{eq: hajek linearization} proves that $N^{-1}(\hat{\tau}_k - \tau) = O_P(M^{-1/2})$, which is equivalent to the claim of \eqref{eq: hajek consistency Phi} made by the lemma.
\end{proof}

\subsection{Proof of Theorem \ref{th: hajek variance estimation}} \label{sec: variance consistency} \label{sec: proofs 4}

In our proof of Theorem \ref{th: hajek variance estimation}, we will use $C_{ij}(T_i, T_j)$ to abbreviate the covariance estimates appearing in $\hat{V}_k$. It is given by
\begin{align} \label{eq: Cij proofs}
C_{ij}(T_i, T_j) = \mathbb{E}[v_i v_j |T_i, T_j] - \mathbb{E}[v_i|T_i] \mathbb{E}[v_j|T_j] \frac{P(T_i)P(T_j)}{P(T_i, T_j)}
\end{align}
To prove the theorem, we will use Lemmas \ref{le: hajek Vij} - \ref{le: Gamma overlap}, whose proofs will be given following the proof of Theorem \ref{th: hajek variance estimation}. Lemma \ref{le: hajek Vij} bounds the values of $\{C_{ij}\}$.

\begin{lemma} \label{le: hajek Vij}
Let Assumption \ref{as: design} hold and let $C_{ij}$ be given by \eqref{eq: Cij proofs}. Let $\mathcal{J}_0$ and $\mathcal{J}_1$ denote the sets
\begin{align*}
\mathcal{J}_0 &= \{i,j \in \Iscr: |\Gamma_i \cap \Gamma_j| = 0\} \\
\mathcal{J}_1 &= \{i,j \in \Iscr: |\Gamma_i \cap \Gamma_j| \neq 0 \}
\end{align*}
It holds w.p. 1 that 
\begin{align}
\max_{i,j \in \mathcal{J}_0} C_{ij}(T_i, T_j) & \leq O\left(\frac{\log M}{M}\right) \label{eq: hajek Vij 1} \\
\max_{i,j \in \mathcal{J}_1} C_{ij}(T_i, T_j) & \leq O(1) \label{eq: hajek Vij 2}  
\end{align}
\end{lemma}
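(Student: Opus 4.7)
I begin by decomposing $v_i = \rho_i A_i$, where $\rho_i = N/\hat N_{y_i}$ and $A_i = 1\{\bar Z_i = y_i\}/P(\bar Z_i = y_i \mid T_i)$, with $y_i := Y_{ki}$ deterministic in $T_i$ because $\phi_i^* = 1$; the crucial identity is $\mathbb{E}[A_i \mid T_i] = 1$. The first step is to establish concentration of the H\'{a}jek normalizers: applying McDiarmid's inequality to $\hat N_a = \sum_m 1\{\bar Z_m = a\}/P(\bar Z_m = a \mid T_m)$, Assumption \ref{as: design} (both \eqref{eq: Gamma 2} and \eqref{eq: pmin}) gives bounded differences of size $O(d_{\max}/p_{\min})$, so that $|\hat N_a - N| \leq C\sqrt{N\log N}$ on an event $E$ of probability $\geq 1 - N^{-c}$ for arbitrarily large $c$. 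On $E$ one has $\rho_i = 1 + O(\sqrt{\log N/N})$; off $E$, the trivial bound $v_i \leq N$ together with $P(T_i, T_j) \geq p_{\min}^{2 d_{\max}}$ absorbs the tail mass into an $O(1/N)$ contribution to any conditional expectation of interest.

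\paragraph{The $\mathcal{I}_0$ case.} For $i,j \in \mathcal{I}_0$ the disjointness $\Gamma_i \cap \Gamma_j = \emptyset$ delivers two independence facts: $T_i \perp T_j$, so $P(T_i)P(T_j)/P(T_i,T_j) = 1$; and $A_i \perp A_j \mid T_i, T_j$. Conditioning on $X_{\Gamma_i \cup \Gamma_j}$ freezes $A_i, A_j$ while leaving $\rho_i \rho_j$ random only through coordinates outside $\Gamma_i \cup \Gamma_j$, so I define $g(X_{\Gamma_i \cup \Gamma_j}) = \mathbb{E}[\rho_i \rho_j \mid X_{\Gamma_i \cup \Gamma_j}]$ and analogously $h_i(X_{\Gamma_i}) = \mathbb{E}[\rho_i \mid X_{\Gamma_i}]$. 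A Taylor expansion $N/\hat N_a = 1 - (\hat N_a - N)/N + O((\hat N_a - N)^2/N^2)$, combined with the McDiarmid bound, yields $\mathbb{E}g = 1 + O(1/N)$ and $\mathbb{E}h_i = \mathbb{E}h_j = 1 + O(1/N)$. A separate bounded-difference argument for $\rho_i \rho_j$ in the $\Gamma_i \cup \Gamma_j$ coordinates (each flip moves $\hat N_a$ by $O(1)$, hence $\rho_a$ by $O(1/N)$ on $E$) then gives the deterministic bounds $\|g - 1\|_\infty, \|h_i - 1\|_\infty, \|h_j - 1\|_\infty = O(\log N/N)$ after absorbing the tail mass of $E^c$. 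Substituting into
\[ C_{ij} = \mathbb{E}[A_i A_j g \mid T_i, T_j] - \mathbb{E}[A_i h_i \mid T_i]\, \mathbb{E}[A_j h_j \mid T_j], \]
expanding around $g = h_i = h_j = 1$, and using $\mathbb{E}[A_i A_j \mid T_i,T_j] = \mathbb{E}[A_i \mid T_i]\mathbb{E}[A_j \mid T_j] = 1$, the leading constants cancel and all remaining terms are products of $O(\log N / N)$ fluctuations and $A$-factors bounded by $1/p_{\min}$, yielding $C_{ij} \leq O(\log N / N)$.

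\paragraph{The $\mathcal{I}_1$ case and main obstacle.} When $\Gamma_i \cap \Gamma_j \neq \emptyset$ neither independence is available, so the cancellation above fails and I settle for a crude bound. Since $\mathbb{E}[v_i \mid T_i]\mathbb{E}[v_j \mid T_j] \cdot P(T_i)P(T_j)/P(T_i,T_j) \geq 0$, one has $C_{ij} \leq \mathbb{E}[v_i v_j \mid T_i, T_j]$; on $E$ the product $v_i v_j$ is bounded by a constant depending only on $p_{\min}$ and $d_{\max}$ (this is the ``$1$'' in the stated bound, used loosely), while off $E$ the crude bound $v_i v_j \leq N^2$ contributes only $O(\log N / N)$ once multiplied by $P(E^c \mid T_i, T_j) \leq N^{-c}/p_{\min}^{2d_{\max}}$, which is made arbitrarily small by enlarging the McDiarmid parameter $c$. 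The main obstacle is the $\mathcal{I}_0$ case: one needs cancellation at leading order $1 - 1 = 0$ followed by uniform (rather than merely high-probability) control of the residual $\hat N$ fluctuations, because $C_{ij}(T_i, T_j)$ is a deterministic function of its arguments and the lemma asserts a pointwise, ``with probability one'' bound.
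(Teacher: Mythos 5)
Your proposal is correct in substance and reaches both claimed bounds, but it organizes the key estimate differently from the paper, so a comparison is worth recording. The paper (via Lemmas \ref{le: hajek V1 terms} and \ref{le: hajek V1}) rewrites $C_{ij}$ for $i,j\in\mathcal I_0$ in terms of conditional expectations of $N/\hat N_{Y_{ki}}$ given the coarse sigma-fields generated by $\{T_i,\bar Z_i=Y_{ki}\}$ and $\{T_i,T_j,\bar Z_i=Y_{ki},\bar Z_j=Y_{kj}\}$, then splits into a conditional covariance (handled by Cauchy--Schwarz together with the conditional variance bound $O(1/N)$ of Lemma \ref{le: hajek V1}) plus differences of conditional means, each $1+O(\log N/N)$. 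You instead factor $v_i=\rho_iA_i$ and condition on the full block $X_{\Gamma_i\cup\Gamma_j}$, which freezes $A_iA_j$ and removes the covariance term entirely: the exact cancellation $\mathbb{E}[A_iA_j\mid T_i,T_j]=\mathbb{E}[A_i\mid T_i]\mathbb{E}[A_j\mid T_j]=1$ (using $\Gamma_i\cap\Gamma_j=\emptyset$, which also gives $P(T_i)P(T_j)/P(T_i,T_j)=1$) reduces everything to a sup-norm bound on $g-1$ and $h_i-1$, so no separate variance estimate is needed. What both proofs share is the real technical content: concentration of $\hat N_a$ obtained by splitting off the $O(d_{\max}^2)$ units whose design neighborhoods meet $\Gamma_i\cup\Gamma_j$, applying McDiarmid to the remainder, and absorbing a bad event of polynomially small probability using the positivity $P(T_i,T_j)\geq p_{\min}^{2d_{\max}}$ (the paper's Lemma \ref{le: hajek joint probs}). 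Three sketch-level points in your write-up deserve care when fleshed out: (i) $g$ and $h_i$ should be defined only on the event $\{A_iA_j>0\}$ (resp.\ $\{A_i>0\}$), since otherwise $\hat N_{y_i}=0$ is possible and $\rho_i$ is ill-defined; on that event $\hat N_{y_i}\geq 1$ automatically, and elsewhere the factor $A_iA_j$ is zero, so nothing is lost. (ii) Conditionally on $X_{\Gamma_i\cup\Gamma_j}$ the mean of $\hat N_a$ is $N+O(1)$, not exactly $N$, and the Taylor remainder carries a $1/\hat N$ factor that must be controlled through the bad-event split; the honest uniform bound is therefore $1+O(\log N/N)$ rather than the $1+O(1/N)$ you state for $\mathbb{E}g$ --- harmless, since $O(\log N/N)$ is all the lemma requires. (iii) For $\mathcal I_1$ your crude bound yields a constant depending on $p_{\min}$ and $d_{\max}$ rather than the literal $1$ in \eqref{eq: hajek Vij 2}; the paper's own display ends at $O(1)+O(\log N/N)$ with constant roughly $p_{\min}^{-2}+\rho_{\min}^{-1}$, and since the downstream use in Theorem \ref{th: hajek variance estimation} only needs boundedness by a constant, your looseness matches the paper's.
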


Lemmas \ref{le: V1 bias 1} and \ref{le: V1 bias 2} are used to prove \eqref{eq: hajek variance bias} of Theorem \ref{th: hajek variance estimation}. 

\begin{lemma} \label{le: V1 bias 1}
Let $C_{ij}$ be given by \eqref{eq: Cij proofs}. If $i,j \in \Iscr$ and $P(\bar{Z}_i = Y_{ki}, \bar{Z}_j = Y_{kj}|T_i, T_j) = 0$, then $C_{ij}(T_i, T_j) \leq 0$
\end{lemma}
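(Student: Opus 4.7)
The approach I would take is short: show that the first term of $C_{ij}(T_i,T_j)$ vanishes under the hypothesis, and that the second term is a product of nonnegative factors, which forces $C_{ij}(T_i,T_j)\leq 0$ immediately.

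First I would observe that $v_i = \tfrac{N}{\hat{N}_{Y_{ki}}}\cdot \tfrac{1\{\bar{Z}_i = Y_{ki}\}}{P(\bar{Z}_i = Y_{ki}\mid T_i)}$ is almost surely nonnegative: both the indicator in the numerator and the conditional propensity in the denominator are nonnegative, and whenever the indicator equals $1$ the Horvitz--Thompson sum $\hat{N}_{Y_{ki}}$ contains at least one positive summand (namely the $i$th), so no $0/0$ arises. The same reasoning applies to $v_j$, so $v_i v_j \geq 0$ almost surely and in particular $\mathbb{E}[v_i\mid T_i]\geq 0$ and $\mathbb{E}[v_j\mid T_j]\geq 0$.

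Second I would note that $v_i v_j$ is supported on the event $\{\bar{Z}_i = Y_{ki},\ \bar{Z}_j = Y_{kj}\}$, and by hypothesis this event has zero conditional probability given $(T_i,T_j)$. Therefore $\mathbb{E}[v_i v_j \mid T_i, T_j] = 0$. Substituting into the definition \eqref{eq: Cij proofs} yields
\[
C_{ij}(T_i,T_j) \;=\; -\,\mathbb{E}[v_i\mid T_i]\,\mathbb{E}[v_j\mid T_j]\,\frac{P(T_i)\,P(T_j)}{P(T_i,T_j)},
\]
where each of the three factors on the right is nonnegative (the ratio being evaluated at a realized pair $(T_i,T_j)$ with $P(T_i,T_j)>0$, and the two conditional expectations being nonnegative by the first step). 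Hence $C_{ij}(T_i,T_j)\leq 0$.

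I do not anticipate a real obstacle: the entire argument reduces to inspecting the support of the indicator product inside $v_i v_j$ and invoking the nonnegativity of Horvitz--Thompson summands. The only point to state carefully is that $v_i$ is well-defined (i.e., the $\hat{N}_{Y_{ki}}$ in its denominator is positive) on the event where its numerator indicator is nonzero, which is immediate from the definition of $\hat{N}_{Y_{ki}}$.
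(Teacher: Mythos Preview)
Your proposal is correct and follows essentially the same approach as the paper's own proof: both show that $\mathbb{E}[v_iv_j\mid T_i,T_j]=0$ from the support of the indicator product under the zero-probability hypothesis, and then conclude that $C_{ij}(T_i,T_j)$ reduces to the negative of a product of nonnegative terms. Your added remark about $\hat{N}_{Y_{ki}}>0$ on the support of the indicator is a small extra care step not spelled out in the paper, but otherwise the arguments coincide.
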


\begin{lemma} \label{le: V1 bias 2}
It holds that $\mathbb{E}[C_{ij}(T_i, T_j)] \geq \Cov(v_i, v_j)$.
\end{lemma}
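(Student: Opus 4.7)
The plan is to directly compute $\mathbb{E}[C_{ij}(T_i, T_j)]$ term by term and compare to $\Cov(v_i, v_j)$, showing that the discrepancy is a sum of non-negative terms supported on pairs of marginal-but-not-joint-support values of $(T_i, T_j)$.

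First, I would expand $\mathbb{E}[C_{ij}]$ by summing over the joint support of $(T_i, T_j)$, i.e., pairs $(t_i, t_j)$ with $P(T_i = t_i, T_j = t_j) > 0$. The first summand of $C_{ij}$, namely $\mathbb{E}[v_i v_j \mid T_i, T_j]$, contributes $\mathbb{E}[v_i v_j]$ by the tower property. In the second summand, the factor $P(T_i)P(T_j)/P(T_i, T_j)$ multiplies $P(T_i, T_j)$ when we take the outer expectation, and the joint probability cancels. This leaves
\begin{equation*}
\mathbb{E}[C_{ij}(T_i, T_j)] = \mathbb{E}[v_i v_j] - \sum_{(t_i,t_j)\in \mathcal{S}} \mathbb{E}[v_i \mid T_i = t_i]\,\mathbb{E}[v_j \mid T_j = t_j]\, P(T_i=t_i)\,P(T_j=t_j),
\end{equation*}
where $\mathcal{S} = \{(t_i,t_j): P(T_i = t_i, T_j = t_j) > 0\}$.

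Next, I would write $\mathbb{E}[v_i]\mathbb{E}[v_j]$ as the double sum of $\mathbb{E}[v_i \mid T_i=t_i]\mathbb{E}[v_j\mid T_j=t_j]P(T_i=t_i)P(T_j=t_j)$ over the full Cartesian product of marginal supports. Subtracting gives
\begin{equation*}
\mathbb{E}[C_{ij}(T_i, T_j)] - \Cov(v_i, v_j) = \sum_{(t_i,t_j)\in\mathcal{S}^c \cap \mathcal{M}} \mathbb{E}[v_i\mid T_i = t_i]\,\mathbb{E}[v_j\mid T_j = t_j]\,P(T_i = t_i)\,P(T_j = t_j),
\end{equation*}
where $\mathcal{M}$ is the product of marginal supports, so that $\mathcal{S}^c\cap\mathcal{M}$ consists of pairs $(t_i, t_j)$ whose marginals are positive but whose joint probability vanishes.

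The final step is to observe that $v_i \ge 0$: by construction \eqref{eq: v general}, $v_i$ is a non-negative indicator divided by a positive conditional probability and multiplied by the non-negative Hájek factor $N/\hat{N}_{Y_{ki}}$ (adopting the $0/0=0$ convention if $\hat{N}_{Y_{ki}}=0$). Hence each conditional expectation $\mathbb{E}[v_i \mid T_i = t_i] \ge 0$, and every summand in the correction term is non-negative, which yields $\mathbb{E}[C_{ij}(T_i,T_j)] \ge \Cov(v_i, v_j)$. There is no real obstacle here; the only conceptual point is recognizing that the $P(T_i)P(T_j)/P(T_i,T_j)$ normalizer effectively restricts the ``product-of-marginals'' representation of $\mathbb{E}[v_i]\mathbb{E}[v_j]$ to the joint support, and the deficit left behind is precisely what makes the estimator conservative rather than unbiased.
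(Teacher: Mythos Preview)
Your proof is correct and follows essentially the same route as the paper: expand $\mathbb{E}[C_{ij}]$ over the joint support of $(T_i,T_j)$, use the tower property for the first term, cancel $P(T_i,T_j)$ in the second, and then compare to $\mathbb{E}[v_i]\mathbb{E}[v_j]$ written as a sum over the full product of marginal supports, invoking $v_i\ge 0$ to conclude. The paper phrases the final step as dropping the indicator $1\{P(T_i=t_i,T_j=t_j)\neq 0\}$, which is exactly your observation that the discrepancy is supported on $\mathcal{S}^c\cap\mathcal{M}$.
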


Lemmas  \ref{le: hajek V1 terms} and \ref{le: hajek V1} bound terms appearing in our variance estimators, and will be used to prove Lemma \ref{le: hajek Vij}.

\begin{lemma} \label{le: hajek V1 terms}
Let Assumption \ref{as: design} hold, and let $v_i$ for $i \in [N]$ denote
\[ v_i = \frac{N}{\hat{N}_{Y_{ki}}} \cdot \frac{1\{\bar{Z}_i = Y_{ki}\}}{P(\bar{Z}_i = Y_{ki}|T_i)}, \]
where $\hat{N}_a = \sum_{j=1}^N 1\{\bar{Z}_i = a\}/P(\bar{Z}_i=a|T_i)$ for $a \in \{0,1\}$. Then
\begin{align} \label{eq: hajek vi bound 1}
\mathbb{E}\left[ v_i |T_i\right] & = \mathbb{E}\left[\frac{N}{\hat{N}_{Y_{ki}}}|T_i, \bar{Z}_i = Y_{ki}\right].
\end{align}
If $i,j \in \Iscr$ and $|\Gamma_i \cap \Gamma_j| = 0$, then
\begin{align} \label{eq: hajek vivj bound 1}
\mathbb{E}\left[ v_i v_j|T_i, T_j\right] & = \mathbb{E}\left[ \frac{N^2}{\hat{N}_{Y_{ki}} \hat{N}_{Y_{kj}}} | T_i, T_j, \bar{Z}_i = Y_{ki}, \bar{Z}_j = Y_{kj}\right].
\end{align}
If $i,j \in \Iscr$ and $|\Gamma_i \cap \Gamma_j| \neq 0$, then
\begin{align} \label{eq: hajek vivj bound 2}
\mathbb{E}\left[ v_i, v_j|T_i T_j\right] & \leq \frac{1}{\rho^2} \mathbb{E}\left[ \frac{N^2}{\hat{N}_{Y_{ki}} \hat{N}_{Y_{kj}}} | T_i , T_j, \bar{Z}_i = Y_{ki}, \bar{Z}_j = Y_{kj} \right].
\end{align}
\end{lemma}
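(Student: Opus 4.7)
The plan is to derive all three identities by a single technique: condition on the exposure values $\bar{Z}_i$ (and $\bar{Z}_j$) inside the outer conditional expectation, then apply the tower property so that the propensity weights hidden in $v_i$ cancel against the conditional probabilities of the indicator events. The only place the three claims diverge is in how one treats the joint probability $P(\bar{Z}_i = Y_{ki}, \bar{Z}_j = Y_{kj} \mid T_i, T_j)$ that appears in the two-unit calculations.

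For \eqref{eq: hajek vi bound 1}, I would expand
\[ \mathbb{E}[v_i \mid T_i] = \sum_{a \in \{-1, 0, 1\}} P(\bar{Z}_i = a \mid T_i)\, \mathbb{E}\!\left[v_i \,\big|\, T_i, \bar{Z}_i = a\right]. \]
The indicator $1\{\bar{Z}_i = Y_{ki}\}$ kills every term except $a = Y_{ki}$; for that term, the outer factor $P(\bar{Z}_i = Y_{ki} \mid T_i)$ cancels exactly against the denominator in the definition of $v_i$, leaving $\mathbb{E}[N/\hat{N}_{Y_{ki}} \mid T_i, \bar{Z}_i = Y_{ki}]$. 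The conditioning is legitimate under $\phi_i^*=1$ because \eqref{eq: pmin} forces $P(\bar{Z}_i = Y_{ki} \mid T_i) \geq p_{\min} > 0$.

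For the two-unit bounds, I would perform the analogous conditioning on the pair $(\bar{Z}_i, \bar{Z}_j)$ to obtain
\[ \mathbb{E}[v_i v_j \mid T_i, T_j] = \frac{P(\bar{Z}_i = Y_{ki}, \bar{Z}_j = Y_{kj} \mid T_i, T_j)}{P(\bar{Z}_i = Y_{ki}\mid T_i)\, P(\bar{Z}_j = Y_{kj}\mid T_j)} \cdot \mathbb{E}\!\left[\frac{N^2}{\hat{N}_{Y_{ki}}\hat{N}_{Y_{kj}}}\,\bigg|\, T_i, T_j, \bar{Z}_i=Y_{ki}, \bar{Z}_j = Y_{kj}\right]. \]
When $|\Gamma_i \cap \Gamma_j| = 0$, the pairs $(T_i, \bar{Z}_i)$ and $(T_j, \bar{Z}_j)$ are measurable with respect to disjoint blocks $X_{\Gamma_i}$ and $X_{\Gamma_j}$ of independent Bernoulli treatments, hence independent, so the prefactor equals $1$ and \eqref{eq: hajek vivj bound 1} follows. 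When $|\Gamma_i \cap \Gamma_j| \neq 0$ we lose independence, but the numerator is at most $1$ while each factor in the denominator is at least $p_{\min}$ by \eqref{eq: pmin}, so the prefactor is bounded above by $1/p_{\min}^2$, giving \eqref{eq: hajek vivj bound 2}.

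The main obstacle is really no obstacle at all — the result is an algebraic identity combined with the elementary observation that disjoint design neighborhoods induce independence under Bernoulli randomization. The only thing requiring care is the well-definedness of the conditional expectations on the events $\{\bar{Z}_i = Y_{ki}\}$ and $\{\bar{Z}_i = Y_{ki}, \bar{Z}_j = Y_{kj}\}$, which is handled by the positivity hypothesis \eqref{eq: pmin} whenever the corresponding $\phi^*$ indicator equals $1$.
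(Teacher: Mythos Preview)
Your proposal is correct and matches the paper's proof essentially line for line: condition on the event $\{\bar{Z}_i=Y_{ki}\}$ (resp.\ $\{\bar{Z}_i=Y_{ki},\bar{Z}_j=Y_{kj}\}$), cancel the resulting probability against the propensity denominator(s), invoke independence of $X_{\Gamma_i}$ and $X_{\Gamma_j}$ when the design neighborhoods are disjoint, and bound the prefactor by $1/p_{\min}^2$ otherwise. Your explicit remark that \eqref{eq: pmin} guarantees the conditioning events have positive probability (hence the conditional expectations are well defined) is a point the paper leaves implicit.
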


\begin{lemma} \label{le: hajek V1}
Let Assumption \ref{as: design} hold. Then 
\begin{align}
\max_{i \in \Iscr}  \mathbb{E}\left[ \frac{N}{\hat{N}_{Y_{ki}}}| T_i, \bar{Z}_i = Y_{ki}\right] & =  1 + O\left(\frac{\log M}{M}\right) \label{eq: hajek V1 1}\\
\max_{i,j \in \Iscr} \mathbb{E}\left[ \frac{N}{\hat{N}_{Y_{ki}}}| T_i, \bar{Z}_i = Y_{ki}, T_j, \bar{Z}_j = Y_{kj}\right] & = 1 + O\left(\frac{\log M}{M}\right) \label{eq: hajek V1 2} \\
\max_{i,j \in \Iscr} \Var\left[ \frac{N}{\hat{N}_{Y_{ki}}}| T_i, \bar{Z}_i = Y_{ki}, T_j, \bar{Z}_j = Y_{kj}\right] & = O\left(\frac{1}{M}\right) \label{eq: hajek V1 3}
\end{align}
\end{lemma}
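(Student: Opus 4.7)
The plan is to recognize $\hat{N}_{Y_{ki}}$ as a sum of $N$ locally dependent IPW weights with unconditional mean $N$ and variance $O(N)$, show that it concentrates in a constant fraction of $N$ around $N$ even after the conditioning, and then control $N/\hat{N}_{Y_{ki}}$ by Taylor-expanding around $1$ while handling the $1/x$ singularity through the deterministic lower bound $\hat{N}_{Y_{ki}} \geq 1$ that the conditioning event enforces.

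First I would establish the moment and tail bounds. Writing $\hat{N}_a = \sum_j w_j$ with $w_j = 1\{\bar{Z}_j = a\}/P(\bar{Z}_j = a\mid T_j)$, the tower rule gives $\mathbb{E}[w_j] = 1$ so $\mathbb{E}[\hat{N}_a] = N$. Each $w_j$ is bounded by $1/p_{\min}$ by \eqref{eq: pmin} and depends only on $X_{\Gamma_j}$, so by \eqref{eq: Gamma 1}--\eqref{eq: Gamma 2} at most $O(d_{\max}^2)$ indices $l$ have $\Gamma_l \cap \Gamma_j \neq \emptyset$, which gives $\Var(\hat{N}_a) = O(N)$ by the usual local-dependence covariance bound. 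Conditioning on $(T_i, \bar{Z}_i = Y_{ki})$, or for \eqref{eq: hajek V1 2}--\eqref{eq: hajek V1 3} also on $(T_j, \bar{Z}_j = Y_{kj})$, constrains only $X_{\Gamma_i}$ (resp.\ $X_{\Gamma_i \cup \Gamma_j}$) and leaves the other coordinates as independent Bernoullis with their unconditional distribution. Hence the conditional mean shifts by $O(1)$, the conditional variance is still $O(N)$, and McDiarmid applied in the conditional product measure (bounded-difference constants $O(d_{\max}/p_{\min})$ per free coordinate) yields the exponential tail $P(|\hat{N}_a - N| > N/2 \mid T_i, \bar{Z}_i = Y_{ki}) \leq 2e^{-cN}$. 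Crucially, on the conditioning event the $i$-th weight satisfies $w_i \geq 1$, so $\hat{N}_{Y_{ki}} \geq 1$ deterministically and $N/\hat{N}_{Y_{ki}} \leq N$ always.

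Then I would combine these ingredients via the identity
\begin{align*}
\mathbb{E}\bigl[N/\hat{N}_a - 1 \bigm| T_i, \bar{Z}_i = Y_{ki}\bigr] = \mathbb{E}\bigl[(N-\hat{N}_a)/N \bigm| T_i, \bar{Z}_i = Y_{ki}\bigr] + \mathbb{E}\bigl[(N-\hat{N}_a)^2/(N\hat{N}_a) \bigm| T_i, \bar{Z}_i = Y_{ki}\bigr].
\end{align*}
The first summand is $O(1/N)$ from the $O(1)$ bound on the conditional mean shift. Restricting the second summand to the good event $\{\hat{N}_a \geq N/2\}$ bounds it by $2\mathbb{E}[(N-\hat{N}_a)^2]/N^2 = O(1/N)$; on the bad event the integrand is $\leq N$ while the probability is $\leq 2e^{-cN}$, making that contribution exponentially small. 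This gives \eqref{eq: hajek V1 1}, and \eqref{eq: hajek V1 2} follows identically with the conditioning set enlarged to $\Gamma_i \cup \Gamma_j$. For \eqref{eq: hajek V1 3} I would bound $\Var(N/\hat{N}_{Y_{ki}} \mid T_i, \bar{Z}_i = Y_{ki}, T_j, \bar{Z}_j = Y_{kj}) \leq \mathbb{E}[(N/\hat{N}_{Y_{ki}} - 1)^2 \mid \textup{same}]$ and apply the same good/bad-event split to $(N-\hat{N}_a)^2/\hat{N}_a^2$, again obtaining $O(1/N)$.

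The main obstacle is the singularity of $1/x$ at zero: a mere Chebyshev-type bound on $\hat{N}_a$ would not rule out the rare events on which $\hat{N}_a$ is tiny enough that $N/\hat{N}_a$ dominates the expectation. The two ingredients that defeat this are (i) the deterministic lower bound $\hat{N}_{Y_{ki}} \geq 1$ from the $i$-th term of the sum under the conditioning event, and (ii) the exponential-in-$N$ McDiarmid tail that makes the bad-event contribution negligible. Together they justify the Taylor expansion and produce the claimed $O(\log N/N)$ rate (in fact the argument even yields $O(1/N)$).
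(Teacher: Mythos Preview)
Your proposal is correct and follows essentially the same route as the paper: both use the identity $N/\hat{N}_a = 1 + (N-\hat{N}_a)/N + (N-\hat{N}_a)^2/(N\hat{N}_a)$, a McDiarmid concentration bound for $\hat{N}_a$ after conditioning (the paper makes the split into ``free'' and ``constrained'' parts explicit via $\hat{N}_{Y_{ki}} = \hat{N}_Q + \hat{N}_{\bar S}$), and a good/bad-event decomposition together with the deterministic bound $\hat{N}_{Y_{ki}} \geq 1$. The only substantive difference is that the paper takes the good-event threshold at $\sqrt{3C^2 N\log N}$ and bounds $(N-\hat{N}_{Y_{ki}})^2$ deterministically on that event, yielding the stated $O(\log N/N)$, whereas your constant-fraction threshold plus the second-moment bound $\mathbb{E}[(N-\hat{N}_a)^2\mid\text{cond}] = O(N)$ indeed gives the sharper $O(1/N)$ you note.
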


Lemma \ref{le: Gamma overlap} bounds the extent to which the design neighborhoods $\{\Gamma_i\}_{i=1}^N$ may overlap, and the number of design neighborhoods to which a group $G_m$ may belong.
\begin{lemma}\label{le: Gamma overlap}
Let Assumption \ref{as: design} hold. It holds that
\begin{align}
\label{eq: d2 bound} \max_{i \in [N]} \sum_j 1\{|\Gamma_i \cup \Gamma_j| \neq 0\} & \leq \frac{d^2 CN}{M}  \\
\label{eq: d bound} \max_{m \in [M]} \sum_i 1\{G_m \subseteq \Gamma_i \} & \leq \frac{dCN}{M}
\end{align}

\end{lemma}

\paragraph{Proof of Theorem \ref{th: hajek variance estimation}} The claims of Theorem \ref{th: hajek variance estimation}, namely \eqref{eq: hajek variance concentration}, \eqref{eq: hajek variance bias}, and the conditions under which $\hat{V}_k$ is unbiased, are proven separately as follows.

\begin{proof}[Proof of Theorem \ref{th: hajek variance estimation}, Eq. \eqref{eq: hajek variance concentration}]
For $C_{ij}$ given by \eqref{eq: Cij proofs}, let $Q \in \mathbb{R}^{N \times N}$ denote the matrix given by
\begin{align}\label{eq: Q hajek variance estimation}
Q_{ij} = \begin{cases} 
C_{ij} \cdot \dfrac{1\{\bar{Z}_i = Y_{ki}, \bar{Z}_j = Y_{kj}\}}{P(\bar{Z}_i = Y_{ki}, \bar{Z}_j = Y_{kj}| T_i, T_j)} & \textup{ if $i,j \in \Iscr$} \\
0 & \textup{ else }
\end{cases} 
\end{align}
so that $\hat{V}_k = \sum_{ij} Q_{ij}$. 

To bound $|Q_{ij}|$, we may apply Lemma \ref{le: hajek Vij} and \eqref{eq: T support} to bound $|C_{ij}|$ and then \eqref{eq: ZZ support} to bound the term  $1/P(\bar{Z}_i = Y_{ki}, \bar{Z}_j = Y_{kj}|T_i, T_j)$ appearing in \eqref{eq: Q hajek variance estimation}, yielding for some $C_1 > 0$ that 
\begin{align}\label{eq: Mij bound}
|Q_{ij}| \leq \begin{cases} \displaystyle{\frac{C_1 \log M}{M}} & i,j \in \mathcal{J}_0 \\
\displaystyle{C_1} & i,j \in \mathcal{J}_1. \end{cases}
\end{align}
By \eqref{eq: d2 bound} it holds that 
\begin{align}\label{eq: Mij bound 2}
\max_{i \in [N]} |\{j: (i,j) \in \mathcal{J}_1 \}| \leq \frac{d^2 CN}{M} .
\end{align}

Changing $X_{G_m}$ changes at most $\frac{dCN}{M}$ rows and columns of $Q$, by \eqref{eq: d bound}. Combining \eqref{eq: Mij bound} and \eqref{eq: Mij bound 2} implies that each row or column of $Q$ has entries summing to at most $C_2\frac{N}{M}(\log M + 1)$ for some $C_2 > 0$.  It thus follows for some $C_3 > 0$ that for all $X, X'$ differing only in $X_{G_m}$ (for any $m \in [M]$) that
\[ |\hat{V}_k(X) - \hat{V}_k(X')| \leq C_3 \frac{N^2}{M^2}(\log M+1)\]
and it follows by McDiarmid's inequality that
\[ P(|\hat{V}_k - \mathbb{E}\hat{V}_k| \geq \epsilon) \leq 2 \exp \left(\frac{-2\epsilon^2}{M (C C_1)^2 N^4 M^{-4}(\log M + 1)^2}\right)\]
which implies that $|V - \mathbb{E}\hat{V}_k| = O_P\left(\frac{N^2\log M}{M^{3/2}}\right)$.
\end{proof}

\begin{proof}[Proof of Theorem \ref{th: hajek variance estimation}, Eq. \eqref{eq: hajek variance bias}]
Recall that $C_{ij}$ is given by \eqref{eq: Cij proofs}. To show that $\hat{V}_k$ is not anti-conservative, we observe that
\begin{align}
\mathbb{E}\hat{V}_k & = \sum_{i,j \in \Iscr}\ \mathbb{E}\left[ \mathbb{E}\left[ C_{ij}(T_i, T_j)\cdot \frac{1\{\bar{Z}_i = Y_{ki}, \bar{Z}_j = Y_{kj}\}}{P(\bar{Z}_i = Y_{ki}, \bar{Z}_j = Y_{kj}| T_i, T_j)}\ \bigg| T_i, T_j\right] \right] \nonumber \\
& = \sum_{i,j \in \Iscr}\ \mathbb{E}\left[ C_{ij}(T_i, T_j)\cdot \mathbb{E}\left[  \frac{1\{\bar{Z}_i = Y_{ki}, \bar{Z}_j = Y_{kj}\}}{P(\bar{Z}_i = Y_{ki}, \bar{Z}_j = Y_{kj}| T_i, T_j)}\ \bigg| T_i, T_j\right] \right] \nonumber \\
& = \sum_{i,j \in \Iscr}\ \sum_{t_i, t_j} P(T_i=t_i, T_j=t_j) \cdot C_{ij}(t_i, t_j)\cdot \nonumber \\
& \hskip1cm 1\{P(\bar{Z}_i = Y_{ki}, \bar{Z}_j = Y_{kj}|T_i = t_i, T_j=t_j) \neq 0\} \nonumber \\ 
& \geq \sum_{i,j \in \Iscr}\ \sum_{t_i, t_j} P(T_i=t_i, T_j=t_j) C_{ij}(t_i, t_j) \label{eq: th V1 bias eq 1}\\
& \geq \sum_{i,j \in \Iscr}\ \Cov(v_j, v_j) \label{eq: th V1 bias eq 2}
\end{align}
where \eqref{eq: th V1 bias eq 1} holds by Lemma \ref{le: V1 bias 1}, and \eqref{eq: th V1 bias eq 2} holds by Lemma \ref{le: V1 bias 2}. It can be seen that \eqref{eq: th V1 bias eq 2} equals the claim of \eqref{eq: hajek variance bias} made by Theorem \ref{th: hajek variance estimation}.
\end{proof}

\begin{proof}[Proof of Theorem \ref{th: hajek variance estimation}, unbiasedness claim]

To verify the conditions for unbiasedness given by Theorem \ref{th: hajek variance estimation}, we observe that if $T_i$ is the constant mapping for all units, then $\hat{V}_k$ simplifies to
\begin{align*}
\hat{V}_k = \sum_{i,j \in \Iscr} {\Cov}\left( v_i, v_j\right) \cdot 
 \frac{1\{\bar{Z}_i = Y_{ki}, \bar{Z}_j = Y_{kj}\}}{P\left( \bar{Z}_i = Y_{ki}, \bar{Z}_j = Y_{kj}\right)}
\end{align*}
where $Y_{ki}$ is constant for all $i \in [N]$ such that $i \in \Iscr$. As this is an importance sampled estimator of covariance terms appearing in $\Var \tau_k^{\Haj}$, it is unbiased provided that the sampling probabilities $P\left( \bar{Z}_i = Y_{ki}, \bar{Z}_j = Y_{kj}\right)$ (which equals $P(\bar{Z}_i = Y_{ki})$ for $i = j$) are non-zero for each $i,j \in \Iscr$.

\end{proof}

\paragraph{Proof of Lemmas \ref{le: hajek Vij} - \ref{le: hajek V1}}

\begin{proof}[Proof of Lemma \ref{le: hajek Vij}]

For $i,j$ satisfying $i,j \in \Iscr$, let $F_{\{i\}}$ and $F_{\{i,j\}}$ denote sigma-fields such that 
\begin{align*}
\mathbb{E}[ \ \cdot\ | F_{\{i\}}] & = \mathbb{E}[ \ \cdot\ | T_i, \bar{Z}_i = Y_{ki}] \\
\mathbb{E}[ \ \cdot\ | F_{\{i,j\}}] & = \mathbb{E}[ \ \cdot\ | T_i, T_j, \bar{Z}_i = Y_{ki}, \bar{Z}_j = Y_{kj}] 
\end{align*}
Recall $C_{ij}$ is given by \eqref{eq: Cij proofs}. To show \eqref{eq: hajek Vij 1}, we observe that for $i,j \in \mathcal{J}_0$ that
\begin{align}
C_{ij}(T_i, T_j) & = \mathbb{E}[v_i v_j|T_i, T_j] - \mathbb{E}[v_i|T_i]\mathbb{E}[v_j|T_j]\frac{P(T_i)P(T_j)}{P(T_i, T_j)}  \nonumber \\
&= \mathbb{E}\left[\frac{N^2}{\hat{N}_{Y_{ki}}\hat{N}_{Y_{kj}}} | F_{\{i,j\}} \right] - \mathbb{E}\left[\frac{N}{\hat{N}_{Y_{ki}}} | F_{\{i\}}\right]\mathbb{E}\left[\frac{N}{\hat{N}_{Y_{kj}}}|F_{\{j\}}\right] \label{eq: Vij use le 6} \\
& = \mathbb{E}\left[\frac{N^2}{\hat{N}_{Y_{ki}}\hat{N}_{Y_{kj}}} | F_{\{i,j\}} \right] - \mathbb{E}\left[\frac{N}{\hat{N}_{Y_{ki}}} | F_{\{i,j\}}\right]\mathbb{E}\left[\frac{N}{\hat{N}_{Y_{kj}}}|F_{\{i,j\}}\right]  \nonumber  \\
& \hskip1cm - \mathbb{E}\left[\frac{N}{\hat{N}_{Y_{ki}}} | F_{\{i\}}\right]\mathbb{E}\left[\frac{N}{\hat{N}_{Y_{kj}}}|F_{\{j\}}\right] + \mathbb{E}\left[\frac{N}{\hat{N}_{Y_{ki}}} | F_{\{i,j\}}\right]\mathbb{E}\left[\frac{N}{\hat{N}_{Y_{kj}}}|F_{\{i,j\}}\right] \nonumber  \\
& = \Cov\left(\frac{N}{\hat{N}_{Y_{ki}}}, \frac{N}{\hat{N}_{Y_{kj}}}| F_{\{i,j\}}\right) - \left(\mathbb{E}\left[\frac{N}{\hat{N}_{Y_{ki}}}|F_{\{i\}}\right] - \mathbb{E}\left[\frac{N}{\hat{N}_{Y_{ki}}}|F_{\{i,j\}}\right] \right) \mathbb{E}\left[\frac{N}{\hat{N}_{Y_{kj}}}|F_{\{j\}}\right]  \nonumber  \\
& \hskip1cm - \mathbb{E}\left[\frac{N}{\hat{N}_{Y_{ki}}}|F_{\{i,j\}}\right]\left(\mathbb{E}\left[\frac{N}{\hat{N}_{Y_{kj}}}|F_{\{j\}}\right] - \mathbb{E}\left[\frac{N}{\hat{N}_{Y_{kj}}}|F_{\{i,j\}}\right] \right) \nonumber  \\
& \leq \sqrt{\Var\left(\frac{N}{\hat{N}_{Y_{ki}}}| F_{\{i,j\}}\right)\Var\left(\frac{N}{\hat{N}_{Y_{kj}}}| F_{\{i,j\}}\right)}  \nonumber \\
& \hskip1cm - \left(\mathbb{E}\left[\frac{N}{\hat{N}_{Y_{ki}}}|F_{\{i\}}\right] - \mathbb{E}\left[\frac{N}{\hat{N}_{Y_{ki}}}|F_{\{i,j\}}\right] \right) \mathbb{E}\left[\frac{N}{\hat{N}_{Y_{kj}}}|F_{\{j\}}\right]  \nonumber  \\
& \hskip1cm - \mathbb{E}\left[\frac{N}{\hat{N}_{Y_{ki}}}|F_{\{i,j\}}\right]\left(\mathbb{E}\left[\frac{N}{\hat{N}_{Y_{kj}}}|F_{\{j\}}\right] - \mathbb{E}\left[\frac{N}{\hat{N}_{Y_{kj}}}|F_{\{i,j\}}\right] \right) \nonumber  \\
& \leq O\left(\frac{\log M}{M}\right) \label{eq: Vij use le 7}
\end{align}
where \eqref{eq: Vij use le 6} uses Lemma \ref{le: hajek V1 terms}, and \eqref{eq: Vij use le 7} uses Lemma \ref{le: hajek V1}.

To show \eqref{eq: hajek Vij 2}, we observe for $i,j \in \mathcal{J}_1$ that
\begin{align}
C_{ij}(T_i, T_j) & = \mathbb{E}[v_i v_j|T_i, T_j] - \mathbb{E}[v_i|T_i]\mathbb{E}[v_j|T_j]\frac{P(T_i)P(T_j)}{P(T_i, T_j)}  \nonumber \\
& \leq \frac{1}{\rho^2} \mathbb{E}\left[ \frac{N^2}{\hat{N}_{Y_{ki}}\hat{N}_{Y_{kj}}}| F_{\{i,j\}}\right] - \mathbb{E}\left[\frac{N}{\hat{N}_{Y_{ki}}} | F_{\{i\}}\right]\mathbb{E}\left[\frac{N}{\hat{N}_{Y_{kj}}}|F_{\{j\}}\right]\cdot \frac{1}{\rho} \label{eq: Vij 2 eq 1} \\
& \leq \frac{1}{\rho^2} \left(\mathbb{E}\left[ \frac{N^2}{(\hat{N}_{Y_{ki}})^2}| F_{\{i,j\}}\right] \cdot \mathbb{E}\left[ \frac{N^2}{(\hat{N}_{Y_{kj}})^2}| F_{\{i,j\}}\right]\right)^{1/2} + \nonumber \\
& \hskip1cm \mathbb{E}\left[\frac{N}{\hat{N}_{Y_{ki}}} | F_{\{i\}}\right]\mathbb{E}\left[\frac{N}{\hat{N}_{Y_{kj}}}|F_{\{j\}}\right]\cdot \frac{1}{\rho} \nonumber  \\
& = O\left(1\right) + O\left(\frac{\log M}{M}\right) \label{eq: Vij 2 eq 2}
\end{align}
where \eqref{eq: Vij 2 eq 1} uses Lemma \ref{le: hajek V1 terms} and \eqref{eq: T support}, and \eqref{eq: Vij 2 eq 2} uses Lemma \ref{le: hajek V1} and the identity
\begin{align*}
\mathbb{E}\left[ \frac{N^2}{(\hat{N}_{Y_{ki}})^2}| F_{\{i,j\}}\right] & = \Var\left[\frac{N}{\hat{N}_{Y_{ki}}}|F_{\{i,j\}}\right] + \left(\mathbb{E}\left[\frac{N}{\hat{N}_{Y_{ki}}}|F_{\{i,j\}}\right]\right)^2 \\
& = O\left(\frac{1}{M}\right) + O(1)
\end{align*}
which holds by Lemma \ref{le: hajek V1} as well.
\end{proof}




\begin{proof}[Proof of Lemma \ref{le: V1 bias 1}]
Recall that $v_i$ and $C_{ij}$ are respectively given by \eqref{eq: v general} and \eqref{eq: Cij proofs}. If $i,j \in \Iscr$ and $P(\bar{Z}_i = Y_{ki}, \bar{Z}_j = Y_{kj}| T_i, T_j) = 0$, then
\begin{align}
\nonumber \mathbb{E}[v_iv_j |T_i, T_j] & = \mathbb{E}\left[ \frac{1\{\bar{Z}_i = Y_{ki}, \bar{Z}_j = Y_{kj}\}}{P(\bar{Z}_i = Y_{ki}|T_i) P(\bar{Z}_j = Y_{kj}|T_j)} \cdot \frac{N^2}{\hat{N}_{Y_{ki}} \hat{N}_{Y_{kj}}} | T_i, T_j\right]  \\
\label{eq: V1 bias 1 step 2} & = 0
\end{align}
where 
\eqref{eq: V1 bias 1 step 2} uses $P(\bar{Z}_i = Y_{ki}, \bar{Z}_j = Y_{kj}| T_i, T_j) = 0$. It follows that
\begin{align}
\nonumber C_{ij}(T_i, T_j) & =  \mathbb{E}\left[ v_i v_j | T_i, T_j\right] - \mathbb{E}[v_i|T_i]\cdot \mathbb{E}[v_j|T_j] \cdot \frac{P(T_i)P(T_j)}{P(T_i, T_j)} \\
& = 0 - \mathbb{E}[v_i|T_i]\cdot \mathbb{E}[v_j|T_j] \cdot \frac{P(T_i )P(T_j )}{P(T_i , T_j )} \label{eq: le V1 bias 1 eq 1}\\
& \leq  0 \label{eq: le V1 bias 1 eq 2}
\end{align}
where \eqref{eq: le V1 bias 1 eq 1} holds by \eqref{eq: V1 bias 1 step 2}, and \eqref{eq: le V1 bias 1 eq 2} holds because $v_i \geq 0$ always, which implies that $\mathbb{E}[v_i|T_i] \geq 0$.
\end{proof}

\begin{proof}[Proof of Lemma \ref{le: V1 bias 2}]
Let $\mathcal{T}_i$ denote the support of $T_i(X)$.
\begin{align}
\nonumber \nonumber & \sum_{t_i \in \mathcal{T}_i} \sum_{t_j \in \mathcal{T}_j} P(T_i = t_i, T_j = t_j) C_{ij}(t_i, t_j) \\
\nonumber & = \sum_{t_i \in \mathcal{T}_i} \sum_{t_j \in \mathcal{T}_j} P(T_i = t_i, T_j = t_j) \bigg\{ \mathbb{E}\left[v_i v_j|T_i=t_i, T_j = t_j\right] - \\ 
\label{eq: le bias 2 eq 1} & \hskip1cm \mathbb{E}[v_i|T_i = t_i]\mathbb{E}[v_j|T_j=t_j]\cdot \frac{P(T_i=t_i)P(T_j = t_j)}{P(T_i = t_i, T_j = t_j)}\bigg\} \\
\nonumber & = \mathbb{E}[v_i v_j]  - \sum_{t_i \in \mathcal{T}_i, t_j \in \mathcal{T}_j} 1\{P(T_i = t_i, T_j = t_j) \neq 0\} \cdot \\
\label{eq: le bias 2 eq 2} & \hskip1cm  \mathbb{E}[v_i|T_i = t_j]\mathbb{E}[v_j|T_j = t_j] P(T_i = t_i)P(T_j = t_j) \\
\label{eq: le bias 2 eq 3} & \geq \mathbb{E}[v_i v_j] - \sum_{t_i \in \mathcal{T}_i, t_j \in \mathcal{T}_j} \mathbb{E}[v_i|T_i = t_i]\mathbb{E}[v_j|T_j = t_j] P(T_i = t_i)P(T_j = t_j) \\
\nonumber & = \mathbb{E}[v_iv_j] - \mathbb{E}[v_i] \cdot \mathbb{E}[v_j] = \Cov(v_i,v_j)
\end{align}
where \eqref{eq: le bias 2 eq 1} holds by definition of $C_{ij}$ in \eqref{eq: Cij proofs}, and \eqref{eq: le bias 2 eq 2} holds because 
\[ 
\sum_{t_i \in \mathcal{T}_i} \sum_{t_j \in \mathcal{T}_j} P(T_i = t_i, T_j = t_j) \mathbb{E}\left[v_i v_j|T_i=t_i, T_j = t_j\right] = \mathbb{E}[v_i v_j],
\]
and \eqref{eq: le bias 2 eq 3} holds because 
\[ \mathbb{E}[v_i|T_i = t_i]\mathbb{E}[v_j|T_j = t_j] P(T_i = t_i)P(T_j = t_j) \geq 0\]
since $v_i \geq 0$ always.
\end{proof}


\begin{proof}[Proof of Lemma \ref{le: hajek V1 terms}]
Recall that $v_i$ is given by \eqref{eq: v general}. To show \eqref{eq: hajek vi bound 1}, we observe that
\begin{align*}
\mathbb{E}\left[ v_i |T_i\right] & = \mathbb{E}\left[ \frac{1\{\bar{Z}_i = Y_{ki}\}N}{P(\bar{Z}_i = Y_{ki}|T_i)\hat{N}_{Y_{ki}}}| T_i\right] \\
& = P(\bar{Z}_i = Y_{ki}|T_i) \mathbb{E}\left[\frac{N}{P(\bar{Z}_i = Y_{ki}|T_i) \hat{N}_{Y_{ki}}} | T_i, \bar{Z}_i = Y_{ki}\right] \\
& = \mathbb{E}\left[\frac{N}{\hat{N}_{Y_{ki}}}|T_i, \bar{Z}_i = Y_{ki}\right].
\end{align*}
To show \eqref{eq: hajek vivj bound 1}, we observe that if $i,j \in \Iscr$ and $|\Gamma_i \cap \Gamma_j| = 0$ then
\begin{align*}
\mathbb{E}\left[ v_i v_j|T_i T_j\right] & = \mathbb{E}\left[ \frac{1\{\bar{Z}_i = Y_{ki}\}\cdot 1\{\bar{Z}_j = Y_{kj}\}\cdot N^2}{P(\bar{Z}_i = Y_{ki}|T_i) P(\bar{Z}_j = Y_{kj}|T_j) \hat{N}_{Y_{ki}}\hat{N}_{Y_{kj}}}| T_i, T_j\right] \\
& = P(\bar{Z}_i = Y_{ki}, \bar{Z}_j = Y_{kj}|T_i, T_j) \cdot \\
& \hskip1cm \mathbb{E}\left[\frac{N^2}{P(\bar{Z}_i = Y_{ki}|T_i) P(\bar{Z}_j = Y_{kj}|T_j) \hat{N}_{Y_{ki}} \hat{N}_{Y_{kj}}} | T_i, T_j, \bar{Z}_i = Y_{ki}, \bar{Z}_j = Y_{kj}\right] \\
& = \mathbb{E}\left[ \frac{N^2}{\hat{N}_{Y_{ki}} \hat{N}_{Y_{kj}}} | T_i, T_j, \bar{Z}_i = Y_{ki}, \bar{Z}_j = Y_{kj}\right]
\end{align*}
where we have used that the events $\{\bar{Z}_i = Y_{ki}\}$ and $\{\bar{Z}_j = Y_{kj}\}$ are independent conditioned on $T_i$ and $T_j$ when $|\Gamma_i \cap \Gamma_i| = 0$.

To show \eqref{eq: hajek vivj bound 2}, we observe that if $i,j \in \Iscr$ and $|\Gamma_i \cap \Gamma_j| \neq 0$, then
\begin{align*}
\mathbb{E}\left[ v_i v_j|T_i T_j\right] & = \mathbb{E}\left[ \frac{1\{\bar{Z}_i = Y_{ki}\}\cdot 1\{\bar{Z}_j = Y_{kj}\} \cdot N^2}{P(\bar{Z}_i = Y_{ki}|T_i) P(\bar{Z}_j = Y_{kj}|T_j) \hat{N}_{Y_{ki}}\hat{N}_{Y_{kj}}}| T_i, T_j\right] \\
& = P(\bar{Z}_i = Y_{ki}, \bar{Z}_j = Y_{kj}|T_i, T_j) \cdot \\
& \hskip1cm \mathbb{E}\left[\frac{N^2}{P(\bar{Z}_i = Y_{ki}|T_i) P(\bar{Z}_j = Y_{kj}|T_j) \hat{N}_{Y_{ki}} \hat{N}_{Y_{kj}}} | T_i, T_j, \bar{Z}_i = Y_{ki}, \bar{Z}_j = Y_{kj}\right] \\
& \leq \frac{1}{\rho^2} \mathbb{E}\left[ \frac{N^2}{\hat{N}_{Y_{ki}} \hat{N}_{Y_{kj}}} | T_i, T_j, \bar{Z}_i = Y_{ki}, \bar{Z}_j = Y_{kj}\right] 
\end{align*}
where the final inequality uses \eqref{eq: Z support}.
\end{proof}

\begin{proof}[Proof of Lemma \ref{le: hajek V1}]

Given a set $S \subseteq [N]$, let $\Gamma_S$ denote the set
\[ \Gamma_S = \bigcup_{i \in S} \Gamma_i \]
and let $Q$ and $\bar{S}$ denote the sets
\begin{align*}
Q & = \{j: |\Gamma_j \cap \Gamma_S| = 0\}, & \bar{S} & = [N]/Q
\end{align*}
and given $Y_{ki}, Q$, and $\bar{S}$, let $\hat{N}_Q$ and $\hat{N}_{\bar{S}}$ denote
\begin{align} \label{eq: hat NQ and hat NS}
\hat{N}_Q & = \sum_{i\in Q} \frac{1\{\bar{Z}_j = Y_{ki}\}}{P\left(\bar{Z}_j = Y_{ki}|T_j\right)} &
\hat{N}_{\bar{S}} & = \sum_{i\in \bar{S}} \frac{1\{\bar{Z}_j = Y_{ki}\}}{P\left(\bar{Z}_j = Y_{ki}|T_j\right)}
\end{align}
so that $\hat{N}_{Y_{ki}}$ satisfies
\[ \hat{N}_{Y_{ki}} = \hat{N}_Q + \hat{N}_{\bar{S}}.\]
We observe that $\hat{N}_Q$ is independent of $X_{\Gamma_S}$, and hence that
\[ \mathbb{E}\left[\hat{N}_{Q}\ \Big|\  X_{\Gamma_S}\right] = N_{Q} \]
where $N_{Q} = N - |\bar{S}|$.

From \eqref{eq: d bound}, it can be seen that changing $X_{G_m}$ for any $m \in [M]$ changes at most $\frac{dCN}{M}$ entries in $Q$, and hence changes 
$\hat{N}_Q$ by at most $\frac{dCN}{M \rho}$, where we have used \eqref{eq: Z support}. As a result, by McDiarmid's inequality it holds for $C_1 = dC/\rho$ that
\begin{align} \label{eq: B mcdiarmid}
P\left(\left| \hat{N}_Q - N_{Q} \right| \geq \epsilon\ \Big| \ X_{\Gamma_S} \right) \leq \exp\left(-\frac{\epsilon^2 M}{C_1^2 N^2}\right)
\end{align}
Thus for $\epsilon_B = \frac{C_1N}{\sqrt{M}}\sqrt{(2\delta^{-1}+1) \log M}$ and the indicator
\[ B = 1\left\{\left| \hat{N}_Q - N_{Q}\right| \geq \epsilon_B \right\},\]
it follows from \eqref{eq: B mcdiarmid} that 
\begin{align} 
\nonumber P(B = 1\ |\ X_{\Gamma_S}) & \leq \frac{1}{M^{\frac{2}{\delta}+1}} \\
& \leq \frac{1}{N^{2 + \delta}} \label{eq: B prob}
\end{align}
where the last inequality uses that $|G_m| \leq N^{1-\delta}$, which implies $M \geq N^\delta$.

For $i,j \in \Iscr$, we let $F_{\{i\}}$ and $F_{\{i,j\}}$ denote sigma-fields such that
\begin{align*}
\mathbb{E}[\ \cdot \ | F_{\{i\}}] & = \mathbb{E}[\ \cdot \ | T_i, \bar{Z}_i = Y_{ki}] \\
\mathbb{E}[\ \cdot \ | F_{\{i,j\}}] & = \mathbb{E}[\ \cdot \ | T_i, T_j, \bar{Z}_i = Y_{ki}, \bar{Z}_j = Y_{kj}] 
\end{align*}
as previously used in the proof of Lemma \ref{le: hajek Vij}. Given $S$ equal to either $\{i\}$ or $\{i,j\}$, we can induce $F_S$, $Q$, and $\bar{S}$, and observe that 
\begin{align}
\mathbb{E}\left[ \frac{N}{\hat{N}_{Y_{ki}}}| F_S\right] & = \mathbb{E}\left[ 1 + \frac{N - \hat{N}_{Y_{ki}}}{N} + \frac{(N - \hat{N}_{Y_{ki}})^2}{N \hat{N}_{Y_{ki}}}| F_S \right] \label{eq: my linearization} \\
\nonumber & \leq 1 + \left|\mathbb{E}\left[\frac{|\bar{S}| - \hat{N}_{\bar{S}}}{N}|F_S\right]\right| + \left|\mathbb{E}\left[\frac{N_{Q} - \hat{N}_Q}{N}|F_S\right]\right|  +  \\ 
 & \hskip1cm \left|\mathbb{E}\left[\frac{(N - \hat{N}_{Y_{ki}})^2}{N \hat{N}_{Y_{ki}}} | F_S\right] \right| \label{eq: decompose N - Nhat} \\
 & \leq 1 + \frac{|\bar{S}|}{N\rho} + 0 + \left|\mathbb{E}\left[\frac{(N - \hat{N}_{Y_{ki}})^2}{N \hat{N}_{Y_{ki}}} | F_S, B = 0\right]\right| +\nonumber  \\
& \hskip1cm P(B=1|F_S)\cdot \left| \mathbb{E}\left[\frac{(N - \hat{N}_{Y_{ki}})^2}{N \hat{N}_{Y_{ki}}} | F_S, B = 1\right] \right| \label{eq: bound NS} \\
& \leq 1 + \frac{|\bar{S}|}{N\rho} + \frac{(|\bar{S}|/\rho + \epsilon_B)^2}{N(N_{Q} - \epsilon_B)} + \frac{1}{N^{2 + \delta}}\cdot\left(\frac{N}{\rho}\right)^2\cdot \frac{1}{N} \label{eq: substitute S and B} \\
& = 1 + O\left(\frac{1}{M}\right) + O\left(\frac{\frac{N^2}{M^2} + \frac{N^2}{M}\log M}{N^2}\right) + O\left(\frac{1}{N^{1+\delta}}\right) \label{eq: bound S and B} \\
& = 1 + O\left(\frac{\log M}{M}\right) \label{eq: bound S and B 2}
\end{align}
where \eqref{eq: my linearization} holds by the identity $\frac{a}{b} = 1 + \frac{a-b}{a} + \frac{(a-b)^2}{ab}$; \eqref{eq: decompose N - Nhat} decomposes $N - \hat{N}_{Y_{ki}} = |\bar{S}| - \hat{N}_{\bar{S}} + N_Q - \hat{N}_Q$; 
\eqref{eq: bound NS} uses the bound
\[ \left|\mathbb{E}\left[\frac{|\bar{S}| - \hat{N}_{|\bar{S}|}}{N} | F_S \right] \right| \leq \frac{|\bar{S}|}{N \rho},\]
which holds because  $0 \leq \hat{N}_{\bar{S}} \leq \frac{|\bar{S}|}{\rho}$, and uses 
\[ \mathbb{E}\left[ \frac{N_Q - \hat{N}_Q}{N}|F_S\right] = 0, \]
which holds because $\mathbb{E}[\hat{N}_Q|F_S] = N_Q$; 
\eqref{eq: substitute S and B} uses $P(B=1|F_S) \leq 1/N^{2+\delta}$ which holds by \eqref{eq: B prob}, and also uses
\begin{align} \label{eq: upper bound squared frac}
\frac{(N - \hat{N}_{Y_{ki}})^2}{N \hat{N}_{Y_{ki}}} \leq \begin{cases} \displaystyle{\frac{(|\bar{S}|/\rho + \epsilon_B)^2}{N (N_Q - \epsilon_B)}} & \textup{ if } B = 0 \\ \\ \displaystyle{\left( \frac{N}{\rho}\right)^2  \cdot \frac{1}{N}} & \textup{ if } B = 1 \end{cases}
\end{align}
which will be proven after the main result; \eqref{eq: bound S and B} holds because $|\bar{S}| \leq \frac{2CN}{M}$ since $|S| \leq 2$, and also by definition of $\epsilon_B$, and $N_Q \geq N - \frac{2CN}{M}$.  Letting $S=\{i\}$ implies \eqref{eq: hajek V1 1}, while letting $S = \{i,j\}$ implies \eqref{eq: hajek V1 2}.

To show \eqref{eq: hajek V1 3}, we let $S= \{i,j\}$ (inducing $F_S$, $Q$, and $\bar{S}$), and observe that
\begin{align}
\Var\left(\frac{N}{\hat{N}_{Y_{ki}}}|F_S\right) & = \Var\left[1 + \frac{|\bar{S}| - \hat{N}_{\bar{S}}}{N} + \frac{N_{Q} - \hat{N}_Q}{N} + \frac{(N - \hat{N}_{Y_{ki}})^2}{N \hat{N}_{Y_{ki}}}| F_S\right] \nonumber \\
& = 5 \Var\left[\frac{|\bar{S}| - \hat{N}_{\bar{S}}}{N} | F_S\right] + 5 \Var \left[ \frac{N_{Q} - \hat{N}_Q}{N}  | F_S \right] + 5 \Var\left[ \frac{(N - \hat{N}_{Y_{ki}})^2}{N \hat{N}_{Y_{ki}}}| F_S\right] \label{eq: var bound}\\
& \leq \left(\frac{|\bar{S}|}{N \rho}\right)^2 + \frac{Cd}{\rho^2 M} + \mathbb{E}\left[\Var\left(\frac{(N-\hat{N}_{Y_{ki}})^2}{N \hat{N}_{Y_{ki}}} | F_S, B\right) | F_S \right] +  \nonumber \\
& \hskip1cm \Var\left( \mathbb{E}\left[ \frac{(N - \hat{N}_{Y_{ki}})^2}{N \hat{N}_{Y_{ki}}}| F_S, B\right] | F_S\right) \label{eq: var bound 1} \\
& \leq \left(\frac{|\bar{S}|}{N \rho}\right)^2 + \frac{Cd}{\rho^2 M}  + \left(\frac{(|\bar{S}|/\rho+ \epsilon_B)^2}{N(N_{Q} - \epsilon_B)}\right)^2 + \left(\frac{N^2}{\rho^2N}\right)^2\cdot\frac{1}{N^{2 + \delta}} \label{eq: var bound 2}  \\
& = O\left(\frac{1}{M^2}\right) + O\left(\frac{1}{M}\right) + O\left(\left[\frac{\log M}{M}\right]^2\right) + O\left(\frac{1}{M}\right) \label{eq: var bound 3} \\
& = O\left(\frac{1}{M}\right) \label{eq: var bound 4}
\end{align}
where \eqref{eq: var bound} follows by the identity
\begin{align*}
\Var \sum_{i=1}^3 a_i & = \sum_{i=1}^3 \sum_{j=1}^3 \Cov(a_i, a_j)  \\
 & \leq \sum_{i=1}^3 \Var a_i + \sum_{i=1}^3 \sum_{j: j\neq i} (\Var a_i + \Var a_j),
\end{align*}
\eqref{eq: var bound 1}  uses the bounds
\begin{align}
\Var\left[\frac{|\bar{S}| - \hat{N}_{\bar{S}}}{N}|  F_S\right] & \leq \left( \frac{|\bar{S}|}{N \rho}\right)^2 \label{eq: var bound 1.1} \\
\Var \left[ \frac{N_{Q} - \hat{N}_Q}{N}  | F_S \right] & \leq \frac{C N^2 d^2}{\rho^2 M N^2} = \frac{Cd^2}{\rho^2 M},  \label{eq: var bound 1.2}
\end{align}
which will be proven after the main result, and also uses the law of total variance to decompose $\Var\left[ \frac{(N - \hat{N}_{Y_{ki}})^2}{N \hat{N}_{Y_{ki}}}\right]$;  
\eqref{eq: var bound 2} uses the bounds
\begin{align}
\label{eq: var bound 2.1}  \mathbb{E}\left[\Var\left(\frac{(N-\hat{N}_{Y_{ki}})^2}{N \hat{N}_{Y_{ki}}} | F_S, B\right) | F_S \right] & \leq \left(\frac{(|\bar{S}|/\rho + \epsilon_B)^2}{N(N_Q - \epsilon_B)}\right)^2 + \frac{1}{N^{2 + \delta}} \left(\frac{N^2}{\rho N}\right)^2\\
\label{eq: var bound 2.2} \Var\left( \mathbb{E}\left[ \frac{(N - \hat{N}_{Y_{ki}})^2}{N \hat{N}_{Y_{ki}}}| F_S, B\right] | F_S\right)  & \leq \frac{1}{N^{2 + \delta}} \left(\frac{N^2}{\rho N}\right)^2
 \end{align}
 which will be proven shortly; \eqref{eq: var bound 3} holds because $|\bar{S}| \leq \frac{2CM}{N}$, by definition of $\epsilon_B$, and $N_Q \geq N - \frac{2CM}{N}$.

It remains to show \eqref{eq: upper bound squared frac}, \eqref{eq: var bound 1.1} -- \eqref{eq: var bound 1.2}, and \eqref{eq: var bound 2.1} -- \eqref{eq: var bound 2.2}:
\begin{enumerate}
  \item To show \eqref{eq: upper bound squared frac}, we let $\textup{card}(\bar{S})$ denote $|\bar{S}|$, and observe that if $B=0$ then it holds that
\begin{align}
\nonumber \frac{(N - \hat{N}_{Y_{ki}})^2}{N \hat{N}_{Y_{ki}}} & = \frac{(|\bar{S}| - \hat{N}_{\bar{S}} + N_Q - \hat{N}_Q)^2}{N(\hat{N}_{\bar{S}} + \hat{N}_Q)} \\ 
\nonumber & \leq  \frac{(| \textup{card}(\bar{S})  - \hat{N}_{\bar{S}}| + |N_Q - \hat{N}_Q|)^2}{N (\hat{N}_{\bar{S}} + \hat{N}_Q)}  \\
\label{eq: substitute S and B 1.1} & \leq \frac{(|\bar{S}|/\rho + \epsilon_B)^2}{N (N_Q - \epsilon_B)}, 
\end{align}
where \eqref{eq: substitute S and B 1.1} uses that $|\hat{N}_Q - N_Q| < \epsilon_B$ when $B = 0$, and also that $\hat{N}_{\bar{S}} \geq 0$ to lower bound the denominator, and also that $\hat{N}_{\bar{S}} \leq |\bar{S}|/\rho$ which implies $| \textup{card}(\bar{S}) - \hat{N}_{\bar{S}}| \leq |\bar{S}|/\rho$.

On the other hand, if $B=1$, it holds conditional on $F_S$
\begin{align}
\label{eq: substitute S and B 2.1} \frac{(N - \hat{N}_{Y_{ki}})^2}{N \hat{N}_{Y_{ki}}} & \leq \left( \frac{N}{\rho}\right)^2  \cdot \frac{1}{N}
\end{align}
where the numerator uses $\hat{N}_{Y_{ki}} \leq N/\rho$ and the denominator uses that $\hat{N}_{Y_{ki}} \geq 1$ conditional on $F_S$. Combining \eqref{eq: substitute S and B 1.1} and \eqref{eq: substitute S and B 2.1} proves \eqref{eq: upper bound squared frac}.

\item To show \eqref{eq: var bound 1.1} we observe that $0 \leq \hat{N}_{\bar{S}} \leq \frac{|\bar{S}|}{\rho}$.

\item To show \eqref{eq: var bound 1.2} we observe that $\hat{N}_Q/N$ given by \eqref{eq: hat NQ and hat NS} is a sum of $\leq N$ terms whoses magnitudes are bounded by $(N\rho)^{-1}$ and whose dependency graph has at most $N \cdot \frac{d^2CN}{M}$ edges, where we have used \eqref{eq: d2 bound}.

\item To show \eqref{eq: var bound 2.1}, we observe that
 \begin{align}
  \nonumber & \mathbb{E}\left[\Var\left(\frac{(N-\hat{N}_{Y_{ki}})^2}{N \hat{N}_{Y_{ki}}} | F_S, B\right) | F_S \right] \\
\nonumber &  = P(B=0|F_S) \Var\left(\frac{(N-\hat{N}_{Y_{ki}})^2}{N \hat{N}_{Y_{ki}}} | F_S, B=0\right) + \\
\nonumber & \hskip1cm P(B=1|F_S)\Var\left(\frac{(N-\hat{N}_{Y_{ki}})^2}{N \hat{N}_{Y_{ki}}} | F_S, B=1\right) \\
& \leq \left(\frac{(|\bar{S}|/\rho  + \epsilon_B)^2}{N(N_Q - \epsilon_B)}\right)^2 + \frac{1}{N^{2 + \delta}}\left(\frac{N^2}{\rho N}\right)^2  \label{eq: var bound 2.1.1}
\end{align}
where \eqref{eq: var bound 2.1.1} holds by \eqref{eq: substitute S and B 1.1}, \eqref{eq: substitute S and B 2.1}, and \eqref{eq: B prob}.

\item To show \eqref{eq: var bound 2.2}, we observe that 
\begin{align}
\nonumber & \Var\left( \mathbb{E}\left[ \frac{(N - \hat{N}_{Y_{ki}})^2}{N \hat{N}_{Y_{ki}}}| F_S, B\right] | F_S\right) \\
& \leq  P(B=1|F_S)P(B=0|F_S) \cdot  \left( 2 \max \left\{ \frac{(|\bar{S}|/\rho  + \epsilon_B)^2}{N(N_Q - \epsilon_B)}\, , \, \frac{N}{\rho}\right\} \right)^2 \nonumber \\
\nonumber & \leq \frac{1}{N^{2+\delta}} \left(\frac{N}{\rho}\right)^2
\end{align}
which uses \eqref{eq: substitute S and B 1.1}, \eqref{eq: substitute S and B 2.1}, \eqref{eq: B prob},  and that the variance of $a + (a-b)U$, where $U \sim \textup{Bernoulli}(p)$ and $a,b$ are scalars, equals $p(1-p)(a-b)^2$
\end{enumerate}

This shows \eqref{eq: hajek V1 1}, \eqref{eq: hajek V1 2}, and \eqref{eq: hajek V1 3}, as required by the Lemma.

\end{proof}

\begin{proof}[Proof of Lemma \ref{le: Gamma overlap}]

To prove \eqref{eq: d2 bound}, we observe that for any $i \in [N]$, it holds that
\begin{align}
\label{eq: d2 1} \sum_{j=1}^N 1\{|\Gamma_i \cup \Gamma_j| \neq 0\} & = \sum_{j=1}^N 1\{|\eta_{g(i)} \cup \eta_{g(j)}| \neq 0\}  \\
\label{eq: d2 2} & = \sum_{j=1}^N \sum_{m=1}^M 1\{g(j)=m\} \cdot 1\{|\eta_{g(i)} \cup \eta_{m}| \neq 0\} \\
\label{eq: d2 3} & = \sum_{m=1}^M  1\{|\eta_{g(i)} \cup \eta_{m}| \neq 0\}  \sum_{j=1}^N 1\{g(j)=m\} \\
\label{eq: d2 4} & \leq \sum_{m=1}^M  1\{|\eta_{g(i)} \cup \eta_{m}| \neq 0\} \cdot \frac{CN}{M} \\
\label{eq: d2 5} & \leq d^2\frac{CN}{M}
\end{align}
where \eqref{eq: d2 1} holds by \eqref{eq: design neighborhood}, \eqref{eq: d2 2} and \eqref{eq: d2 3} are algebraic manipulations, \eqref{eq: d2 4} holds by \eqref{eq: group size}, and \eqref{eq: d2 5} holds because
\begin{align}\label{eq: eta dependency}
\sum_{m=1}^M 1\{ | \eta_{g(i)} \cap \eta_m| \neq 0\} \leq d^2
\end{align}
which holds because a group neighborhood $\eta_{g(i)}$ contains at most $d$ groups by \eqref{eq: eta size}, each of which are members of at most $d$  group neighborhoods by \eqref{eq: eta overlap}.

To prove \eqref{eq: d bound}, we observe that for any $m \in [M]$, it holds that
\begin{align}
\label{eq: d 1}\sum_{i=1}^N 1\{G_m \subseteq \Gamma_i\} & = \sum_{i=1}^N 1\{m \in \eta_{g(i)}\} \\
\label{eq: d 2} & = \sum_{i=1}^N \sum_{\ell=1}^M 1\{g(i) = \ell\} \cdot 1\{m \in \eta_{\ell}\} \\
\label{eq: d 3} & = \sum_{\ell=1}^M 1\{m \in \eta_\ell\} \cdot \sum_{i=1}^N 1\{g(i) = \ell\} \\
\label{eq: d 4} & \leq \sum_{\ell=1}^N 1\{m \in \eta_\ell\} \cdot \frac{CN}{M} \\
\label{eq: d 5} & \leq d\frac{CN}{M}
\end{align}
where \eqref{eq: d 1} holds by \eqref{eq: design neighborhood}, \eqref{eq: d 2} and \eqref{eq: d 3} are algebraic manipulations, \eqref{eq: d 4} holds by \eqref{eq: group size}, and \eqref{eq: d 5} holds by \eqref{eq: eta size}.
\end{proof} 

\subsection{Proof of Theorem \ref{th: hajek coverage}} \label{sec: coverage}

To prove Theorem \ref{th: hajek coverage}, we will use Lemmas \ref{le: Phi coverage}-\ref{le: fatou}, and also Theorem \ref{th: ross} which is a local dependence central limit theorem. 

Lemma \ref{le: Phi coverage} gives a coverage result for $\hat{\tau}_k$.
\begin{lemma} \label{le: Phi coverage}
Let Assumption \ref{as: design} hold, and let $\widetilde{V}_k(\phi)$ for $\phi \in \{0,1\}^N$ and $k=1,2$ be given by \eqref{eq: V tilde}. It holds for $k=1,2$ that
\[ P\left( \hat{\tau}_k^\Haj - \tau \geq z_{1-\frac{\alpha}{2}} \sqrt{\widetilde{V}_k} \right)\leq \frac{\alpha}{2} + o(1) \]
\end{lemma}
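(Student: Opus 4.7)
The idea is to split into two regimes based on the true variance $\Var \hat{\Phi}_k$ (evaluated at the true $\phi^*$). When $\Var \hat{\Phi}_k$ exceeds the threshold $N^{2/3+0.01}$ baked into $\widetilde{V}_k$, a Berry-Esseen-type CLT for locally dependent sums, combined with the concentration of $V_k$ from Theorem \ref{th: hajek variance estimation}, yields the desired asymptotic $\alpha/2$ coverage. When $\Var \hat{\Phi}_k$ is below that threshold, the CLT may fail, but the lower bound built into $\widetilde{V}_k$ lets Chebyshev's inequality do the work instead. The preparation consists of three ingredients. First, by \eqref{eq: P(Xi = Yki)} (and its general-case analog appearing in \eqref{eq: Phi general}), $\hat{\Phi}_k$ is exactly unbiased for $\Phi$. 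Second, under Assumption \ref{as: design}, each summand of $\hat{\Phi}_k$ is uniformly bounded (by $O(1/p_{\min})$) and measurable with respect to $X_{\Gamma_i}$ with $|\Gamma_i| \leq d_{\max}$, so the sum is a bounded, locally dependent sum with dependency-graph degree at most $d_{\max}^2$; a standard Berry-Esseen bound for locally dependent sums (Chen and Shao 2004, Ross 2011) then delivers
\[ \sup_x \left|P\!\left(\frac{\hat{\Phi}_k - \Phi}{\sqrt{\Var \hat{\Phi}_k}} \leq x\right) - \Phi_{\mathcal{N}}(x)\right| = O\!\left(\frac{N}{(\Var \hat{\Phi}_k)^{3/2}}\right). \]
Third, the argument of Theorem \ref{th: hajek variance estimation} carries over to this non-H\'ajek setting and gives $\mathbb{E}V_k \geq \Var \hat{\Phi}_k$ together with $V_k = \mathbb{E}V_k + O_P(N^{1/2+\delta})$ for every $\delta > 0$.

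\textbf{Case A} ($\Var \hat{\Phi}_k > N^{2/3+0.01}$). Fixing $\delta < 1/12$, the concentration error $O_P(N^{1/2+\delta})$ is $o_P(\Var \hat{\Phi}_k)$, so with probability tending to one $\widetilde{V}_k = V_k \geq \Var \hat{\Phi}_k(1-o_P(1))$. Because the Berry-Esseen rate above is also $o(1)$ in this regime,
\[ P\!\left(\Phi - \hat{\Phi}_k \geq z_{1-\alpha/2}\sqrt{\widetilde{V}_k}\right) \leq P\!\left(\frac{\Phi - \hat{\Phi}_k}{\sqrt{\Var \hat{\Phi}_k}} \geq z_{1-\alpha/2}(1-o_P(1))\right) + o(1) \longrightarrow \frac{\alpha}{2}. \]
Note that any extra slack arising from $\mathbb{E}V_k$ strictly exceeding $\Var \hat{\Phi}_k$ only enlarges $\widetilde{V}_k$, which preserves (and in fact strengthens) coverage.

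\textbf{Case B} ($\Var \hat{\Phi}_k \leq N^{2/3+0.01}$). Here we do not use the CLT. The definition of $\widetilde{V}_k$ enforces, deterministically,
\[ z_{1-\alpha/2}^2 \, \widetilde{V}_k \geq \frac{1.01\, N^{2/3+0.01}}{\alpha/2}, \]
so Chebyshev gives
\[ P\!\left(\Phi - \hat{\Phi}_k \geq z_{1-\alpha/2}\sqrt{\widetilde{V}_k}\right) \leq \frac{\Var \hat{\Phi}_k}{z_{1-\alpha/2}^2 \,\widetilde{V}_k} \leq \frac{N^{2/3+0.01}\cdot(\alpha/2)}{1.01\, N^{2/3+0.01}} = \frac{\alpha/2}{1.01} < \frac{\alpha}{2}. \]
The main obstacle is invoking the Berry-Esseen bound with a rate strong enough for Case A; the $O(N/(\Var \hat{\Phi}_k)^{3/2})$ rate derived from bounded-summand CLTs for locally dependent variables requires only $\Var \hat{\Phi}_k \gg N^{2/3}$, which the cutoff $N^{2/3+0.01}$ beats by a polynomial margin. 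A secondary subtlety is calibration: the threshold $N^{2/3+0.01}$ must simultaneously dominate the $O_P(N^{1/2+\delta})$ variance-estimator error (so that Case A's CI is not inflated relative to the target) and be of the form $o(N)$ (so that the degenerate-case inflation of $\widetilde{V}_k$ is asymptotically negligible in the downstream $\tau$-level interval of Theorem \ref{th: hajek coverage}); the exponent $2/3+0.01$ sits in the narrow window where both requirements hold.
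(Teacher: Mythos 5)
Your overall architecture matches the paper's: split on whether the variance clears the $N^{2/3+0.01}$ threshold, use a local-dependence CLT plus the concentration and conservativeness of $V_k$ in the large-variance regime, and fall back on Chebyshev against the deterministic floor in $\widetilde{V}_k$ otherwise. The threshold calibration you describe is also the right one.

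However, there is a genuine gap: you treat $\hat{\Phi}_k$ as an exactly unbiased, bounded, locally dependent sum, and you even state that the variance argument ``carries over to this non-H\'ajek setting.'' But the estimator to which this lemma is applied in Theorem \ref{th: hajek coverage} is the H\'ajek-normalized $\hat{\Phi}_k^\Haj$, whose summands carry the random factors $N/\hat{N}_0$ and $N/\hat{N}_1$. These normalizers depend on the entire treatment vector, so (i) $\hat{\Phi}_k^\Haj$ is not unbiased for $\Phi$, (ii) its summands are not measurable with respect to $X_{\Gamma_i}$ alone, so the Chen--Shao/Ross Berry--Esseen bound for dependency graphs does not apply to it directly, and (iii) your Case B Chebyshev step needs a centered variable, which $\hat{\Phi}_k^\Haj - \Phi$ is not. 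The paper resolves all three issues with a linearization step it has already set up in the proof of Theorem \ref{th: hajek consistency}: it writes $\hat{\Phi}_k^\Haj - \Phi = \sum_i h_i + O_P(1)$, where the $h_i$ are mean-zero, bounded, and depend only on $X_{\Gamma_i}$, applies the CLT and Chebyshev to $\sum_i h_i$ with $\sigma^2 = \Var\sum_i h_i$ as the case-splitting quantity, and controls the remainder. A second, subtler consequence of the normalization that your proposal skips is the mismatch between what $V_k$ bounds ($\mathbb{E}V_k \geq \Var\hat{\Phi}_k^\Haj$, the exact variance of the ratio estimator) and the quantity $\sigma^2$ that actually normalizes the CLT; the paper bridges this with the Fatou-type argument of Lemma \ref{le: fatou}, deducing $\liminf \Var[(\hat{\Phi}_k^\Haj - \Phi)/\sigma] \geq 1$ from the convergence in distribution, and hence $\liminf V_k/\sigma^2 \geq 1$. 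Without the linearization and this bridging step, your Case A inequality chain does not go through for the estimator the lemma is actually about.
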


Lemmas \ref{le: hajek CLT} and \ref{le: fatou} are used to prove Lemma \ref{le: Phi coverage}. Lemma \ref{le: hajek CLT} gives a central limit theorem for a linearized version of $\hat{\tau}_k^\Haj$. Lemma \ref{le: fatou} is a corollary of Fatou's lemma, and will be used to show that $\hat{V}_k$ is conservative for the variance of the linearized version of $\hat{\tau}_k^\Haj$. It appears as Exercise 3.2.4 in \citeappendix[5th edition]{durrett2019probability}.

\begin{lemma} \label{le: hajek CLT}
Let Assumption \ref{as: design} hold. For $k \in [2]$, let $\mu_1$ and $\mu_0$ denote
\begin{align*}
\mu_1 &= \sum_{i \in \Iscr} \mathbb{E}[\tilde{f}_{ki}(T_i)] , & \mu_0 &= \sum_{i \in \Iscr} \mathbb{E}[1-\tilde{f}_{ki}(T_i)]. 
\end{align*}
where $\tilde{f}_{ki}$ is given by \eqref{eq: f_ki}. Let $h_i$ for $i \in [N]$ denote
\begin{align*}
h_i = \frac{1\{\bar{Z}_i=1\}}{P(\bar{Z}_i = 1| T_i)}Y_{ki} \phi_i^* - \mu_1 + \frac{1\{\bar{Z}_i=0\}}{P(\bar{Z}_i = 0| T_i)}(1-Y_{ki}) \phi_i^* - \mu_0
\end{align*}
where $\phi^*$ is given by \eqref{eq: phi star}. Let $\sigma^2 = \Var \sum_i h_i$, and suppose that $\sigma^2 \geq N^2 / M^{4/3 - \epsilon}$ for any $\epsilon > 0$. Then $\sum_i h_i / \sigma$ converges to a standard normal.
\end{lemma}

\begin{lemma}\citeappendix[Exercise 3.2.4]{durrett2019probability} \label{le: fatou}
Let $X_n$ denote a sequence of random variables that converges in distribution to $X_{\infty}$, and let $g$ denote a non-negative continuous function. Then $\liminf\limits_{n \rightarrow \infty} \mathbb{E}g(X_n) \geq \mathbb{E}g(X_{\infty})$
\end{lemma}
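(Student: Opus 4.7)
The plan is to reduce convergence in distribution to almost sure convergence via Skorokhod's representation theorem, then apply the classical non-negative Fatou's lemma. Since $X_n$ converges in distribution to $X_\infty$ on $\mathbb{R}$, Skorokhod's theorem yields random variables $Y_n$ and $Y_\infty$ on a common probability space with $Y_n \stackrel{d}{=} X_n$ for each $n$, $Y_\infty \stackrel{d}{=} X_\infty$, and $Y_n \to Y_\infty$ almost surely. Continuity of $g$ then gives $g(Y_n) \to g(Y_\infty)$ almost surely, and since $g \geq 0$, Fatou's lemma yields $\liminf_{n \to \infty} \mathbb{E}[g(Y_n)] \geq \mathbb{E}[g(Y_\infty)]$. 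Passing back via equality in distribution, $\mathbb{E}[g(Y_n)] = \mathbb{E}[g(X_n)]$ and $\mathbb{E}[g(Y_\infty)] = \mathbb{E}[g(X_\infty)]$, which proves the claim.

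An alternative route avoiding Skorokhod's theorem is to apply the Portmanteau theorem to the truncation $g_M = g \wedge M$, which is bounded and continuous. Portmanteau then gives $\liminf_n \mathbb{E}[g_M(X_n)] \geq \mathbb{E}[g_M(X_\infty)]$; since $\mathbb{E}[g(X_n)] \geq \mathbb{E}[g_M(X_n)]$ for all $n$ the bound propagates, and letting $M \to \infty$ with monotone convergence on the right side recovers $\mathbb{E}[g(X_\infty)]$. The only (minor) obstacle is verifying that Skorokhod's representation applies, which requires a separable metric range space --- automatic here for real-valued variables. Integrability of $g(X_\infty)$ need not be assumed separately, since the inequality is trivially satisfied (and informative) when $\mathbb{E}[g(X_\infty)] = +\infty$, as it would then simply force $\liminf_n \mathbb{E}[g(X_n)] = +\infty$.
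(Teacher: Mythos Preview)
Your proof is correct. Both routes you outline---Skorokhod representation followed by classical Fatou, and the truncation-plus-Portmanteau argument---are standard and complete for this result. The paper does not supply its own proof of this lemma; it simply cites it as Exercise~3.2.4 in Durrett's \emph{Probability: Theory and Examples} and uses the statement as a black box in the proof of Lemma~\ref{le: Phi coverage}. So there is nothing to compare against beyond noting that your argument is one of the intended textbook solutions.

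One small remark: in your Portmanteau route you write that Portmanteau gives $\liminf_n \mathbb{E}[g_M(X_n)] \geq \mathbb{E}[g_M(X_\infty)]$. Since $g_M$ is bounded and continuous, you in fact get full convergence $\mathbb{E}[g_M(X_n)] \to \mathbb{E}[g_M(X_\infty)]$, which is stronger than what you need but worth stating cleanly. Otherwise the argument is fine as written.
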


Theorem \ref{th: ross} is taken from \citeappendix{ross2011fundamentals}, and gives a bounded dependence central limit theorem. It will be used to prove the asymptotic normality result of Lemma \ref{le: hajek CLT}.

\begin{theorem}\citeappendix[Thm 3.6]{ross2011fundamentals} \label{th: ross}
Let $U_1,\ldots,U_n$ be random variables such that $\mathbb{E}[U_i^4] \leq \infty$, $\mathbb{E}U_i = 0$, with variance denoted by $\sigma^2 = \Var \sum_{i=1}^N U_i$, with dependency neighborhoods $N_1,\ldots,N_n$, which are subsets of $[n]$ (which need not be disjoint) such that $U_j$ and $U_i$ are independent if $U_j \notin N_i$. Let $D = \max_i |N_i|$. Let $W = \sum_{i=1}^N U_i/\sigma$. Then for $Z$ denoting a $N(0,1)$ random variable, and $d_W$ denoting Wasserstein metric, it holds that
\[ d_W(W, Z) \leq \frac{D^2}{\sigma^3}\sum_{i=1}^n \mathbb{E}\left[ |U_i|^3\right] + \frac{\sqrt{28}D^{3/2}}{\sqrt{\pi}\cdot\sigma^2} \sqrt{\sum_{i=1}^n \mathbb{E}\left[U_i^4\right]}\]
\end{theorem}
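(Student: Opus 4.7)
The plan is to apply Stein's method. The starting point is the standard Wasserstein representation
\[ d_W(W, Z) = \sup_{h} |\E h(W) - \E h(Z)|, \]
with the supremum over $1$-Lipschitz $h$. For each such $h$, solve Stein's equation $g'(w) - w g(w) = h(w) - \E h(Z)$ and invoke the regularity bounds $\|g'\|_\infty \le \sqrt{2/\pi}$ and $\|g''\|_\infty \le 2$ on the solution. The task then reduces to bounding $|\E[g'(W) - W g(W)]|$ uniformly over admissible $g$.

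To exploit the dependency-neighborhood structure, let $S_i = \sum_{j \in N_i} U_j$ and $W^{(i)} = W - S_i/\sigma$, so that $W^{(i)}$ is independent of $U_i$. Since $\E U_i = 0$, the identity $\E[U_i g(W^{(i)})] = 0$ gives
\[ \E[W g(W)] = \sigma^{-1} \sum_{i=1}^n \E[U_i(g(W) - g(W^{(i)}))]. \]
A second-order Taylor expansion of $g(W)$ about $W^{(i)}$, followed by a first-order expansion to replace $g'(W^{(i)})$ by $g'(W)$, splits the right-hand side into a principal term equal to $\sigma^{-2}\sum_i\sum_{j \in N_i} \E[U_i U_j\, g'(W)]$ plus two remainder terms. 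Since $U_i$ is independent of $U_j$ for $j \notin N_i$, the variance identity $\sigma^2 = \sum_i\sum_{j \in N_i}\E[U_i U_j]$ collapses the principal term into $\E[g'(W)]$, up to a covariance involving $g'(W)$ and the random quantity $1 - \sigma^{-2}\sum_i\sum_{j \in N_i} U_i U_j$.

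The remainders are where the constants come from. The quadratic Taylor remainder is $\tfrac{1}{2}\|g''\|_\infty \sigma^{-3}\sum_i \E[|U_i| S_i^2]$; writing $S_i$ as a sum of at most $D = \max_i |N_i|$ terms and applying Holder's inequality with exponents $(3,3,3)$ bounds each mixed moment $\E|U_i U_j U_k|$ by an average of $\E|U_\ell|^3$, yielding the $(D^3/\sigma^3)\sum_i \E|U_i|^3$ contribution. The covariance-type remainder from the $g'(W^{(i)}) \to g'(W)$ replacement is controlled by Cauchy--Schwarz in terms of $\sqrt{\sum_i \E U_i^4}$, producing the $(D^{3/2}/\sigma^2)\sqrt{\sum_i \E U_i^4}$ contribution. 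Tracking $\|g''\|_\infty \le 2$ then delivers the stated coefficients.

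The main obstacle I anticipate is the bookkeeping around the principal term: ensuring it really collapses to $\E[g'(W)]$ rather than to $\E[(1+o(1))\, g'(W)]$ with additional covariance terms that then must be re-estimated. Either the Taylor expansions must be organized so the identity $\sum_i\sum_{j \in N_i}\E[U_i U_j] = \sigma^2$ enters cleanly, or a $\Cov(\cdot\,,\, g'(W))$ term must be subtracted and bounded separately via $\|g''\|_\infty \le 2$ and a second Cauchy--Schwarz. The specific $\sqrt{28/\pi}$ constant in the fourth-moment term is the fingerprint of this step and its interaction with the bound $\|g'\|_\infty \le \sqrt{2/\pi}$ on the Stein solution.
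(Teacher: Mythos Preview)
The paper does not prove this statement: Theorem~\ref{th: ross} is quoted verbatim from \cite[Thm~3.6]{ross2011fundamentals} as an auxiliary tool, and the paper only \emph{applies} it (inside the proof of Lemma~\ref{le: hajek CLT}) rather than deriving it. So there is no ``paper's own proof'' to compare your proposal against.

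That said, your sketch is a faithful outline of the Stein's-method argument that actually appears in Ross's survey: solve Stein's equation for a $1$-Lipschitz test function, write $W^{(i)} = W - \sigma^{-1}\sum_{j\in N_i} U_j$ so that $U_i \indep W^{(i)}$, Taylor-expand, and bound the two remainders using H\"older (for the third-moment term) and Cauchy--Schwarz on $\Var\big(\sigma^{-2}\sum_i\sum_{j\in N_i} U_i U_j\big)$ (for the fourth-moment term). Your identification of the ``main obstacle''---that the principal term must collapse exactly to $\E[g'(W)]$ and the leftover covariance with $g'(W)$ must be bounded separately via $\|g'\|_\infty \le \sqrt{2/\pi}$---is precisely where the $\sqrt{28/\pi}$ constant arises in Ross's proof. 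If you were to carry this through, you would recover Ross's argument essentially line-for-line; there is nothing in the present paper to contrast it with.
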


\paragraph{Proof of Theorem \ref{th: hajek coverage}} \label{sec: proofs 5}

\begin{proof}[Proof of Theorem \ref{th: hajek coverage}]

It holds that
\begin{align}
LB_{1-\alpha} - \tau &=  \max \left\{ \min_{\phi \in \{0,1\}} \hat{\tau}_1(\phi) - z_{1-\frac{\alpha}{2}} \sqrt{\widetilde{V}_1(\phi)} \ , \ \min_{\phi \in \{0,1\}} \hat{\tau}_2(\phi) - z_{1-\frac{\alpha}{2}} \sqrt{\widetilde{V}_2(\phi)} \right\} - \tau \nonumber \\
 & \leq \max \left\{ \hat{\tau}_1 - z_{1-\frac{\alpha}{2}} \sqrt{\widetilde{V}_1} \ , \
 \hat{\tau}_2 - z_{1-\frac{\alpha}{2}} \sqrt{\widetilde{V}_2} \right\} - \tau \nonumber \\
 & = \max \left\{\hat{\tau}_1  - z_{1-\frac{\alpha}{2}} \sqrt{\widetilde{V}_1} - \tau \ , \
 \hat{\tau}_2 - z_{1-\frac{\alpha}{2}} \sqrt{\widetilde{V}_2} - \tau \right\} \label{eq: coverage expression}
\end{align}
Let $\mathcal{E}_k$ denote the event
\[ \mathcal{E}_k = \left\{ \hat{\tau}_k - z_{1-\frac{\alpha}{2}}\sqrt{\widetilde{V}_k} - \tau \geq 0\right\}. \]
By \eqref{eq: coverage expression}, it follows that loss of coverage occurs if either $\mathcal{E}_1$ or $\mathcal{E}_2$ occurs.
By Lemma \ref{le: Phi coverage}, the probability of $\mathcal{E}_k$ is bounded by $\alpha/2 + o(1)$, and hence by union bound it holds that 
\begin{align} \label{eq: coverage bad event prob}
 P(\mathcal{E}_1 \cup \mathcal{E}_2) \leq \alpha + o(1)
\end{align}
Combining \eqref{eq: coverage expression} and \eqref{eq: coverage bad event prob} yields that
\begin{align}
P(LB_{1-\alpha} - \tau > 0) \leq \alpha + o(1)
\end{align}
proving that $LB_{1-\alpha} \leq \tau$ with probability converging to at least $1-\alpha$.
\end{proof}

\paragraph{Proof of Lemmas \ref{le: Phi coverage} and \ref{le: hajek CLT}}

\begin{proof}[Proof of Lemma \ref{le: Phi coverage}]
Let $h_i$ be defined as in Lemma \ref{le: hajek CLT}, and let $\sigma^2 = \Var \sum_i h_i$. Divide the sequence of experiments into two subsequences, depending on whether $\sigma^2 \geq N^2 / M^{4/3-\epsilon}$.

For the subsequence that satisfies $\sigma^2 \geq N^2 / M^{4/3-\epsilon}$, it holds by linearization arguments made in the proof of Theorem \ref{th: hajek consistency} that
\begin{align}
 \frac{\hat{\tau}_k^{\Haj} - \tau}{\sigma} & = \frac{\nabla h(\mathbb{E}\xi)^T(\xi - \mathbb{E}\xi)}{\sigma} + O_P\left(\frac{1}{\sigma}\right) \label{eq: coverage linearization 1}\\
&=  \frac{\sum_{i=1}^N h_i}{\sigma} + O_P\left(\frac{1}{\sigma}\right) \label{eq: coverage linearization 2}
 \end{align}
where \eqref{eq: coverage linearization 1} holds by \eqref{eq: hajek linearization} and \eqref{eq: remainder consistency}, while \eqref{eq: coverage linearization 2} holds by \eqref{eq: linearization terms}-\eqref{eq: epsilon} and \eqref{eq: nabla h}.  By Lemma \ref{le: hajek CLT} and \eqref{eq: coverage linearization 2} it follows that 
\begin{align*} 
\frac{\hat{\tau}_k^{\Haj} - \tau}{\sigma} \rightarrow N(0,1)
\end{align*}
from which it follows by Lemma \ref{le: fatou} that
\begin{align} \label{eq: fatou usage 1}
\liminf\limits_{N \rightarrow \infty} \Var \left[ \frac{\hat{\tau}_k^\Haj - \tau}{\sigma} \right]  \geq 1
\end{align}
Since by Theorem \ref{th: hajek variance estimation} it holds that 
\begin{align}
 \frac{\hat{V}_k}{\sigma^2} & = \frac{\mathbb{E}\hat{V}_k}{\sigma^2} + \frac{1}{\sigma^2} O_P\left(\frac{N^2 \log M}{M^{3/2}}\right)  \nonumber \\
 & \geq \Var \left( \frac{\hat{\tau}_k^\Haj}{\sigma}\right) + o_P\left(1\right) \label{eq: fatou usage 2}
\end{align}
Combining \eqref{eq: fatou usage 1} and \eqref{eq: fatou usage 2} yields
\begin{align*} 
\liminf\limits_{N \rightarrow \infty} \frac{\hat{V}_k}{\sigma^2} \geq 1 
\end{align*}
which implies that
\begin{align} \label{eq: CLT subsequence}
P\left( \hat{\tau}_k^\Haj - \tau \geq z_{1-\frac{\alpha}{2}} \sqrt{\hat{V}_k}\right) \leq \frac{\alpha}{2} + o(1)
\end{align}
for the subsequence of experiments that satisfies $\sigma^2 \geq N^2 / M^{4/3-\epsilon}$.

For the other subsequence, Chebychev inequality implies that
\begin{align*} 
P\left(\left| \sum_i h_i \right| \geq \frac{N}{M^{2/3-\epsilon}\sqrt{\alpha/2}}\right) \leq \alpha/2 
\end{align*}
where we have used that $\frac{N}{M^{2/3-\epsilon}} = \omega(\sigma)$ for $\epsilon > 0$. In turn, this implies that
\begin{align} \label{eq: using cheby 2}
P\left(\hat{\tau}_k - \tau \geq \frac{N}{M^{2/3-\epsilon}\sqrt{\alpha/2}} \right) \leq 1-\alpha + o(1)
\end{align}
where we have used that $\hat{\tau}_k - \tau = \sum_i h_i + O_P(1)$, which holds by  \eqref{eq: coverage linearization 1}. Applying \eqref{eq: using cheby 2} yields
\begin{align} \label{eq: using cheby 3}
P\left(\hat{\tau}_k - \tau \geq z_{1-\frac{\alpha}{2}} \sqrt{\Delta} \right) \leq \alpha/2 + o(1)
\end{align} 
where $\Delta = N/ (M^{2/3-\epsilon} z_{1-\frac{\alpha}{2}}^2 \cdot \alpha/2 )$. Combining \eqref{eq: CLT subsequence} and \eqref{eq: using cheby 3} yields that for all experiments, it holds that
\begin{align*}
P\left(\hat{\tau}_k - \tau \geq z_{1-\frac{\alpha}{2}} \sqrt{\max\left(\hat{V}_k, \Delta\right)} \right) \leq \alpha/2 + o(1),
\end{align*}
As $\widetilde{V}_k$ can be seen to equal $\max\left(\hat{V}_k, \Delta\right)$, this proves the lemma.

\end{proof}

\begin{proof}[Proof of Lemma \ref{le: hajek CLT}]
Let $U_m$ for $m \in [M]$ be given by
\[ U_m = \sum_{i \in G_m} h_i\]
Since $|h_i| \leq \frac{1}{\rho}$ and $|G_m| \leq CN/M$, it follows that $|U_m| \leq \frac{CN}{M \rho}$. Additionally, it can be seen that the dependency neighborhoods of $U_1,\ldots,U_M$ have maximum size $d^2$, since each $U_m$ depends on $X$ only through $\{X_{G_\ell}: \ell \in \eta_m\}$. As a result, applying Theorem \ref{th: ross} yields
\begin{align} 
\nonumber d_W\left(\frac{\sum_i h_i}{\sigma}\ ,\ N(0,1) \right) & \leq \frac{d^4}{\sigma^3}\cdot M \left(\frac{CN}{M\rho}\right)^3 + 
\frac{\sqrt{28}}{\sqrt{\pi}}\cdot \frac{d^3}{\sigma^2}\sqrt{M\left(\frac{CN}{M\rho}\right)^4} \\
& = O\left(\frac{N^3}{\sigma^3M^2}\right) + O\left(\frac{N^2}{M^{3/2}\sigma^2}\right) \\
& = o(1)\label{eq: hajek ross 2}
\end{align}
where \eqref{eq: hajek ross 2} uses the assumption of the lemma that $\sigma^2 \geq \frac{N}{M^{2/3-\epsilon}}$ for  $\epsilon > 0$.
\end{proof}

 \bibliographystyleappendix{apalike}
 {\footnotesize
 \bibliographyappendix{bibfile}
}

\end{document}